
%
\documentclass[journal]{IEEEtran}

\usepackage{bbm}
\usepackage{color}
\usepackage{mathrsfs}
\usepackage{pifont}
\usepackage{amsfonts}
\usepackage{citesort}
\usepackage{graphicx}

%
\ifCLASSINFOpdf
\else
\fi

\usepackage[cmex10]{amsmath}

\newtheorem{theorem}{Theorem}
\newtheorem{definition}{Definition}
\newtheorem{algorithm}{Algorithm}
\newtheorem{example}{Example}

\newtheorem{assumption}{Assumption}
\newtheorem{proposition}{Proposition}

\setcounter{page}{1}
\hyphenation{op-tical net-works semi-conduc-tor}

\begin{document}
%
\title{Upper Bounds on the Capacities of Noncontrollable Finite-State
Channels with/without Feedback}

%
%
%

\author{Xiujie~Huang,~\IEEEmembership{Student~Member,~IEEE,}
        Aleksandar~Kav\v{c}i\'{c},~\IEEEmembership{Senior~Member,~IEEE,}
        and~Xiao~Ma,~\IEEEmembership{Member,~IEEE}
\thanks{This work was supported by International Program of Project 985, Sun Yat-sen University,
and by the National Basic Research Program of China (973 Program, No.~2012CB316100),
and by the NSFC~(No.~61172082) and the NSFC and Guangdong Province~(No.~U0635003).
This work was also supported by the NSF, grant CCF10-18984.
This work was performed while X.~Huang was visiting University of Hawaii.}%
\thanks{X.~Huang and X.~Ma are with the Department of Electronic and Communication Engineering,
Sun Yat-sen University, Guangzhou, GD 510006 China~(email:~huangxj5@mail2.sysu.edu.cn,
maxiao@mail.sysu.edu.cn).}
\thanks{A.~Kav\v{c}i\'{c} is with the Department of
Electrical Engineering, University of Hawaii, Honolulu,
HI 96822 USA.}%
\thanks{Manuscript received March 27, 2009; first revision May 22, 2011; accepted February 24, 2012; final version April 7, 2012.}
}

\markboth{Appears in IEEE TRANSACTIONS ON INFORMATION THEORY,
August, 2012}{Huang \MakeLowercase{\textit{et~al.}}: Upper Bounds on the Capacities of Non-Controllable Finite-State
Channels with/without Feedback}




\maketitle\thispagestyle{empty}

\begin{abstract}
Noncontrollable finite-state channels~(FSCs) are FSCs in which the channel inputs have no influence on the channel states, i.e., the channel states evolve freely. Since single-letter formulae for the channel capacities are rarely available for general noncontrollable FSCs, computable bounds are usually utilized to numerically bound the capacities. In this paper, we take the delayed channel state as part of the channel input and then define the {\em directed information rate} from the new channel input~(including the source and the delayed channel state) sequence to the channel output sequence. With this technique, we derive a series of upper bounds on the capacities of noncontrollable FSCs with/without feedback. These upper bounds can be achieved by conditional Markov sources and computed by solving an average reward per stage stochastic control problem~(ARSCP) with a compact state space and a compact action space. By showing that the ARSCP has a uniformly continuous reward function, we transform the original ARSCP into a finite-state and finite-action ARSCP that can be solved by a value iteration method. Under a mild assumption, the value iteration algorithm is convergent and delivers a near-optimal stationary policy and a numerical upper bound.
\end{abstract}


\begin{IEEEkeywords}
Average reward per stage stochastic control problem~(ARSCP), channel capacity, delayed feedback,
directed information, dynamic programming, feedback capacity,
noncontrollable finite-state channel~(FSC), upper bound.
\end{IEEEkeywords}

%
\IEEEpeerreviewmaketitle

\section{Introduction}\label{sec1}

\IEEEPARstart{T}{he} channel capacity is usually defined as an operational quantity, called {\em operational capacity}, that is the supremum of all {\em achievable rates}. For a stationary memoryless channel without feedback, it is well-known that the operational capacity equals the maximum mutual information between the channel input and the channel output, called {\em information capacity}~\cite{Shannon48,Cover91}. It is also well-known that feedback does not increase capacities of memoryless
channels~\cite{Cover91,Shannon56}. That is, the {\em feedback capacity} of a memoryless channel also equals the maximum mutual information. However, for a channel with memory, Massey~\cite{Massey90} proved that the feedback capacity is upper-bounded by the normalized {\em directed information}\footnote{Directed information was introduced by Massey~\cite{Massey90} who attributes
it to Marko~\cite{Marko73}. Recently, Venkataramanan and Pradhan~\cite{Venkataramanan05} gave a new interpretation of the directed information.}, which can be strictly less than the mutual information. Since the mutual information can be reduced to the directed information when the channel is used without feedback~\cite{Massey90}, both the feedforward capacity for {\em information stable} channels~\cite{Dobrushin63} and the feedback capacity for {\em directed information stable} channels~\cite{Tatikonda09} can definitely be characterized by a unified quantity, i.e., the limit of the normalized directed information. This fact will be employed in this paper to upper-bound the feedforward/feedback capacities. Although the capacities for general channels can be characterized either by the supremum of the {\em spectral inf-mutual information rates}~\cite{Verdu94,Han03} or by the supremum of the {\em spectral inf-directed information rates}~\cite{Tatikonda09}, they are usually difficult to compute numerically.

In this paper, we are concerned with stationary finite-state channels~(FSCs) as defined in~\cite[p.~97]{Gallager68}, a class of (directed)~information stable channels with memory. Finite-state channels model a class of channels with memory which have finite channel states, such as finite-length intersymbol interference~(ISI) channels and Gilbert-Elliott~(GE) channels~\cite{Mushkin89}. Gallager~\cite{Gallager68} defined the {\em lower capacity} and the {\em upper capacity} to characterize the dependence of the feedforward capacity on the initial channel state and showed that they coincide for {\em indecomposable} FSCs. Permuter {\em et~al.}~\cite{Permuter09} extended Gallager's method to characterize the feedback capacity of FSCs. For a class of stationary FSCs with feedback~\cite{Kim08}, Kim proved a coding theorem using an encoding scheme based on block ergodic decomposition and a decoding scheme based on strong typicality. For other special FSCs with/without feedback such as GE channels, GE-like channels and unifilar FSCs, see, for example,~\cite{Mushkin89,Goldsmith96,Permuter08} and the references therein. If the channel state information~(CSI) is known to either one of the transmitter and the receiver or both, the capacity usually has a simplified form. For an example, considering the special class of FSCs without ISI defined in~\cite{Viswanathan99}, if the receiver has perfect CSI and both the output and the channel state are fed back to the transmitter, the feedback capacity can be characterized by a single letter formula.

In addition to the derivation of the capacity formula, the computation of the channel capacity is also an important problem. For general channels, this could be a very complicated optimization problem due to the following two issues. Firstly, the capacity usually takes the form of a limit, whose analytical properties are rarely known. Secondly, it might be required to consider almost all possible input processes to conduct the optimization. A brief review of the computation of the channel capacity or its bounds is summarized as follows.

For the discrete memoryless channel, the capacity can be computed by the Blahut-Arimoto algorithm~\cite{Blahut72,Arimoto72}. For the ISI channel with additive white Gaussian noise, if continuous channel inputs are allowed, the capacity can be computed by using the water-filling theorem~\cite{Gallager68,Cover91}. If only finite channel inputs are allowed in the ISI channel, bounds on the i.u.d. capacity $C_{i.u.d.}$, which is defined as the information rate when the channel inputs are independent and uniformly distributed~(i.u.d.), can be evaluated numerically by a Monte Carlo method~\cite{Hirt88}. A more refined Monte Carlo method that utilizes the BCJR algorithm can be used to numerically evaluate the $C_{i.u.d.}$ and the information rates of stationary FSCs with Markov inputs~\cite{Arnold01,Pfister01,Sharma01,Arnold06}. For an FSC with a given-order Markov input processe, the information rate can be further optimized by a generalization of the Blahut-Arimoto algorithm~\cite{Kavcic01,Vontobel08}. These methods, coupled with the proofs~\cite{Chen08} that Markov processes asymptotically achieve feedforward capacities of ISI channels, can be utilized to very closely lower-bound the feedforward capacities of ISI channels. For upper bounds on the feedforward capacities of the stationary FSCs, see~\cite{Yang05,Huang09} and the references therein.

To compute the feedback capacity of the Markov channel, Tatikonda and Mitter~\cite{Tatikonda00,Tatikonda05,Tatikonda09} introduced a dynamic programming framework based on certain sufficient statistics. However, for general FSCs, the sufficient statistics could be
very complicated and the corresponding dynamic programming problem can not be solved efficiently. Nonetheless, for some special FSCs, efficient dynamic programming algorithms have been implemented to evaluate the feedback capacities numerically~\cite{Yang05,Chen05,Permuter08,Zhao10}.

In this paper, we focus on the stationary {\em noncontrollable} FSC~\cite[Definition~22]{Vontobel08}, which is also known as
Markov channel without ISI~\cite[Definition~6.1]{Tatikonda09}\footnote{The results in this paper can also be applied to {\em hybrid} channels that have both an ISI component and a noncontrollable component.}. By uncontrollability, we mean that the input has no influence on the channel state and the channel state evolves freely. As mentioned previously, for some special noncontrollable FSCs
such as the GE channel~\cite{Mushkin89} and GE-like channels~\cite{Goldsmith96}, the capacity-achieving distributions are known,
and the feedforward capacities can be evaluated using the methods in~\cite{Arnold01,Pfister01,Sharma01,Arnold06}. For general noncontrollable FSCs, however, closely bounding the feedforward capacity and the feedback capacity seems to be the only practical approach. While good lower bounds on the capacities of noncontrollable channels are known~\cite{Vontobel08,Sadeghi09}, computable upper bounds are loose. Here, the main practical result of this paper is the development of a numerical technique to closely upper bound the capacity, which combined with the previously mentioned lower bounds~\cite{Vontobel08,Sadeghi09} delivers a good numerical approximation of the capacity.

The main objective of this paper is to find computable upper bounds on the feedforward and feedback capacities. Firstly and most importantly, we develop upper bounds on the capacities by two techniques. One is inserting the delayed channel state into the channel input and then defining the {\em directed information rate} from the new channel input~(including the source and the delayed channel state) sequence to the channel output sequence. The other is majorizing the set of the considered channel input processes. In this way, we develop two nested sequences of upper bounds for feedforward and feedback capacities, respectively. Secondly, through three theorems, we show that the upper bounds can be achieved by {\em finite-order conditional Markov sources}, conditioned on the delayed feedback~(FB), on the delayed state information~(SI) and on the statistic of channel outputs~(called the {\em a posteriori} probability vector). Thirdly, similar to~\cite{Yang05}, we formulate the computation of the upper bound as an average reward per stage stochastic control problem~(ARSCP) with a continuous state space and a continuous action space~\cite{Bertsekas05,Bertsekas07}. This ARSCP is shown to have a uniformly continuous reward function and can be transformed into a finite-state and finite-action ARSCP, which can be solved by a value iteration method. Under a mild assumption, the value iteration algorithm is convergent and delivers a near-optimal stationary policy as well as a numerical upper bound.

\textbf{Structure:} The rest of this paper is structured as follows. The channel model is given in the next section. In Section~\ref{sec3}, the channel capacities of noncontrollable FSCs with/without feedback are introduced and the upper bounds on the capacities are developed. To facilitate the computation of these bounds, three theorems are presented in Section~\ref{sec4}. In Subsection~\ref{sec5.1}, the computation of upper bounds is formulated as an ARSCP with a compact state space and a compact action space~({\bf Problem~A}) which can be further transformed into a finite-state and finite-action ARSCP~({\bf Problem~B}). In Subsection~\ref{sec5.2}, a value iteration method is introduced to solve {\bf Problem~B} to obtain a near-optimal policy. Section~\ref{sec6} presents some numerical results, followed by the conclusion in Section~\ref{sec7}.

\textbf{Notation:}
A random variable is denoted by an upper-case letter~(e.g. $X$) and its realization is denoted by the corresponding lower-case letter~(e.g. $x$). A vector of random variables $[X_i,X_{i+1},\ldots,X_j]$ is shortly denoted by $X_i^j$ and its realization is denoted by $x_i^j$. By default, we set $X^j\stackrel{\Delta}{=}X_1^j$ and $x^j\stackrel{\Delta}{=}x_1^j$. The cardinality of a set $\mathcal{X}$ is denoted by $\left|\mathcal{X}\right|$. The expectation of a function $g(\cdot)$ of a random variable $X$ is denoted by ${\bf E}[g(X)]$, while the expectation of a function $g(\cdot)$ of a random variable $X$ conditioned on a realization $y$ of a random variable $Y$ is denoted by ${\bf E}_{X|y}[g(X)]$.



\section{Channel Model}\label{sec2}

Let $S_t$, $X_t$ and $Y_t$ denote the channel state, the channel
input and the channel output at time $t\in \mathbb{Z}$, whose
realizations are $s_t$, $x_t$ and $y_t$, respectively. Each state
$s_t$, each input letter $x_t$ and each output letter $y_t$ are
drawn from finite alphabets $\mathcal{S}$, $\mathcal{X}$ and $\mathcal{Y}$, respectively.
More specifically, an FSC has a state sequence $\mathbf{s}=s_0,s_1,s_2,\ldots,s_N$,
an input sequence $\mathbf{x}=x_1,x_2,\ldots,x_N$ and an output sequence $\mathbf{y}=y_1,y_2,\ldots,y_N$.
As in~\cite{Gallager68}, an FSC can be characterized by
\begin{equation}\label{eqnGFSC}
 {\rm Pr}\!\left(y_t,s_t\!\left|x^{t}\!,s_0^{t-1}\!,y^{t-1}\right.\!\right)
    \!=\!{\rm Pr}\!\left(y_t,s_t\!\left|x_t,\!s_{t-1}\right.\!\right).
\end{equation}
An FSC is said to be noncontrollable if the channel inputs have no influence on the channel states and the channel
states evolve freely. Hence, a noncontrollable FSC can further be characterized by
\begin{equation}\label{eqnFSC}
 {\rm Pr}\!\left(y_t,s_t\!\left|x^{t}\!,s_0^{t-1}\!,y^{t-1}\right.\!\right)
    \!=\!{\rm Pr}\!\left(y_t\!\left|x_t,\!s_{t-1}\right.\!\right){\rm Pr}\!\left(s_t\!\left|s_{t-1}\right.\!\right).
\end{equation}
Moreover, we assume that the noncontrollable FSC is stationary and indecomposable~\cite{Gallager68}, that is,
the right-hand side of~(\ref{eqnFSC}) is independent of time $t$ and the effect of the initial state $s_0$ on the characteristic of the channel dies away with time. For this reason, without loss of generality, we make an assumption that the distribution of the initial state $S_0$ equals the stationary distribution of the state $S_t$ where $t \geq 1$.


{\bf Remark:} Under the above assumptions, it is easy to verify that if there is no feedback, then given the channel state
$s_{t-1}$ and channel input $x_t$, the channel output $y_t$ and state $s_t$ are
statistically independent of other channel inputs and prior channel states and outputs, i.e., for $t\leq N$,
\begin{equation}\label{eqnNOfb}
    {\rm Pr}\!\left(y_t,s_t\!\left|x^{N},s_0^{t-1},y^{t-1}\right.\!\right)
    \!=\!{\rm Pr}\!\left(y_t\!\left|x_t,\!s_{t-1}\right.\!\right){\rm Pr}\!\left(s_t\!\left|\!s_{t-1}\right.\!\right).
\end{equation}
However, if feedback is allowed~(precisely speaking, the output sequence $y^{t-1}$ is available at
the transmitter before emitting symbol $X_t$), then equality~(\ref{eqnNOfb}) may not hold.

The noncontrollable FSC will be illustrated by the following
example related to the Gilbert-Elliott~(GE) channel.
\begin{example}[The RLL$(1,\infty)$-GE Channel]\label{GE}
  The channel input is required to be a binary run-length-limited~(RLL) sequence satisfying the RLL$(1,\infty)$
  constraint, i.e., there are no consecutive ones in the
  sequence~(see Fig. \ref{RLL}). The channel is a GE channel with two
  states~(see Fig. \ref{GEg}), a ``good'' state and a ``bad'' state. Denote the channel state alphabet
  by $\mathcal{S}\stackrel{\Delta}{=}\{g,b\}$. The transition probabilities
  between channel states are $p(b|g)\stackrel{\Delta}{=}{\rm Pr}\left(S_t=b\left|S_{t-1}=g\right.\right)$ and
  $p(g|b)\stackrel{\Delta}{=}{\rm Pr}\left(S_t=g\left|S_{t-1}=b\right.\right)$.
  When the channel state is ``good'', i.e., $S_{t-1}=g$,
  the channel acts as a binary symmetric channel~(BSC) with
  cross-over probability $\varepsilon_g$. When the channel is
  ``bad'', i.e., $S_{t-1}=b$, the channel is a BSC with cross-over probability
  $\varepsilon_b$.
\hfill \ding{113}
\end{example}

\begin{figure}[!t]
\centering
\includegraphics[width=1.0in]{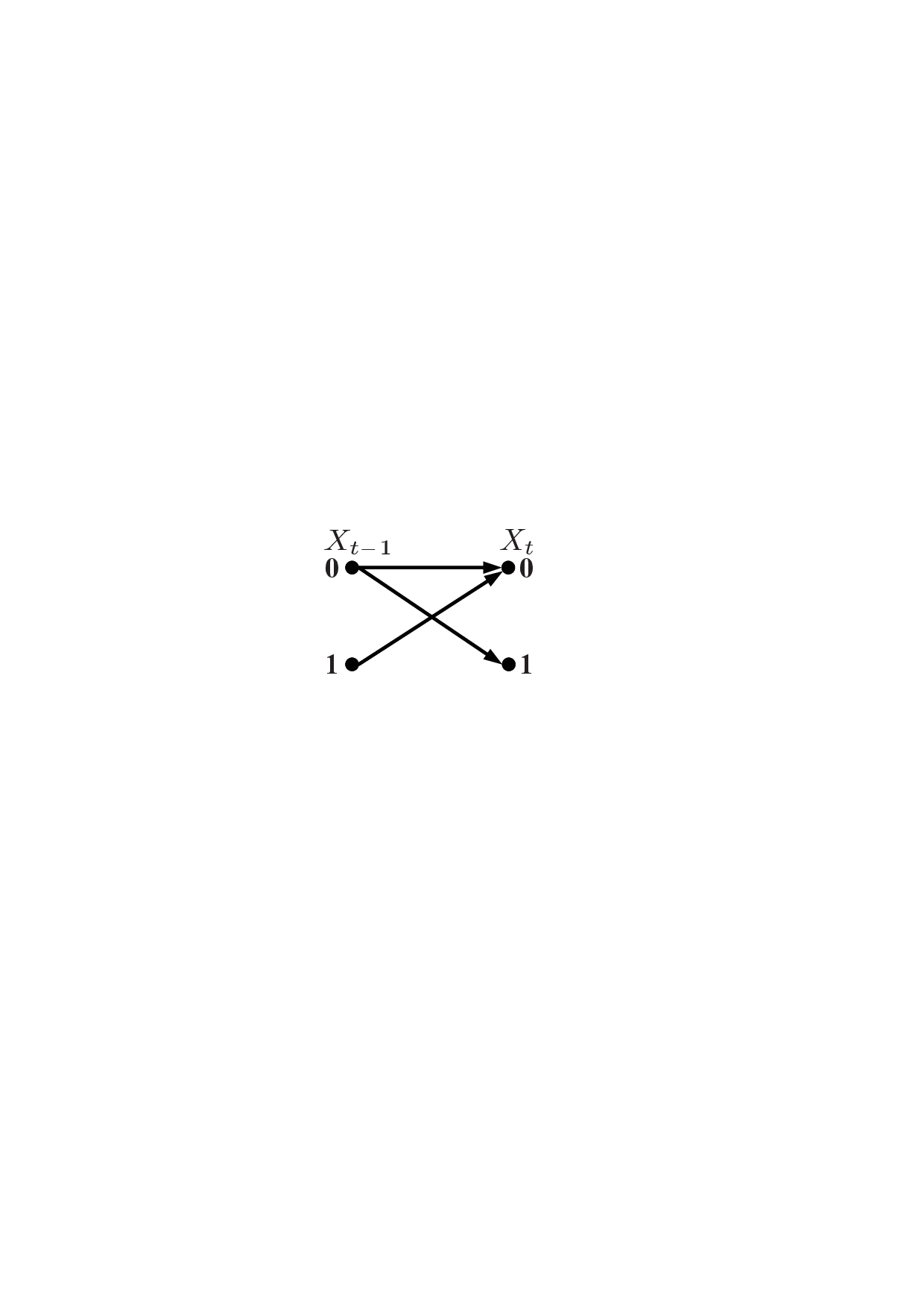}
\caption{A trellis section of the RLL$(1,\infty)$ sequence.}
\label{RLL}
\end{figure}
\begin{figure}[!t]
\centering
\includegraphics[width=2.35in]{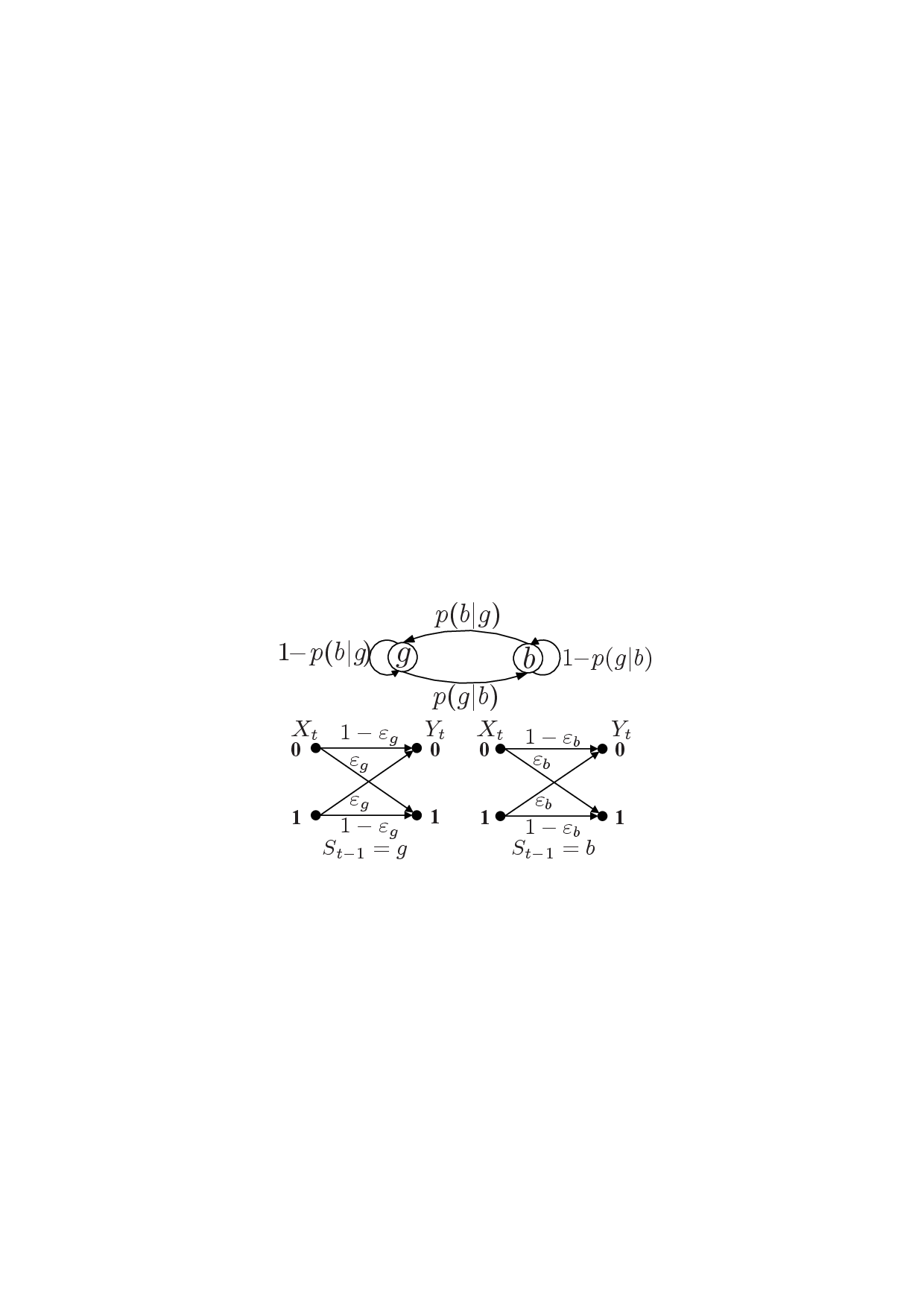}
\caption{A Gilbert-Elliott channel.} \label{GEg}
\end{figure}

\section{Channel Capacities and Upper bounds}\label{sec3}

\subsection{Channel Capacities}\label{subsecCapacity}

In order to unify the presentations of both channel capacities~(the
feedforward capacity and the feedback capacity), we use the notion
of directed information, which was introduced by Massey in~\cite{Massey90}.
For any given joint probability distribution ${\rm Pr}\left(x^N,y^N\right)$,
the directed information from the channel input sequence $X^N$ to channel
output sequence $Y^N$ is defined as
\begin{equation*}
 I\left(X^N\rightarrow Y^N\right) \stackrel{\Delta}{=}
 \sum_{t=1}^N\,I\!\left(X^{t};Y_t\left|Y^{t-1}\right.\!\right).
\end{equation*}
It has been shown that $I\left(X^N\rightarrow Y^N\right)\leq I\left(X^N;Y^N\right)$
with equality if the channel is used without feedback~\cite{Massey90}.
For simplicity, we denote $\mathcal{I}\left(X\rightarrow Y\right)$ as the directed
information rate from the channel input to the channel output, that is,
\begin{equation}\label{drate}
 \mathcal{I}\left(X\rightarrow Y\right) \stackrel{\Delta}{=}
 \liminf_{N\rightarrow\infty}\frac{1}{N}\,I\!\left(X^N\rightarrow Y^N\right).
\end{equation}

We now prove that the capacities can be characterized by the suprema of the directed information rates.

\begin{theorem}\label{ThemCap}
The feedforward capacity of a stationary indecomposable noncontrollable FSC
is given by
\begin{equation}\label{eqnCap}
  C = \sup_{\left\{{\rm Pr}\left(x_t\left|x^{t-1}\right.\!\right)\right\}_{t=1}^{\infty}}
  \mathcal{I}\left(X \rightarrow Y\right)
\end{equation}
where the supremum is taken over all possible channel input processes.
The feedback capacity of a stationary indecomposable noncontrollable FSC is given by
\begin{equation}\label{eqnCapFB}
  C^{fb} = \sup_{\left\{{\rm Pr}\left(x_t\left|x^{t-1},y^{t-1}\right.\!\right)\right\}_{t=1}^{\infty}}
  \mathcal{I}(X\rightarrow Y)
\end{equation}
where the supremum is taken over all possible channel input processes
that are causally dependent on the past channel outputs. This means
that all past channel outputs $Y^{t-1}$ must be fed back to the source
before emitting the symbol $X_t$.
\end{theorem}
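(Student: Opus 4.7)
The plan is to reduce Theorem~\ref{ThemCap} to two established coding theorems by verifying their hypotheses for stationary indecomposable noncontrollable FSCs. Specifically, (\ref{eqnCap}) should follow from Dobrushin's coding theorem for information stable channels~\cite{Dobrushin63} combined with Massey's identity for non-feedback use~\cite{Massey90}, while (\ref{eqnCapFB}) should follow from the directed-information coding theorem of~\cite{Permuter09,Tatikonda09} for directed information stable channels. Both reductions rest on the observation that stationarity and indecomposability imply the appropriate (directed) information stability property, which is already established in~\cite{Gallager68,Permuter09}.

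For the feedforward case, I would first recall that, by~\cite{Gallager68}, an indecomposable FSC is information stable in the sense of Dobrushin, so its operational capacity equals $\lim_{N\to\infty}\frac{1}{N}\sup_{{\rm Pr}(x^N)} I(X^N;Y^N)$. Because feedback is absent, the factorization (\ref{eqnNOfb}) holds, so Massey's equality gives $I(X^N;Y^N)=I(X^N\rightarrow Y^N)$ for every input distribution. Reparametrizing the joint distribution on $X^N$ through its causal factorization $\prod_{t=1}^{N}{\rm Pr}(x_t\mid x^{t-1})$ and invoking stationarity and indecomposability to replace the $\liminf$ in (\ref{drate}) by a true per-letter limit then yields the right-hand side of (\ref{eqnCap}).

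For the feedback case, I would invoke the directed-information coding theorem of~\cite{Permuter09}, which asserts that the feedback capacity of a directed information stable FSC equals $\sup\liminf_{N\to\infty}\frac{1}{N} I(X^N\rightarrow Y^N)$, where the supremum ranges over input processes whose symbol $X_t$ is produced causally from $X^{t-1}$ and $Y^{t-1}$. This is exactly the parametrization appearing in (\ref{eqnCapFB}), so the formula follows once directed information stability is verified. The main obstacle is precisely this verification, because the argument in~\cite{Permuter09} relies on sub/super-additivity estimates that must ensure the $\liminf$ in (\ref{drate}) coincides with the true per-letter limit and that the supremum is asymptotically attained by a sequence of causal input laws. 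Noncontrollability simplifies this step because the state process $\{S_t\}$ is an exogenous stationary indecomposable Markov chain whose mixing behavior decouples from the input; combined with the block-coding/typicality arguments of~\cite{Gallager68,Permuter09}, this should close the argument and establish both (\ref{eqnCap}) and (\ref{eqnCapFB}).
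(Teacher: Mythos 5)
There is a genuine gap. The coding theorems you cite (Gallager's Theorems 4.6.4 and 5.9.1; Permuter \emph{et al.}'s Theorem 18) give the capacity in the \emph{outer-limit, inner-sup-over-finite-block} form
\begin{equation*}
  C_G=\lim_{N\to\infty}\ \sup_{\{{\rm Pr}(x_t\mid x^{t-1})\}_{t=1}^{N}}\ \tfrac{1}{N}\,I(X^N\to Y^N),
\end{equation*}
and the analogous expression $C_P^{fb}$ for feedback, where the supremum is over $N$-length input distributions and the limit is taken afterwards. Theorem~\ref{ThemCap} asserts equality with the \emph{sup-over-infinite-process, inner-$\liminf$} form in~(\ref{eqnCap}) and~(\ref{eqnCapFB}). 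These are not the same expression, and the substantive content of the theorem is precisely the interchange: showing that the $\lim\text{--}\sup$ value is attainable by a single consistent input process. Your proof asserts this interchange ("invoking stationarity and indecomposability to replace the $\liminf$ \ldots by a true per-letter limit then yields the right-hand side") without argument; and for the feedback case you explicitly flag the issue ("The main obstacle is precisely this verification\ldots") and then conclude it "should close the argument," which leaves the gap unfilled.

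One direction, $C\le C_G$, is easy: a fixed consistent process gives for every $N$ a finite-length input law, so $\tfrac{1}{N}I(X^N\to Y^N)\le \sup_{\text{finite }N}\tfrac{1}{N}I$. The nontrivial direction is $C_G\le C$. The paper's proof establishes this by passing through an intermediate quantity $C_M$ (supremum over possibly inconsistent sequences of finite-length laws) and then, crucially, ruling out $C<C_M$ by an explicit construction: take a near-optimal finite-length input on a block of length $N$, repeat it periodically to obtain a consistent infinite process, and lower-bound the resulting directed-information rate by
\begin{equation*}
  \tfrac{1}{NL}\,I(X^{NL}\to Y^{NL})\ \ge\ \tfrac{1}{N}\,I(X^N\to Y^N)-\tfrac{2}{N}\log|\mathcal{S}|,
\end{equation*}
which uses Gallager's Lemma~4 and the Markovianity $(X^{\ell N},Y^{\ell N})\to S_{\ell N}\to(X_{\ell N+1}^{(\ell+1)N},Y_{\ell N+1}^{(\ell+1)N})$. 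This periodic-source argument is what your outline is missing: citing the coding theorems and Massey's identity gets you to $C_G$ (respectively $C_P^{fb}$), but not from there to the $\sup\liminf$ expressions of the theorem.
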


\begin{IEEEproof}
See Appendix~\ref{proofThemCap}.
\end{IEEEproof}

For the general FSC, based on certain sufficient statistics, a dynamic programming framework to evaluate the capacity was presented~\cite{Tatikonda09}. However, as mentioned in Section~VIII of~\cite{Tatikonda09}, the sufficient statistic for a general FSC is often too complicated to be employed in dynamic programming methods. For some {\em special} FSCs, efficient dynamic programming algorithms have been proposed to evaluate the feedback capacity numerically~\cite{Permuter08,Yang05,Chen05,Zhao10}.
The main objective of this paper is to develop numerically computable upper bounds on the capacities of {\em general} indecomposable noncontrollable FSCs~(\ref{eqnFSC}) with/without feedback.

\subsection{Upper Bounds on Capacities}

To upper-bound the capacities, a technique of inserting the delayed channel state into the channel input is employed.
Then the directed information from the channel input and delayed channel state sequence to the channel output sequence
can be well defined as follows.

\begin{definition}\label{defDINRupb}
For a stationary indecomposable noncontrollable FSC~(\ref{eqnFSC}), the {\em directed information rate}
$\mathcal{I}_v\!\left(X,S\rightarrow Y\right)$ is defined as
  \begin{equation}\label{newrate}
     \mathcal{I}_v\left(X,S\rightarrow Y\right)
     \stackrel{\Delta}{=}\liminf_{N\rightarrow\infty}\frac{1}{N}\sum_{t=1}^N I\!\left(X^{t},S_0^{t-v-1};Y_t\left|Y^{t-1}\right.\!\right).
  \end{equation}
\hfill  \ding {113}
\end{definition}

In this definition, the $v$-delayed channel state is considered as a part of the channel input. Obviously, for a given channel input process, there is a nested sequence of upper bounds on $\mathcal{I}\left(X\rightarrow Y\right)$ as
\begin{eqnarray}\label{dupbound}
 \lefteqn{\mathcal{I}\left(X\rightarrow Y\right)\leq \cdots \leq \mathcal{I}_{v+1}\left(X,S\rightarrow Y\right)} \nonumber\\
      & & \;\;\;\;\;\;\;\;\;\:\; \leq \mathcal{I}_v\left(X,S\rightarrow Y\right) \leq \cdots \leq \mathcal{I}_{0}\left(X,S\rightarrow Y\right)\!.
\end{eqnarray}
Furthermore, the capacities in Theorem~\ref{ThemCap} can be bounded as
\begin{equation}\label{eqnCupbsim}
\begin{array}{ccc}
  C &\leq& \sup\limits_{\left\{\text{Pr}\left(x_t|x^{t-1}\right)\right\}_{t=1}^{\infty}}\mathcal{I}_v\left(X,S\rightarrow Y\right) \\
  C^{fb} &\leq& \sup\limits_{\left\{\text{Pr}\left(x_t|x^{t-1},y^{t-1}\right)\right\}_{t=1}^{\infty}}\mathcal{I}_v\left(X,S\rightarrow Y\right).
\end{array}
\end{equation}
These upper bounds, however, can not be easily evaluated because
the source sets are too general to be specified with a few
parameters. To develop simpler expressions for upper bounds, we
need to define the following sources in a similar way to those
in~\cite{Huang09}.

\begin{definition}\label{defuFBvSI}
Assume that the \textbf{$\emph{u}$-delayed output
feedback~(FB)} $Y^{t-u-1}$, and the
\textbf{$\emph{v}$-delayed state information~(SI)}
$S_0^{t-v-1}$ are available at the source just before the emission
of $X_t$~(see Fig.~\ref{figFBFSC}). Then the channel input $X_t$
could be selected according to a preset conditional probability
law
${\rm Pr}\!\left(x_t\!\left|x^{t-1},s_0^{t-v-1},y^{t-u-1}\right.\!\right)$.
All such input processes $\left\{X_t\right\}$ are described by a
set $\mathcal{P}(u,v)$, i.e.,
  \begin{equation*}\label{psource}
  \mathcal{P}(u,v) \stackrel{\Delta}{=} \left\{{\rm Pr}\!\left(x_t\!\left|x^{t-1},s_0^{t-v-1},y^{t-u-1}\right.\!\right)\right\}_{t=1}^{\infty}.
  \end{equation*}
In other words, $\mathcal{P}(u,v)$ represents the set of all sources~(channel inputs) with $u$-delayed FB and $v$-delayed SI.
\hfill \ding {113}
\end{definition}
\begin{figure}[!t]
\centering
\includegraphics[width=2.5in]{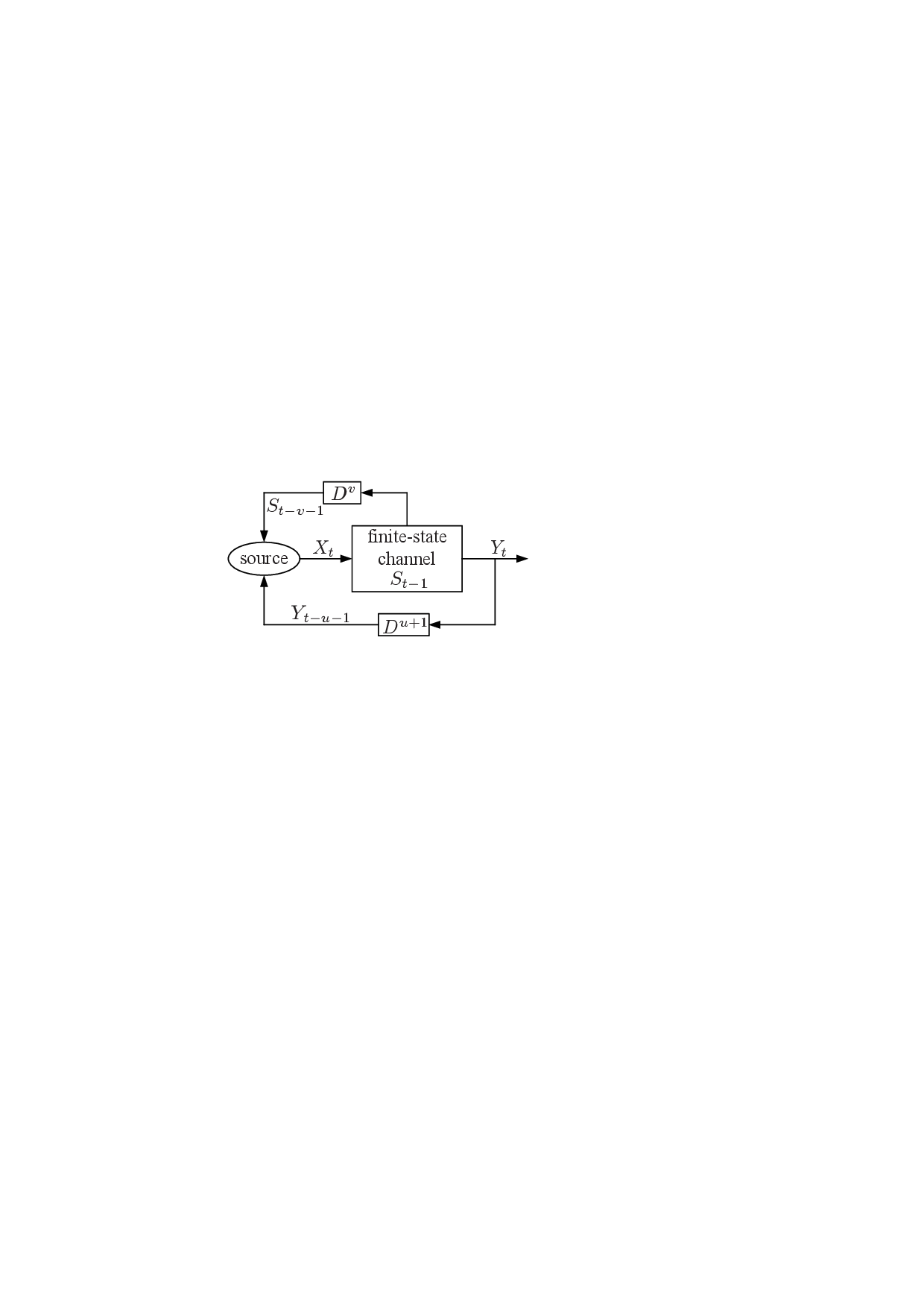}
\caption{A noncontrollable FSC model with $u$-delayed FB and
$v$-delayed SI.} \label{figFBFSC}
\end{figure}

Note that the delays $u$ and $v$ are both non-negative.
An important subclass of sources from $\mathcal{P}(u,v)$, called conditional
Markov source, is defined as follows.

\begin{definition}\label{defCondMSY}
  For $v \leq m$, a source sequence $\{X_t\}$ used with $u$-delayed FB and $v$-delayed
  SI is said to be \textbf{an $\emph{m}$-th order conditional Markov
  source}
  if the conditional probability mass function satisfies
  \begin{equation*}
   {\rm Pr}\!\left(x_t\!\left|x^{t-1}\!,s_0^{t-v-1}\!,y^{t-u-1}\right.\!\right)
   \!=\!{\rm Pr}\!\left(x_t\!\left|x_{t-m}^{t-1},s_{t-m-1}^{t-v-1},y^{t-u-1}\right.\!\right)\!.
  \end{equation*}
Let $\mathcal{P}_m(u,v)$ represent the set of all such sources,
that is,
  \begin{equation*}\label{msource}
    \mathcal{P}_m(u,v)\stackrel{\Delta}{=}
    \left\{{\rm Pr}\!\left(x_t\left|x_{t-m}^{t-1},s_{t-m-1}^{t-v-1},y^{t-u-1}\right.\!\right)\right\}_{t=1}^{\infty}.
  \end{equation*}
\hfill \ding {113}
\end{definition}

From the definitions of sources $\mathcal{P}(u,v)$ and
$\mathcal{P}_m(u,v)$, we have the following facts for non-negative $u$, $v$ and $m$.
\begin{itemize}
\item The sets of channel input processes $\left\{{\rm Pr} \left(x_t
    \left|x^{t-1}\right.\right)\right\}_{t=1}^{\infty}$ and $\left\{{\rm Pr} \left(x_t
    \left|x^{t-1}, y^{t-1}\right.\right)\right\}_{t=1}^{\infty}$ are subsets of the
    conditional source sets $\mathcal{P}(u,v)$ and $\mathcal{P}(0,v)$,
    respectively.
\item $\mathcal{P}(u+1,v+1)\subseteq \mathcal{P}(u+1,v) \subseteq
    \mathcal{P}(u,v)$ and\\
    $\mathcal{P}(u+1,v+1)\subseteq \mathcal{P}(u,v+1) \subseteq
    \mathcal{P}(u,v)$.
\item If $v+1\leq m$, then\\
    $\mathcal{P}_m(u+1,v+1)\subseteq \mathcal{P}_m(u+1,v) \subseteq \mathcal{P}_m(u,v)$ and\\
    $\mathcal{P}_m(u+1,v+1)\subseteq \mathcal{P}_m(u,v+1) \subseteq \mathcal{P}_m(u,v)$.
\item If $v\leq m$, then
    $\mathcal{P}_{m}(u,v)\!\subseteq\!\mathcal{P}_{m+1}(u,v)\!\subseteq\!\cdots\!\subseteq\!\mathcal{P}(u,v)$.
\end{itemize}
Moreover, we can prove the following proposition.

\begin{proposition}\label{Propnewsource}
For a noncontrollable FSC with sources in the set $\mathcal{P}(u,u)$,
  \begin{equation}\label{eqnPropNewSy}
   {\rm Pr}\!\left(y_t,s_t\!\left|x^{t+u},\!s_0^{t-1},y^{t-1}\right.\!\right)\!=\!
   {\rm Pr}\!\left(y_t\!\left|x_{t},\!s_{t-1}\right.\!\right){\rm Pr}\!\left(s_t\!\left|s_{t-1}\right.\!\right).
  \end{equation}
\end{proposition}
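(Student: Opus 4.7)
The plan is to reduce the left-hand side of~(\ref{eqnPropNewSy}) to a ratio via Bayes' rule and then exploit the defining property of $\mathcal{P}(u,u)$ to show that the ``future'' inputs $X_{t+1},\ldots,X_{t+u}$ are conditionally independent of $(Y_t,S_t)$ given the shared past $(X^t,S_0^{t-1},Y^{t-1})$. Concretely, I would begin by writing
\begin{equation*}
 {\rm Pr}\!\left(y_t,s_t\left|x^{t+u},s_0^{t-1},y^{t-1}\right.\!\right) = \frac{{\rm Pr}\!\left(y_t,s_t,x_{t+1}^{t+u}\left|x^t,s_0^{t-1},y^{t-1}\right.\!\right)}{{\rm Pr}\!\left(x_{t+1}^{t+u}\left|x^t,s_0^{t-1},y^{t-1}\right.\!\right)},
\end{equation*}
and then factor the numerator by the chain rule into ${\rm Pr}(y_t,s_t|x^t,s_0^{t-1},y^{t-1})\cdot{\rm Pr}(x_{t+1}^{t+u}|x^t,s_0^{t-1},y^{t-1},y_t,s_t)$. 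The first factor is immediately $ {\rm Pr}(y_t|x_t,s_{t-1}){\rm Pr}(s_t|s_{t-1})$ by the noncontrollable FSC law~(\ref{eqnFSC}), so the proposition reduces to showing that the second factor cancels with the denominator.

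For this cancellation I would expand the second factor once more by the chain rule as $\prod_{k=1}^{u}{\rm Pr}(x_{t+k}|x^{t+k-1},s_0^{t-1},y^{t-1},y_t,s_t)$ and argue term by term that
\begin{equation*}
 {\rm Pr}\!\left(x_{t+k}\left|x^{t+k-1},s_0^{t-1},y^{t-1},y_t,s_t\right.\!\right) = {\rm Pr}\!\left(x_{t+k}\left|x^{t+k-1},s_0^{t-1},y^{t-1}\right.\!\right).
\end{equation*}
The key observation, read directly from Definition~\ref{defuFBvSI} with $v=u$, is that each source-specified conditional depends on the past only through $x^{t+k-1}, s_0^{t+k-u-1}, y^{t+k-u-1}$; since $1\le k\le u$ forces $t+k-u-1\le t-1$, this causal information set is already contained in $(x^{t+k-1},s_0^{t-1},y^{t-1})$, so adjoining $(y_t,s_t)$ to the conditioning leaves the distribution unchanged. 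Reassembling the product then identifies it with ${\rm Pr}(x_{t+1}^{t+u}|x^t,s_0^{t-1},y^{t-1})$, producing the desired cancellation.

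I expect the main subtlety to lie in justifying the per-term equality rigorously, because the source definition prescribes ${\rm Pr}(x_{t+k}|\cdot)$ on the \emph{full} past up through time $t+k-1$, not directly on the particular subset $(x^{t+k-1},s_0^{t-1},y^{t-1},y_t,s_t)$. To bridge this I would invoke iterated expectation: the ``full'' inner conditional ${\rm Pr}(x_{t+k}|x^{t+k-1},s_0^{t+k-1},y^{t+k-1})$ is, by the $\mathcal{P}(u,u)$ property, a function of $(x^{t+k-1},s_0^{t+k-u-1},y^{t+k-u-1})$ only, so marginalizing out the ``gap'' variables $s_t^{t+k-1}$ and $y_{t+1}^{t+k-1}$ against either conditioning set leaves the same value. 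This conditional-independence argument is the only non-routine ingredient; once established, the algebra of the Bayes/chain-rule manipulations collapses to~(\ref{eqnPropNewSy}) without further effort.
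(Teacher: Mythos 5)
Your proof is correct and follows essentially the same route as the paper's: write the left-hand side as a ratio, apply (\ref{eqnFSC}) to extract the $t$-th factor, and cancel the ``future'' input factors ${\rm Pr}(x_{t+1}^{t+u}|\cdot)$ using the $\mathcal{P}(u,u)$ causality condition together with $t+k-u-1\le t-1$ for $1\le k\le u$. You go one step further than the paper's one-line equality (a) by expanding the cancellation term-by-term via the chain rule and pinning down the iterated-expectation/marginalization argument over the gap variables, a point the paper leaves implicit; this is the same idea spelled out with more care.
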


\begin{IEEEproof}
In the case of $u = 0$, equality~(\ref{eqnPropNewSy}) holds from the characteristics of the noncontrollable FSC in~(\ref{eqnFSC}).
In the case of $u \geq 1$, we have
\begin{eqnarray}
  \lefteqn{{\rm Pr}\!\left(y_t,s_t\!\left|x^{t+u}, s_0^{t-1},y^{t-1}\right.\!\right)}\nonumber\\
  &=& \frac{{\rm Pr}\!\left(x^{t+u}, s_0^{t},y^{t}\right)}{{\rm Pr}\!\left(x^{t+u}, s_0^{t-1},y^{t-1} \right)}\nonumber\\
  &=& \frac{{\rm Pr}\!\left(x^{t}, s_0^{t},y^{t}\right)
    {\rm Pr}\!\left(x_{t+1}^{t+u}\!\left|x^t, s_0^{t}, y^t\right.\!\right)}
    {{\rm Pr}\!\left(x^{t}, s_0^{t-1},y^{t-1} \right) {\rm Pr}\!\left(x_{t+1}^{t+u}\!\left|x^{t}, s_0^{t-1},y^{t-1}\right.\! \right)}\nonumber\\
  &\stackrel{\rm (a)}{=}& {\rm Pr}\!\left(y_t,s_t\!\left|x^{t}, s_0^{t-1},y^{t-1}\right.\!\right)\nonumber\\
  &=& {\rm Pr}\!\left(y_t\!\left|x_{t}, s_{t-1}\right.\!\right) {\rm Pr}\!\left(s_t \left|s_{t-1}\right.\!\right)
 \end{eqnarray}
where equality~(a) results from the equality
\begin{equation*}
    {\rm Pr}\! \left(x_{t+1}^{t+u} \!\left|x^t, s_0^{t}, y^t\right. \!\right)
    = {\rm Pr}\! \left(x_{t+1}^{t+u} \!\left|x^{t}, s_0^{t-1},y^{t-1}\right.\! \right)
\end{equation*}
since channel input processes are in the set $\mathcal{P}(u,u) = \left\{{\rm Pr}(x_t|x^{t-1},s_0^{t-u-1},y^{t-u-1})\right\}_{t=1}^{\infty}$~(see Definition~\ref{defuFBvSI}).
\end{IEEEproof}

Proposition~\ref{Propnewsource} implies that the probabilities
${\rm Pr} \left(y_t,s_t \left|x^{t+u}, s_0^{t-1},y^{t-1}\right. \right)$
are unaffected by the source selection from $\mathcal{P}(u,u)$ and that the probabilities
${\rm Pr} \left(y_t,s_t \left|x^{t+u}, s_0^{t-1},y^{t-1}\right. \right)$
can be characterized by the channel only.
From the definition of the set $\mathcal{P}(u,u)$, we directly
introduce a supremum as follows, which will be shown to be an
upper bound on the capacity of the noncontrollable FSC.

\begin{definition}\label{defCapupb}
  Define
  $\mathcal{I}^{*}_{FB,SI}(u,v)$ as the supremum of the information
  rates $\mathcal{I}_v\left(X,S\rightarrow Y\right)$
  over all sources with
  $u$-delayed FB and $u$-delayed SI in $\mathcal{P}(u,u)$, that is,
  \begin{equation}\label{eqnIstar}
    \mathcal{I}^{*}_{FB,SI}(u,v)
     \stackrel{\Delta}{=}\sup\limits_{\mathcal{P}(u,u)}\mathcal{I}_v\left(X,S\rightarrow Y\right).
  \end{equation}
\hfill \ding {113}
\end{definition}

Combining the inequalities in~(\ref{dupbound}) and~(\ref{eqnCupbsim}) with the discussion
after Definitions~\ref{defuFBvSI} and~\ref{defCondMSY}, we conclude this section with the following proposition.

\begin{proposition}\label{PropCapUpB}
\begin{enumerate}
 \item For any $u\geq0$ and $v\geq0$, we have
  \begin{eqnarray*}
   \mathcal{I}^{*}_{FB,SI}\left(u+1,v+1\right) &\leq& \mathcal{I}^{*}_{FB,SI}\left(u+1,v\right) \nonumber\\
   &\leq& \mathcal{I}^{*}_{FB,SI}\left(u,v\right)
  \end{eqnarray*}
  and
  \begin{eqnarray*}
   \mathcal{I}^{*}_{FB,SI}\left(u+1,v+1\right) &\leq& \mathcal{I}^{*}_{FB,SI}\left(u,v+1\right) \nonumber\\
   &\leq& \mathcal{I}^{*}_{FB,SI}\left(u,v\right).
  \end{eqnarray*}

 \item For any $v\geq1$, we have a nested sequence of upper bounds on the feedforward capacity
  \begin{equation*}
   \begin{array}{ccl}
    C \leq \cdots \!\!\!\!&\leq&\!\!\!\! \mathcal{I}^{*}_{FB,SI}\left(v,v\right) \leq \cdots \\
      \!\!\!\!&\leq&\!\!\!\! \mathcal{I}^{*}_{FB,SI}\left(1,1\right)\leq\mathcal{I}^{*}_{FB,SI}\left(0,0\right).
   \end{array}
  \end{equation*}

 \item For any $v\geq1$, we have a nested sequence of upper bounds on the feedback capacity
  \begin{equation*}
   \begin{array}{ccl}
    C^{fb}\leq\cdots\!\!\!\!&\leq&\!\!\!\!\mathcal{I}^{*}_{FB,SI}\left(0,v\right)\leq\cdots \\
    \!\!\!\!&\leq&\!\!\!\! \mathcal{I}^{*}_{FB,SI}\left(0,1\right)\leq\mathcal{I}^{*}_{FB,SI}\left(0,0\right).
   \end{array}
  \end{equation*}
\end{enumerate}
\end{proposition}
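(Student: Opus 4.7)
The plan is to prove each of the three claims by directly composing two structural facts that are already available: the pointwise chain of inequalities on the directed information rates given in~(\ref{dupbound}), and the nesting relations for the source sets $\mathcal{P}(u,v)$ listed in the bulleted remarks after Definition~\ref{defCondMSY}. Throughout, I will use the elementary observation that if $f \leq g$ pointwise on a set $A$, then $\sup_A f \leq \sup_A g$, and that if $A \subseteq B$, then $\sup_A g \leq \sup_B g$.

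For Part~1, I would first establish $\mathcal{I}^{*}_{FB,SI}(u+1,v+1) \leq \mathcal{I}^{*}_{FB,SI}(u+1,v)$. Here both suprema are taken over the \emph{same} source set $\mathcal{P}(u+1,u+1)$, but the first involves $\mathcal{I}_{v+1}$ and the second $\mathcal{I}_v$, so the inequality follows from (\ref{dupbound}) and monotonicity of the supremum. Next, $\mathcal{I}^{*}_{FB,SI}(u+1,v) \leq \mathcal{I}^{*}_{FB,SI}(u,v)$ is of the complementary type: the integrand $\mathcal{I}_v(X,S\rightarrow Y)$ is fixed, but the domains differ. Chaining the inclusions $\mathcal{P}(u+1,u+1) \subseteq \mathcal{P}(u+1,u) \subseteq \mathcal{P}(u,u)$ (from the bullets after Definition~\ref{defCondMSY}), taking the supremum over the larger set is larger. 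The second row of inequalities, going through $(u,v+1)$ rather than $(u+1,v)$, is obtained by performing the same two moves in the opposite order.

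For Part~2, I would recall that the feedforward sources $\{{\rm Pr}(x_t|x^{t-1})\}_{t=1}^\infty$ lie in $\mathcal{P}(u,u)$ for every $u \geq 0$ (they simply ignore the FB and SI arguments) and that, by (\ref{dupbound}), $\mathcal{I}(X\rightarrow Y) \leq \mathcal{I}_v(X,S\rightarrow Y)$ for each such source. Taking the supremum on the left over the feedforward source set gives $C$; enlarging the domain on the right to $\mathcal{P}(u,u)$ gives $\mathcal{I}^{*}_{FB,SI}(u,v)$, so $C \leq \mathcal{I}^{*}_{FB,SI}(u,u)$ for every $u$. The nested chain on the diagonal is then immediate from Part~1 specialized to $u=v$, which yields $\mathcal{I}^{*}_{FB,SI}(v+1,v+1) \leq \mathcal{I}^{*}_{FB,SI}(v,v)$. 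Part~3 is analogous: feedback sources $\{{\rm Pr}(x_t|x^{t-1},y^{t-1})\}_{t=1}^\infty$ lie in $\mathcal{P}(0,v)$ for every $v\geq 0$ and in particular in $\mathcal{P}(0,0)$, so combining (\ref{dupbound}) with passage to the supremum over $\mathcal{P}(0,0)$ yields $C^{fb} \leq \mathcal{I}^{*}_{FB,SI}(0,v)$, and the nested chain is Part~1 evaluated at $u=0$.

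I do not expect any genuine technical obstacle: the proposition is essentially a bookkeeping corollary of the pointwise monotonicity in~(\ref{dupbound}) and the set-inclusion hierarchy for $\mathcal{P}(u,v)$. The only point that warrants a little care is the precise domain of maximization in Definition~\ref{defCapupb}, namely that $\mathcal{I}^{*}_{FB,SI}(u,v)$ is a supremum over $\mathcal{P}(u,u)$ rather than over $\mathcal{P}(u,v)$; consequently, varying $v$ alone changes only the integrand while varying $u$ (with $v$ held fixed) changes the domain, and I must keep these two mechanisms separate when assembling the chains of inequalities.
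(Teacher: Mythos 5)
Your proof is correct, and it is exactly the argument the paper is pointing to when it declares the proof "straightforward and omitted": the text immediately preceding the proposition instructs the reader to combine the monotone chain in~(\ref{dupbound}), the capacity bounds in~(\ref{eqnCupbsim}), and the set inclusions listed after Definitions~\ref{defuFBvSI} and~\ref{defCondMSY}, which is precisely the decomposition you carry out. Your closing remark that in Definition~\ref{defCapupb} the domain is $\mathcal{P}(u,u)$ (so that $v$ affects only the integrand and $u$ only the domain) is the right point to flag, and you handle it cleanly.
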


\begin{proof}
It is straightforward and omitted here.
\end{proof}

\section{Three Theorems for Upper Bounds}\label{sec4}

In this section, we introduce three main theorems that simplify
the expressions for the upper bounds presented in
Proposition~\ref{PropCapUpB} on the capacities of noncontrollable
FSCs.

%
%
\begin{theorem}\label{ThemshortS}
Let $v\geq 0$. For noncontrollable FSCs,
\begin{equation}
    I(X^{t},S_0^{t-v-1};Y_t|Y^{t-1}) = I(X_{t-v}^{t},S_{t-v-1};Y_t|Y^{t-1})
\end{equation}
and the directed information rate $\mathcal{I}_v\left(X,S\rightarrow
Y\right)$ in (\ref{newrate}) can be simplified as
\begin{equation}\label{rate1}
   \mathcal{I}_v\left(X,S\rightarrow Y\right)
   =\liminf\limits_{N\rightarrow\infty}\frac{1}{N}\sum\limits_{t=1}^NI(X_{t-v}^{t},S_{t-v-1};Y_t|Y^{t-1}).
\end{equation}
\end{theorem}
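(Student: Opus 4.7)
The plan is to prove the pointwise identity
$$I(X^t, S_0^{t-v-1}; Y_t | Y^{t-1}) = I(X_{t-v}^t, S_{t-v-1}; Y_t | Y^{t-1})$$
for each $t$; substituting it termwise into the sum in~(\ref{newrate}) will immediately yield the simplified rate expression~(\ref{rate1}).

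By the chain rule for mutual information,
\begin{align*}
I(X^t, S_0^{t-v-1}; Y_t | Y^{t-1}) &= I(X_{t-v}^t, S_{t-v-1}; Y_t | Y^{t-1}) \\
&\quad + I(X^{t-v-1}, S_0^{t-v-2}; Y_t | X_{t-v}^t, S_{t-v-1}, Y^{t-1}),
\end{align*}
so the pointwise identity reduces to showing that the second summand vanishes; equivalently, to establishing the Markov chain $(X^{t-v-1}, S_0^{t-v-2}) \to (X_{t-v}^t, S_{t-v-1}, Y^{t-1}) \to Y_t$, or the distributional identity $\Pr(y_t | x^t, s_0^{t-v-1}, y^{t-1}) = \Pr(y_t | x_{t-v}^t, s_{t-v-1}, y^{t-1})$.

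The key tool is the noncontrollable-FSC characterization~(\ref{eqnFSC}), which upon summing over $s_t$ yields the observation law $\Pr(y_t | x^t, s_0^{t-1}, y^{t-1}) = \Pr(y_t | x_t, s_{t-1})$. Using the tower rule, both candidates for $\Pr(y_t | \cdot)$ can be written as $\sum_{s_{t-1}} \Pr(y_t | x_t, s_{t-1})\, \Pr(s_{t-1} | \cdot)$, so the target equality reduces to the agreement of the two marginal posteriors of $S_{t-1}$. I would verify this by expanding $\Pr(x^t, s_0^{t-1}, y^{t-1})$ as a causal product of source conditionals $\Pr(x_\tau | x^{\tau-1}, s_0^{\tau-1}, y^{\tau-1})$ and channel/state factors $\Pr(y_\tau | x_\tau, s_{\tau-1})\Pr(s_\tau | s_{\tau-1})$, and then forming the Bayes ratio for the posterior of $S_{t-1}$: the ``prefix'' factors indexed by $\tau \le t-v-1$ are constants with respect to the marginalization variables $s_{t-v}^{t-1}$ and cancel between numerator and denominator, leaving a ratio whose input data reduce to the short window $(x_{t-v}^t, s_{t-v-1}, y^{t-1})$.

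The main obstacle is the bookkeeping in the Bayes-ratio cancellation: the source conditionals $\Pr(x_\tau | \cdot)$ for $\tau \in [t-v, t]$ nominally reference the full past, and one must invoke the noncontrollable-FSC's input-independent state transitions to justify that the prefix factors decouple cleanly from the window product. Once the posterior identity is established, the Markov chain $(X^{t-v-1}, S_0^{t-v-2}) \to (X_{t-v}^t, S_{t-v-1}, Y^{t-1}) \to Y_t$ holds, the second term in the chain-rule decomposition vanishes, and both parts of the theorem follow.
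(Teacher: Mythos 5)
Your decomposition mirrors the paper's: the chain rule splits $I(X^{t},S_0^{t-v-1};Y_t|Y^{t-1})$ into the short term plus $I(X^{t-v-1},S_0^{t-v-2};Y_t\,|\,Y^{t-1},X_{t-v}^{t},S_{t-v-1})$, and the theorem is exactly the vanishing of the second term, i.e.\ the Markov chain $(X^{t-v-1},S_0^{t-v-2}) \to (X_{t-v}^t,S_{t-v-1},Y^{t-1}) \to Y_t$. The paper asserts that chain in a single sentence; you try to substantiate it via a Bayes-ratio cancellation for the posterior of $S_{t-1}$, which is an equivalent reduction.

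The gap is in the step you yourself flag as the ``main obstacle.'' Invoking the FSC's input-independent state transitions does \emph{not} make the prefix decouple, because the offending dependence sits in the \emph{source} factors, not the state-transition factors. After cancelling the prefix factors ($\tau\leq t-v-1$) from numerator and denominator, the surviving source factors $\Pr\!\left(x_\tau\,|\,x^{\tau-1},s_0^{\tau-u-1},y^{\tau-u-1}\right)$ for $\tau$ near $t$ may depend \emph{simultaneously} on the marginalized hidden states $s_{t-v}^{\tau-u-1}$ and on the prefix $(x^{t-v-1},s_0^{t-v-2})$; such a factor does not separate, so the normalized ratio retains dependence on the prefix and the claimed Markov chain can fail. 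A concrete instance: take a two-state i.i.d.\ state chain, observation channel $Y_\tau=S_{\tau-1}\oplus Z_\tau$ with $Z_\tau\sim\mathrm{Bern}(0.1)$, inputs $X_1=X_2=0$, and $\Pr(X_3=1\,|\,S_0,S_2)$ a non-separable function of $(S_0,S_2)$; then $I(S_0;Y_3\,|\,Y^{2},X_2^{3},S_1)>0$, so the identity fails at $t=3,\,v=1$. The identity \emph{does} hold precisely when the source factors inside the window do not depend on the hidden states $S_{t-v}^{t-1}$ --- i.e.\ when the source has no state information, or has SI delay at least $v$ (such as $\{\Pr(x_t|x^{t-1},y^{t-1})\}$ or $\mathcal{P}(v,v)$) --- because then every source factor cancels and the posterior of $S_{t-1}$ is a pure channel functional of the window $(x_{t-v}^t,s_{t-v-1},y^{t-1})$. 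Your proof should state this as an explicit hypothesis on the source rather than attribute the decoupling to noncontrollability alone; the same caveat applies to the paper's one-line justification.
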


\begin{IEEEproof}
For any $v\geq 0$, by using the chain rule for mutual information, we
have
\begin{eqnarray}
    \lefteqn{I(X^{t},S_0^{t-v-1};Y_t|Y^{t-1})} \nonumber\\
    &=&\!\!\! I(X_{t-v}^{t},S_{t-v-1};Y_t|Y^{t-1}) \nonumber\\
    &&\!\!\! + \:I(X^{t-v-1},S_0^{t-v-2};Y_t|Y^{t-1},X_{t-v}^{t},S_{t-v-1}).
\end{eqnarray}
The last term equals zero since the current channel output $Y_t$ is independent of the distantly past states $S_0^{t-v-2}$ and inputs $X^{t-v-1}$ if the recent state $S_{t-v-1}$ and inputs $X_{t-v}^t$ and the whole history of outputs $Y^{t-1}$ are given.
\end{IEEEproof}


\begin{theorem}\label{Themnewsource1} Let $0\leq u\leq v$. The supremum $\mathcal{I}^{*}_{FB,SI}(u,v)$
is achieved by a $v$-th order conditional Markov source with
$u$-delayed FB and $u$-delayed SI, that is, \vskip -0.25cm
\begin{equation*}
    \mathcal{I}^{*}_{FB,SI}(u,v)=\sup\limits_{\mathcal{P}_v(u,u)}
     \mathcal{I}_v(X,S\rightarrow Y)
\end{equation*}
where $\mathcal{P}_v(u,u) =
  \left\{{\rm Pr}\!\left(x_t\left|x_{t-v}^{t-1},s_{t-v-1}^{t-u-1},y^{t-u-1}\right.\!\right)\right\}_{t=1}^{\infty}$.
\end{theorem}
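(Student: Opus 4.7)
The plan is to prove the non-trivial direction $\mathcal{I}^{*}_{FB,SI}(u,v)\le\sup_{\mathcal{P}_v(u,u)}\mathcal{I}_v(X,S\rightarrow Y)$ by constructing, for every $P\in\mathcal{P}(u,u)$, a source $\tilde P\in\mathcal{P}_v(u,u)$ achieving the same rate; the reverse direction is immediate from $\mathcal{P}_v(u,u)\subseteq\mathcal{P}(u,u)$. First I would invoke Theorem~\ref{ThemshortS} to rewrite each summand of $\mathcal{I}_v(X,S\rightarrow Y)$ as $I(X_{t-v}^{t},S_{t-v-1};Y_t|Y^{t-1})$, which is a functional only of the joint law of $(X_{t-v}^{t},S_{t-v-1},Y^{t-1})$ induced by the source.

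Second, I would establish a source-invariance property of the kernel ${\rm Pr}(y_t\mid y^{t-1},x_{t-v}^{t},s_{t-v-1})$: it is determined entirely by the channel and state-transition kernels, independent of the choice of $P\in\mathcal{P}(u,u)$. The argument iterates Proposition~\ref{Propnewsource} from $\tau=t-v$ through $\tau=t-1$ to obtain
\begin{equation*}
{\rm Pr}(s_{t-v}^{t-1},y_{t-v}^{t-1}\mid x_{t-v}^{t+u},s_{t-v-1},y^{t-v-1})
=\prod_{\tau=t-v}^{t-1}{\rm Pr}(y_\tau\mid x_\tau,s_{\tau-1})\,{\rm Pr}(s_\tau\mid s_{\tau-1}),
\end{equation*}
and hence, via Bayes' rule, a source-free posterior on $S_{t-1}$ given $(x_{t-v}^{t+u},s_{t-v-1},y^{t-1})$; combining with the channel law ${\rm Pr}(y_t\mid x_t,s_{t-1})$ and noting that the future inputs $x_{t+1}^{t+u}$ are inert for $Y_t$ yields the claim.

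Third, given $P\in\mathcal{P}(u,u)$, I would define the candidate $\tilde P\in\mathcal{P}_v(u,u)$ by setting
\begin{equation*}
\tilde P(x_t\mid x_{t-v}^{t-1},s_{t-v-1}^{t-u-1},y^{t-u-1})
:=P(x_t\mid x_{t-v}^{t-1},s_{t-v-1}^{t-u-1},y^{t-u-1}),
\end{equation*}
where the right-hand side is the conditional under $P$ obtained by marginalizing out the older past $(X^{t-v-1},S_0^{t-v-2})$. The central claim is then to prove by induction on $t$ that $P$ and $\tilde P$ induce the same marginal law on $(X_{t-v}^{t},S_{t-v-1}^{t-1},Y^{t})$. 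Once this holds, Step~2 forces the per-stage mutual informations to coincide under $P$ and $\tilde P$, so $\mathcal{I}_v(X,S\rightarrow Y)|_{\tilde P}=\mathcal{I}_v(X,S\rightarrow Y)|_{P}$, and the theorem follows.

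The hardest part is the inductive step. Although $\tilde P$ and $P$ can genuinely disagree on the full joint over the older past, the marginal of interest is preserved because $(x_{t-v}^{t-1},s_{t-v-1}^{t-u-1},y^{t-u-1})$ is a sufficient statistic for emitting $X_t$ under the $v$-th order Markov form and because the channel/state kernels are themselves source-free. Passing from time $t-1$ to $t$ then reduces to factoring out these common kernels and summing each law against its matched sub-marginal; the delicate bookkeeping lies in isolating the correct sub-marginal at each step (in particular tracking which states in $s_{t-v-1}^{t-u-1}$ and which outputs in $y^{t-u-1}$ feed the next source conditional) and verifying that the matched identity propagates forward without ever requiring agreement on the full joint, which need not hold.
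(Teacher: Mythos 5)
Your overall plan matches the paper's proof: take an arbitrary $P\in\mathcal{P}(u,u)$, build $\tilde P\in\mathcal{P}_v(u,u)$ by reusing $P$'s $v$-th order conditionals (the paper calls it $\mathcal{P}_2$), show the two sources induce the same law on $(X_{t-v}^t,S_{t-v-1},Y^t)$, and conclude via Theorem~\ref{ThemshortS}. The paper carries out the joint-law equality by a direct telescoping computation (eq.~(\ref{source2Joint}), equalities (f)--(g)): it rewrites each factor $P(x_\tau\mid x_{\tau-v}^{\tau-1},s_{\tau-v-1}^{\tau-u-1},y^{\tau-u-1})\cdot{\rm Pr}(y_{\tau-u},s_{\tau-u}\mid\cdot)$ as the ratio $\frac{{\rm Pr}^{(P)}(x_{\tau-v}^\tau,s_{\tau-v-1}^{\tau-u},y^{\tau-u})}{{\rm Pr}^{(P)}(x_{\tau-v}^{\tau-1},s_{\tau-v-1}^{\tau-u-1},y^{\tau-u-1})}$ and lets the product collapse. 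That is exactly the induction you gesture at, spelled out, with the invariant tracked being the marginal ${\rm Pr}(x_{\tau-v}^\tau,s_{\tau-v-1}^{\tau-u},y^{\tau-u})$ rather than your $(X_{t-v}^{t},S_{t-v-1}^{t-1},Y^{t})$. That choice matters: with your window, passing from $t-1$ to $t$ requires the state $s_{t-1}$, which is absent from the time-$(t-1)$ window, so the induction does not close without an extra (source-free) Bayes step. The paper's window avoids this issue entirely, and is what the telescoping naturally produces; eq.~(\ref{eqnY}) then supplies the source-free factor needed to pass from $y^{t-u}$ to $y^t$.

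There is also a genuine flaw in your Step~2. The claim that ${\rm Pr}(y_t\mid y^{t-1},x_{t-v}^{t},s_{t-v-1})$ is source-free is not justified, and appears false for $u\geq 1$. Your display conditions on $x_{t-v}^{t+u}$, and the inputs $x_{t+1}^{t+u}$ cannot simply be discarded: under an $\mathcal{P}(u,u)$ source, $X_{t+u}$ may depend on $S_{t-1}$ (since $t+u-u-1=t-1$), so the posterior on $S_{t-1}$, and hence ${\rm Pr}(y_t\mid\cdot)$, does change when you marginalize out $x_{t+1}^{t+u}$ against a source-dependent weight. Fortunately Step~2 is also redundant: once your Step~3 marginal equality holds, the joint ${\rm Pr}(x_{t-v}^t,s_{t-v-1},y^t)$ is the same under $P$ and $\tilde P$ by further marginalization, and the per-stage mutual information follows directly. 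So you should delete Step~2, replace the induction target by ${\rm Pr}(x_{t-v}^t,s_{t-v-1}^{t-u},y^{t-u})$, and then actually carry out the inductive/telescoping step --- which you currently describe only as ``delicate bookkeeping''; that bookkeeping is precisely the content of eq.~(\ref{source2Joint}) in the paper.
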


\begin{IEEEproof}
See Appendix~\ref{proofThemnewsource1}.
\end{IEEEproof}
%

By Theorem~\ref{Themnewsource1}, to evaluate the supremum
$\mathcal{I}^{*}_{FB,SI}(u,v)$, it is necessary to search the whole
set of conditional probabilities
$\left\{{\rm Pr}\!\left(x_t\left|x_{t-v}^{t-1},s_{t-v-1}^{t-u-1},y^{t-u-1}\right.\!\right),
t=1,2,\ldots,\right\}$. As time $t$ increases, the space of
sequences $y^{t-u-1}$ expands exponentially, which makes it
complicated to keep track of the dependence of the process $X_t$
on $Y^{t-u-1}$. In the sequel, we find some finite-size sufficient
statistics to represent the sequence $y^{t-u-1}$.

Let $\mathcal{M}$ be the Cartesian product $\mathcal{X}^v \times
\mathcal{S}^{v-u+1}$ whose elements are indexed simply by $\ell
\in \{0, 1, \cdots, M-1\}$ with $M = |\mathcal{M}|$. A random
vector $\underline{A}_{t}$ is specified as the {\em a posteriori}
probability vector with realization
 \begin{equation}\label{eqnDefAlpha}
    \underline{\alpha}_{t}\stackrel{\Delta}{=}\left[\alpha_{t}(0),\alpha_{t}(1),\cdots,\alpha_{t}(M-1)\right]
 \end{equation}
where
 \begin{equation}\label{eqnDefAlpha1}
  \alpha_{t}(\ell)\!\stackrel{\Delta}{=}\!{\rm Pr}\!\left(\left(\!X_{t-v+1}^{t},\!S_{t-v}^{t-u}\!\right)\!=\!\ell\,|y^{t-u}\!\right)
 \end{equation}
for
$\ell\!\in\!\{0,1,\cdots,M\!-\!1\}$. The sample space of the random vector $\underline{A}_{t}$ is denoted by $\mathcal{A}$,
which is a simplex in $\mathbb{R}^M$. That is, $\mathcal{A}=\{\underline{\alpha}=\left[\alpha(0),\ldots, \alpha(M-1)\right]: \alpha(i) \geq 0, \sum_{i=0}^{M-1}\alpha(i) = 1\}$.
Given the probability vector $\underline{\alpha}_{t-1}$, the channel output $y_{t-u}$ and the set of transition probabilities
${\rm Pr}\!\left(x_t\!\left|x_{t-v}^{t-1},s_{t-v-1}^{t-u-1},y^{t-u-1}\right.\!\right)$,
we can use the forward recursion of the BCJR algorithm~\cite{BCJR74} to compute all values of $\alpha_t(\ell)$ as
\begin{equation}\label{bcjr0}
   \alpha_{t}\!\left(x_{t-v+1}^{t},\!s_{t-v}^{t-u}\right)\!=\!
   \frac{\sum\limits_{x_{t-v},s_{t-v-1}}\!\!\!\!{\rm Pr}\!\left(x_{t-v}^{t},\!s_{t-v-1}^{t-u},\!y_{t-u}\!\left|y^{t-u-1}\right.\!\right)}
      {\sum\limits_{x_{t-v}^{t},s_{t-v-1}^{t-u}}\!\!\!\!{\rm Pr}\!\left(x_{t-v}^{t},\!s_{t-v-1}^{t-u},\!y_{t-u}\!\left|y^{t-u-1}\right.\!\right)}
\end{equation}
where
\begin{eqnarray}\label{bcjr1}
 \lefteqn{{\rm Pr}\!\left(\!x_{t\!-v}^{t},\!s_{t\!-v-\!1}^{t\!-u},\!y_{t\!-u}\!\left|y^{t\!-u-\!1}\right.\!\!\right)}\hspace{0.5cm}\nonumber\\
   &\!\stackrel{\textrm{(a)}}{=}&\!\alpha_{t\!-1}\!\!\left(x_{t\!-v}^{t\!-1},s_{t\!-v-\!1}^{t\!-u-\!1}\right)
     {\rm Pr}\!\left(x_t\!\left|x_{t\!-v}^{t\!-1},\!s_{t\!-v-\!1}^{t\!-u-\!1},\!y^{t\!-u-\!1}\right.\!\right) \nonumber\\
   && \! \times \, {\rm Pr}\!\left(y_{t\!-u}\!\left|x_{t\!-u},s_{t\!-u-\!1}\right.\!\right)
   \!{\rm Pr}\!\left(s_{t\!-u}\!\left|s_{t\!-u-\!1}\right.\!\right).
\end{eqnarray}
The equality (a) results from Proposition \ref{Propnewsource} and
the assumption $u \leq v$. From~(\ref{bcjr1}), we know that, once
the prior conditional probability vector
$\underline{\alpha}_{t-1}$ is given, the current conditional
probability vector $\underline{\alpha}_{t}$ depends {\em
only} on the current transition probability
${\rm Pr}\!\left(\!x_t\!\left|x_{t-v}^{t-1},\!s_{t-v-1}^{t-u-1},\!y^{t-u-1}\right.\!\right)$
and the channel transition law. To shorten the notation, we
abbreviate~(\ref{bcjr0}) and~(\ref{bcjr1}) as
\begin{equation}\label{bcjr}
 \underline{\alpha}_{t}\!=\!
 F_{\rm BCJR}\!\left(\underline{\alpha}_{t-\!1},
 \!\left\{{\rm Pr}\!\left(\!x_t\!\!\left|x_{t-\!v}^{t-\!1},\!s_{t-v-\!1}^{t-u-\!1},\!y^{t-u-\!1}\right.\!\right)\!\right\}\!,y_{t-u}\!\right)\!.
\end{equation}

Evidently, the vector $\underline{\alpha}_{t-1}$ depends on the sequence $y^{t-u-1}$, and two different sequences $y^{t-u-1}$ and $\tilde{y}^{t-u-1}$
may result in the same vectors $\underline{\alpha}_{t-1}$. For an arbitrarily selected source from $\mathcal{P}_v(u,u)$, two different sequences $y^{t-u-1}$ and $\tilde{y}^{t-u-1}$ may induce different probabilities
\begin{equation*}
{\rm Pr}\!\left(x_t\!\left|x_{t-v}^{t-1},s_{t-v-1}^{t-u-1},y^{t-u-1}\right.\!\right)\! \neq\!
{\rm Pr}\!\left(x_t\!\left|x_{t-v}^{t-1},s_{t-v-1}^{t-u-1},\tilde{y}^{t-u-1}\right.\!\right).
\end{equation*}
However, there do exist sources such that different sequences $y^{t-u-1}$ and $\tilde{y}^{t-u-1}$ resulting in the same vectors $\underline{\alpha}_{t-1}=\underline{\tilde{\alpha}}_{t-1}$ induce the same probabilities
\begin{equation*}
{\rm Pr}\!\left(x_t\!\left|x_{t-v}^{t-1},s_{t-v-1}^{t-u-1},y^{t-u-1}\right.\!\right)\! = \!
{\rm Pr}\!\left(x_t\!\left|x_{t-v}^{t-1},s_{t-v-1}^{t-u-1},\tilde{y}^{t-u-1}\right.\!\right).
\end{equation*}
Such a subclass of $\mathcal{P}_v(u,u)$ is defined as follows.

\begin{definition}\label{defCondMSalpha}
The set $\mathcal{P}'_v(u,u)$ collects all the $v$-th order
conditional Markov sources with $u$-delayed FB and $u$-delayed SI
such that
  \begin{equation*}
    {\rm Pr}\!\left(x_t\!\left|x_{t-v}^{t-1},s_{t-v-1}^{t-u-1},y^{t-u-1}\right.\!\right)\!=\!
    {\rm Pr}\!\left(x_t\!\left|x_{t-v}^{t-1},s_{t-v-1}^{t-u-1},\tilde{y}^{t-u-1}\right.\!\right)
  \end{equation*}
whenever ${\underline \alpha}_{t-1} = \tilde{\underline
 \alpha}_{t-1}$. Hence, the source set $\mathcal{P}'_v(u,u)$ can be
shortly denoted by
 \begin{equation*}
 \mathcal{P}'_v(u,u)\stackrel{\Delta}{=}
  \left\{{\rm Pr}\!\left(x_t\!\left|x_{t-v}^{t-1},s_{t-v-1}^{t-u-1},\underline{\alpha}_{t-1}\right.\!\right)\right\}_{t=1}^{\infty}.
 \end{equation*}
\vskip -0.25cm
\hfill \ding{113}
\end{definition}

Fig.~\ref{figFBFSCalpha} depicts the noncontrollable FSC model, whose
source belongs to the set $\mathcal{P}'_v(u,u)$.

\begin{figure}[!t]
\centering
\includegraphics[width=2.5in]{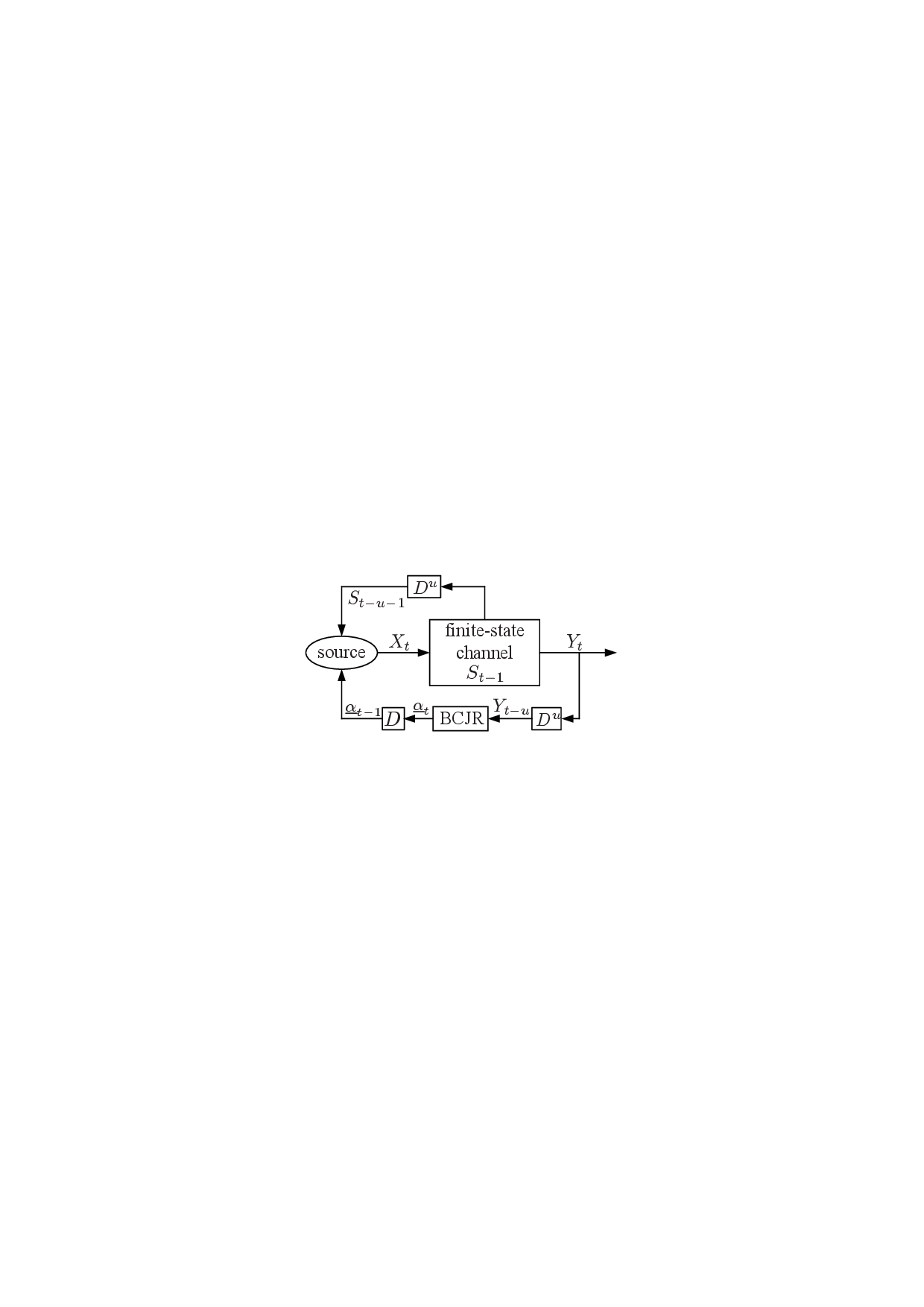}
\caption{A noncontrollable FSC whose source is in the set
$\mathcal{P}'_v(u,u)$.} \label{figFBFSCalpha}
\end{figure}

\begin{theorem}\label{Themnewsource2}
Let $u\!\leq\!v$. The supremum
$\mathcal{I}^{*}_{FB,SI}\!\left(u,\!v\right)$ can be achieved by a
source in the set $\mathcal{P}'_v(u,\!u)$, that is,
  \begin{equation}\label{rate3}
    \mathcal{I}^{*}_{FB,SI}(u,v)=\sup_{\mathcal{P}'_v(u,u)}
     \mathcal{I}_v\left(X,S\rightarrow Y\right)
  \end{equation}
where $\mathcal{P}'_v(u,u)=
  \left\{{\rm Pr}\!\left(x_t\!\left|x_{t-v}^{t-1},s_{t-v-1}^{t-u-1},\underline{\alpha}_{t-1}\right.\!\right)\right\}_{t=1}^{\infty}$.
\end{theorem}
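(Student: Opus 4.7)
The plan is to reformulate the optimization over $\mathcal{P}_v(u,u)$ as a controlled Markov process with state $\underline{\alpha}_{t-1}$ and to show that an optimal history-dependent policy can always be collapsed to a stationary Markov policy depending only on the current state $\underline{\alpha}_{t-1}$. Such stationary Markov policies are precisely the sources in $\mathcal{P}'_v(u,u)$, so together with the trivial inclusion $\mathcal{P}'_v(u,u) \subseteq \mathcal{P}_v(u,u)$ and Theorem~\ref{Themnewsource1}, this yields~(\ref{rate3}).

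First, I would identify the state, action, transition and reward of the MDP. By Theorem~\ref{Themnewsource1} it suffices to consider $p \in \mathcal{P}_v(u,u)$, whose local action at time $t$ is $\mu_t := \{{\rm Pr}(x_t | x_{t-v}^{t-1}, s_{t-v-1}^{t-u-1}, y^{t-u-1})\}$, viewed as a conditional distribution over $x_t$. Proposition~\ref{Propnewsource} together with the BCJR recursion~(\ref{bcjr}) give $\underline{\alpha}_{t} = F_{\rm BCJR}(\underline{\alpha}_{t-1}, \mu_t, y_{t-u})$; moreover, the conditional law ${\rm Pr}(X_{t-v}^{t-1}, S_{t-v-1}^{t-u-1} | Y^{t-u-1} = y^{t-u-1})$ is exactly $\underline{\alpha}_{t-1}$, which together with $\mu_t$ and the channel fully determines the joint law of $(X_{t-v}^t, S_{t-v-1}, Y_t)$ given $\underline{\alpha}_{t-1}$. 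Consequently, the per-stage reward $I(X_{t-v}^t, S_{t-v-1}; Y_t | Y^{t-1})$ can be written as a functional $R(\underline{\alpha}_{t-1}, \mu_t)$ of state and action alone.

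Given this structure, any $p \in \mathcal{P}_v(u,u)$ corresponds to a possibly history-dependent policy $y^{t-u-1} \mapsto \mu_t$, while any $p \in \mathcal{P}'_v(u,u)$ corresponds to a stationary Markov policy $\underline{\alpha}_{t-1} \mapsto \mu_t$. For any $p \in \mathcal{P}_v(u,u)$ I would build the averaged surrogate $\tilde p \in \mathcal{P}'_v(u,u)$ by
\begin{equation*}
\tilde p(x_t | x_{t-v}^{t-1}, s_{t-v-1}^{t-u-1}, \underline{\alpha}_{t-1}) = {\bf E}_{p}\!\left[p(x_t | x_{t-v}^{t-1}, s_{t-v-1}^{t-u-1}, Y^{t-u-1}) \,\big|\, x_{t-v}^{t-1}, s_{t-v-1}^{t-u-1}, \underline{\alpha}_{t-1}\right],
\end{equation*}
so that $\tilde p$'s action $\tilde \mu_t$ is the conditional mean of $\mu_t$ given $\underline{\alpha}_{t-1}$. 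A short induction on $t$, exploiting that the BCJR update~(\ref{bcjr1}) is affine in the action and that ${\rm Pr}(X_{t-v}^{t-1}, S_{t-v-1}^{t-u-1} | \underline{\alpha}_{t-1}=\underline{\alpha}) = \underline{\alpha}$ regardless of the source, then shows that the marginal law of $\underline{\alpha}_{t-1}$ is preserved under the replacement $p \rightarrow \tilde p$.

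The main obstacle will be to establish $R(\underline{\alpha}_{t-1}, \tilde \mu_t) \geq {\bf E}_{p}[R(\underline{\alpha}_{t-1}, \mu_t) | \underline{\alpha}_{t-1}]$ pointwise in $\underline{\alpha}_{t-1}$. My plan is to invoke the concavity of $\mu \mapsto R(\underline{\alpha}, \mu)$ for each fixed $\underline{\alpha}$---a consequence of the classical concavity of (conditional) mutual information as a functional of the input distribution for a fixed channel kernel---and then to apply Jensen's inequality to the conditional mean $\tilde \mu_t = {\bf E}_{p}[\mu_t | \underline{\alpha}_{t-1}]$. Taking outer expectations against the common marginal law of $\underline{\alpha}_{t-1}$, summing over $t = 1, \ldots, N$, dividing by $N$, and passing to $\liminf$ then yields $\mathcal{I}_v^{\tilde p}(X, S \rightarrow Y) \geq \mathcal{I}_v^{p}(X, S \rightarrow Y)$, which completes the proof of~(\ref{rate3}).
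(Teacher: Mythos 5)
Your reformulation as a controlled Markov process on the belief state $\underline{\alpha}_{t-1}$ is natural, and your observation that the per-stage reward $R(\underline{\alpha},\mu)$ is concave in the action $\mu$ is correct (the input law ${\rm Pr}(x_{t-v}^t,s_{t-v-1}\,|\,y_{t-u}^{t-1},\underline{\alpha})$ is affine in $\mu$, and the channel ${\rm Pr}(y_t\,|\,x_{t-v}^t,s_{t-v-1},y_{t-u}^{t-1},\underline{\alpha})$ is $\mu$-free, so $I = H(Y_t|V)-H(Y_t|W,V)$ is concave minus linear). However, the crux of your argument --- that replacing $p$ by the averaged surrogate $\tilde p$ preserves the marginal law of $\underline{\alpha}_{t-1}$ --- is false, and this is a genuine gap. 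You cite the affinity of the \emph{unnormalized} update~(\ref{bcjr1}), but the state transition actually used is the \emph{normalized} recursion~(\ref{bcjr0}): $F_{\rm BCJR}$ is a ratio of affine functions of $\mu$, hence not affine. More fundamentally, $\underline{\alpha}_{t-1}$ is not an abstract MDP state; it is the receiver's true a posteriori probability, which is a \emph{policy-dependent} deterministic function of $y^{t-u-1}$. Once you change the policy from $p$ to $\tilde p$, the map $y^{t-u-1}\mapsto\underline{\alpha}_{t-1}$ itself changes, so even though ${\rm Pr}(Y_{t-u}\,|\,\underline{\alpha}_{t-1})$ is action-free, the push-forward law of $\underline{\alpha}_{t-1}$ differs between $p$ and $\tilde p$ from the first stage at which $\mu_t^p$ genuinely depends on $y^{t-u-1}$ beyond $\underline{\alpha}_{t-1}$. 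Your induction therefore breaks, and the Jensen step cannot be summed across stages. (Note also that the standard occupation-measure argument for MDPs preserves marginals by matching a \emph{randomized} Markov policy, not by plugging in the averaged action; derandomizing via concavity again requires the state law to be unchanged, which it is not here precisely because the belief is policy-dependent.)

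The paper sidesteps this difficulty entirely by arguing backwards rather than forwards. It fixes the first $T-1$ policies, decomposes $\sum_{t=1}^N I(X_{t-v}^t,S_{t-v-1};Y_t|Y^{t-1})$ into a head (determined by the policies up to $T-1$ alone) and a tail, and then shows via~(\ref{samemeasure}) that two histories $y^{T-u-1}$ and $\tilde y^{T-u-1}$ inducing the same $\underline{\alpha}_{T-1}$ and using the same future policies produce the same conditional joint law ${\rm Pr}(x_{T-v}^N,s_{T-v-1}^{N-v-1},y_{T-u}^N\,|\,\cdot)$, hence the same future reward. Consequently the optimal continuation policy at time $T$ may be taken to depend on $y^{T-u-1}$ only through $\underline{\alpha}_{T-1}$; since $T$ is arbitrary, the optimum is attained in $\mathcal{P}'_v(u,u)$. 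This works pointwise over histories and never needs to preserve any marginal law or invoke concavity. If you wish to keep your MDP framing, you would need to replace the forward averaging induction with a backward Bellman-type argument of this kind.
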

\begin{IEEEproof}
See Appendix~\ref{proofThemnewsource2}.
\end{IEEEproof}

\section{DYNAMIC PROGRAMMING FOR SOURCE OPTIMIZATION} \label{sec5}

{ \subsection{Stochastic Control Formulations}
  \label{sec5.1}
}

From Theorem~\ref{Themnewsource2}, we only need to consider the sources in the set $\mathcal{P}'_v(u,u)$.
In this setting, for any given $y^{t-u-1}$,
\begin{eqnarray}\label{eqnIleftprob}
    \lefteqn{{\rm Pr}\!\left(x_{t-v}^t,s_{t-v-1},y_{t-u}^t\!\left|y^{t-u-1}\right.\right)} \nonumber\\
    &=& \sum_{s_{t-v}^{t-u}} {\rm Pr}\!\left(x_{t-v}^t,s_{t-v-1}^{t-u},y_{t-u}^t\!\left|y^{t-u-1}\right.\right)\nonumber\\
    &\stackrel{\rm (a)}{=}& \sum_{s_{t-v}^{t-u}}
        {\rm Pr}\!\left(\!x_{t\!-\!v}^{t\!-\!1},\!s_{t\!-\!v\!-\!1}^{t\!-\!u\!-\!1}\left|y^{t\!-\!u\!-\!1}\right.\!\right)
        {\rm Pr}\!\left(\!x_t\!\left|x_{t\!-\!v}^{t\!-\!1},\!s_{t\!-\!v\!-\!1}^{t\!-\!u\!-\!1},\!y^{t\!-\!u\!-\!1}\right.\!\right)\nonumber\\
    & & \:\:\:\:\:\:\:\times\;{\rm Pr}\!\left(y_{t\!-\!u}\left|x_{t\!-\!u},\!s_{t\!-\!u\!-\!1}\right.\!\right)
        {\rm Pr}\left(s_{t\!-\!u}\!\left|s_{t\!-\!u\!-\!1}\right.\!\right) \nonumber\\
    & & \:\:\:\:\:\:\:\times\;{\rm Pr}\!\left(y_{t\!-\!u+1}^t\!\left|x_{t\!-\!v}^t,\!s_{t\!-\!v\!-\!1}^{t\!-\!u}\right.\!\right)\nonumber\\
    &\stackrel{\rm (b)}{=}& \sum_{s_{t-v}^{t-u}}
        \alpha_{t\!-\!1}\!\!\left(\!x_{t\!-\!v}^{t\!-\!1},\!s_{t\!-\!v\!-\!1}^{t\!-\!u\!-\!1}\!\right)
        {\rm Pr}\!\left(\!x_t\!\left|x_{t\!-\!v}^{t\!-\!1},\!s_{t\!-\!v\!-\!1}^{t\!-\!u\!-\!1},\!\underline{\alpha}_{t-1}\right.\!\right)\nonumber\\
    & & \:\:\:\:\:\:\:\times\;{\rm Pr}\!\left(y_{t\!-\!u}\left|x_{t\!-\!u},\!s_{t\!-\!u\!-\!1}\right.\!\right)
        {\rm Pr}\left(s_{t\!-\!u}\!\left|s_{t\!-\!u\!-\!1}\right.\!\right)\nonumber\\
    & & \:\:\:\:\:\:\:\times\;{\rm Pr}\!\left(y_{t\!-\!u+1}^t\!\left|x_{t\!-\!v}^t,\!s_{t\!-\!v\!-\!1}^{t\!-\!u}\right.\!\right)
\end{eqnarray}
where equality (a) results from Proposition~\ref{Propnewsource} and
the assumption $u\leq v$, and equality (b) results directly from the definition of the source
set $\mathcal{P}'_v(u,u)$. Similar to equation~(\ref{eqnY}) as shown in Appendix B,
we can prove that the conditional probability ${\rm Pr}\!\left(y_{t-u+1}^t\left|x_{t-v}^t,s_{t-v-1}^{t-u}\right.\!\right)$
is completely determined by the channel law. Therefore, equalities in~(\ref{eqnIleftprob}) indicate
that the joint conditional probability mass function on the
left-hand side of~(\ref{eqnIleftprob}) is not sensitive to the
vector $y^{t-u-1}$~(that appears in the conditioning clause) but
to its induced variable $\underline{\alpha}_{t-1}$.
This implies that
\begin{eqnarray}\label{eqnIYytoalpha}
    \lefteqn{I\!\left(X_{t-v}^t,S_{t-v-1};Y_t\!\left|Y_{t-u+1}^{t-1},y_{t-u},y^{t-u-1}\right.\right)}\hspace{1.0cm} \nonumber\\
    &= I\!\left(X_{t-v}^t,S_{t-v-1};Y_t\!\left|Y_{t-u+1}^{t-1},y_{t-u},\underline{\alpha}_{t-1}\right.\right)
\end{eqnarray}
of which the right-hand side is a function of $\underline{\alpha}_{t-1}$,
$\left\{{\rm Pr}\!\left(x_{t}\!\left|x_{t-v}^{t-1},s_{t-v-1}^{t-u-1},\underline{\alpha}_{t-1}\right.\!\right)\right\}$ and $y_{t-u}$. For simplicity, we introduce the following notations
\begin{equation*}
    \begin{array}{ccc}
      {\rm p}_t(\underline{\alpha}_{t-1}) & \stackrel{\Delta}{=} &
        \left\{{\rm Pr}\!\left(x_{t}\!\left|x_{t-v}^{t-1},s_{t-v-1}^{t-u-1},\underline{\alpha}_{t-1}\right.\!\right)\right\} \\
      {\rm p}_t & \stackrel{\Delta}{=} & \left\{{\rm p}_t(\underline{\alpha}_{t-1}): \underline{\alpha}_{t-1} \in \mathcal{A}\right\}
    \end{array}.
\end{equation*}
Obviously, for $\underline{\alpha}_{t-1} \in \mathcal{A}$, the quantity ${\rm p}_t(\underline{\alpha}_{t-1})$ is a transition probability matrix of size $M \times |\mathcal{X}|$. Let $\mathscr{P}$ be the collection of all possible transition probability matrices. Both of the sets
$\mathcal{A}$ and $\mathscr{P}$ are bounded and closed, and hence compact.
Moreover, $\left\{\{{\rm p}_t\}_{t=1}^{\infty}\right\} = \{({\rm p}_1, {\rm p}_2, \cdots)\} = \mathcal{P}'_v(u,u)$.
Then the right-hand side of~(\ref{eqnIYytoalpha}) is a function that can be denoted by
\begin{eqnarray}\label{eqnReward}
    \lefteqn{g\left(\underline{\alpha}_{t-1}, {\rm p}_t(\underline{\alpha}_{t-1}), y_{t-u}\right)}\hspace{1.0cm} \nonumber\\
    &\stackrel{\Delta}{=} I\!\left(X_{t-v}^t,S_{t-v-1};Y_t\!\left|Y_{t-u+1}^{t-1},y_{t-u},\underline{\alpha}_{t-1}\right.\right).
\end{eqnarray}
Therefore, we can rewrite the directed information rate $\mathcal{I}_v(X,S\rightarrow Y)$ in (\ref{rate1}) as
\begin{eqnarray}\label{rate3too}
    \lefteqn{\mathcal{I}_v\!\left(X,S\rightarrow Y\right)} \nonumber\\
    &=&\!\!\!\liminf_{N\!\rightarrow\!\infty}\!\frac{1}{N}
       \sum_{t=1}^{N}I\!\left(X_{t-v}^t,S_{t-v-1};Y_t\!\left|Y_{t-u+1}^{t-1},Y^{t-u}\right.\!\right)\nonumber\\
    &=&\!\!\!\liminf_{N\!\rightarrow\!\infty}\!\frac{1}{N}\mathbf{E}\!
       \left[\sum_{t=1}^{N}g\left(\underline{\alpha}_{t-1},{\rm p}_t(\underline{\alpha}_{t-1}),Y_{t-u}\right)\!\right]\!.
\end{eqnarray}

Substituting~(\ref{rate3too}) into~(\ref{rate3}), we can see that the problem to find the upper bound
$\mathcal{I}^{*}_{FB,SI}(u,v)$ is equivalent to the following discrete-time
{\em infinite-horizon average reward per stage stochastic control
problem}~(ARSCP)~\cite{Bertsekas05,Bertsekas07,Arapostathis93},
which is referred to as {\bf Problem~A} for convenience.

{\bf Problem~A.}
The ARSCP is specified as follows.
\begin{enumerate}
    \item The {\em stochastic control system} of the problem is characterized by
        \begin{equation}\label{eqnSyst}
         \underline{\alpha}_{t}\!=\!
         F_{BCJR}\!\left(\underline{\alpha}_{t-1},{\rm p}_t(\underline{\alpha}_{t-1}),y_{t-u}\right)
        \end{equation}
        where
        \begin{enumerate}
            \item $\underline{\alpha}_{t-1}$ is the {\em state} and $\mathcal{A}$ is the {\em state space},
                i.e., $\underline{\alpha}_{t-1} \in \mathcal{A}$ and $\underline{\alpha}_{t} \in \mathcal{A}$;
            \item ${\rm p}_t$ is the function~(or {\em policy}) that maps the state space $\mathcal{A}$ to
                the {\em action} space $\mathscr{P}$, and ${\rm p}_t(\underline{\alpha}_{t-1})\in\mathscr{P}$ is the
                {\em policy}~(or {\em control}) when the state is $\underline{\alpha}_{t-1}$;
            \item $y_{t-u}$ is the \emph{disturbance}.
        \end{enumerate}

    \item The \emph{reward function} at stage $t$ is $g\!\left(\underline{\alpha}_{t-1}, {\rm p}_t(\underline{\alpha}_{t-1}), y_{t-u}\right)$. For convenience, we define the {\em expected reward function} at stage $t$ as
        \begin{eqnarray}\label{eqnEpReward}
            \lefteqn{g\left(\underline{\alpha}_{t-1}, {\rm p}_t(\underline{\alpha}_{t-1})\right)} \hspace{0.5cm}\nonumber\\
            &=& {\bf E}
            \left[g\left(\underline{\alpha}_{t-1}, {\rm p}_t(\underline{\alpha}_{t-1}), Y_{t-u}\right)\right] \nonumber\\
            &=& I\left(X_{t-v}^t,S_{t-v-1};Y_t\left|Y_{t-u}^{t-1},\underline{\alpha}_{t-1}\right.\right).
        \end{eqnarray}

    \item The objective of this problem is to find the {\em maximum average reward per stage}, i.e.,
        \begin{equation}\label{eqnProbA}
            \mathcal{I}(\underline{\alpha}_0\!)
            =\!\! \sup\limits_{\{{\rm p}_t\} \in \mathcal{P}'_v(u,u)} \!\!\!\!\!\!\mathcal{I}(\underline{\alpha}_0, \{{\rm p}_t\})
            ~{\rm for~all}~\underline{\alpha}_0 \!\!\in\! \mathcal{A}
        \end{equation}
        where $\mathcal{I}(\underline{\alpha}_0, \{{\rm p}_t\})$ is
        the average reward associated with the initial state $\underline{\alpha}_0$ and the sequence of
        policies $\{{\rm p}_t\}$
        \begin{equation}\label{eqnProbAaverRew}
            \mathcal{I}\!\left(\underline{\alpha}_0, \{{\rm p}_t\}\!\right)
            \!=\! \liminf_{N\rightarrow\infty}\!\frac{1}{N}\mathbf{E}\!
               \left[\sum\limits_{t=1}^{N}g\!\left(\underline{\alpha}_{t-1}, {\rm p}_t(\underline{\alpha}_{t-1}), Y_{t-u}\right)\!\right]\!\!.
        \end{equation}
\end{enumerate}

For the stochastic dynamic system~(\ref{eqnSyst}) of {\bf Problem A}, we have following two propositions.

\begin{proposition}\label{PropYtu}
The system disturbance variable $Y_{t-u}$ is characterized by a conditional probability distribution
that depends explicitly on the system
state $\underline{\alpha}_{t-1}$ and the policy
$\left\{{\rm Pr}\!\left(x_t\!\left|x_{t-v}^{t-1},s_{t-v-1}^{t-u-1},\underline{\alpha}_{t-1}\right.\!\right)\right\}$~\big(i.e., ${\rm p}_t(\underline{\alpha}_{t-1})$\big).
\end{proposition}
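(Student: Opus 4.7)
The plan is to exhibit an explicit closed-form expression for $\mathrm{Pr}(y_{t-u}\mid y^{t-u-1})$ (and, implicitly, for the full conditional $\mathrm{Pr}(y_{t-u}\mid x^{t-1},s_0^{t-u-1},y^{t-u-1})$ consistent with sources in $\mathcal{P}'_v(u,u)$), and to read off that every factor in it is determined either by the channel law or by the pair $(\underline{\alpha}_{t-1},\,{\rm p}_t(\underline{\alpha}_{t-1}))$. Concretely, I would start from the marginalization
\begin{equation*}
 \mathrm{Pr}(y_{t-u}\mid y^{t-u-1}) \;=\!\!\!\!\sum_{x_{t-v}^{t},\,s_{t-v-1}^{t-u}}\!\!\!\!\mathrm{Pr}(x_{t-v}^{t},\,s_{t-v-1}^{t-u},\,y_{t-u}\mid y^{t-u-1}),
\end{equation*}
and then invoke the factorization already established in (\ref{bcjr1}) (which in turn rests on Proposition~\ref{Propnewsource}, valid because $u\leq v$). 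This rewrites the summand as the product of $\alpha_{t-1}(x_{t-v}^{t-1},s_{t-v-1}^{t-u-1})$, the policy entry $\mathrm{Pr}(x_t\mid x_{t-v}^{t-1},s_{t-v-1}^{t-u-1},\underline{\alpha}_{t-1})$, the channel output law $\mathrm{Pr}(y_{t-u}\mid x_{t-u},s_{t-u-1})$, and the state transition $\mathrm{Pr}(s_{t-u}\mid s_{t-u-1})$. The first two factors are components of $\underline{\alpha}_{t-1}$ and ${\rm p}_t(\underline{\alpha}_{t-1})$ respectively, while the remaining two are intrinsic to the noncontrollable FSC. Hence the entire sum is a function of $(\underline{\alpha}_{t-1},\,{\rm p}_t(\underline{\alpha}_{t-1}))$ only, with no residual dependence on the raw sequence $y^{t-u-1}$.

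It is worth handling the two regimes $u\ge 1$ and $u=0$ separately to make the statement transparent. For $u\ge 1$, the output $y_{t-u}$ depends only on coordinates $(x_{t-u},s_{t-u-1})$ that lie inside the window indexed by $\alpha_{t-1}$; the sum over $x_t$ collapses the policy factor to unity, and the proposition holds with the policy dependence being formally trivial. For $u=0$, we have $y_{t-u}=y_t$, and now $y_t$ is generated by the channel from $(x_t,s_{t-1})$, where $s_{t-1}$ is carried in $\underline{\alpha}_{t-1}$ (since the window then reaches up to index $t-1$) and $x_t$ is drawn by the current policy ${\rm p}_t(\underline{\alpha}_{t-1})$; the policy dependence is therefore genuine in this regime.

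The only real obstacle is bookkeeping: verifying that the index ranges line up so that $(x_{t-u},s_{t-u-1})$ sits inside the support of $\alpha_{t-1}$ in both regimes, and checking that after applying Proposition~\ref{Propnewsource} no conditioning on $y^{t-u-1}$ survives outside of what is already summarized by $\underline{\alpha}_{t-1}$. Once the index check is made, the result is a direct consequence of the BCJR factorization recorded in (\ref{bcjr0})--(\ref{bcjr1}) together with the definition of $\mathcal{P}'_v(u,u)$, which forces the transition probability to depend on past outputs only through $\underline{\alpha}_{t-1}$.
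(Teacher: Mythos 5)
Your proposal is correct and follows the same route as the paper: marginalize $\mathrm{Pr}(y_{t-u}\mid y^{t-u-1})$ over $(x_{t-v}^{t},s_{t-v-1}^{t-u})$, apply the factorization from~(\ref{bcjr1}) (which rests on Proposition~\ref{Propnewsource} and $u\le v$), and observe that each surviving factor is a component of $\underline{\alpha}_{t-1}$, an entry of ${\rm p}_t(\underline{\alpha}_{t-1})$, or part of the channel law. Your extra observation that for $u\ge 1$ the policy factor sums out (so the disturbance law depends only on $\underline{\alpha}_{t-1}$, the policy dependence being vacuous) while for $u=0$ it is genuine is a nice refinement not in the paper, though your phrasing that ``$(x_{t-u},s_{t-u-1})$ sits inside the support of $\alpha_{t-1}$ in both regimes'' is slightly off for $u=0$, where $x_{t-u}=x_t$ is not part of the window that $\underline{\alpha}_{t-1}$ indexes but rather the symbol produced by the current policy.
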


\begin{IEEEproof}
Given the system state $\underline{\alpha}_{t-1}$ and the policy $\left\{{\rm Pr}\!\left(x_t\!\left|x_{t-v}^{t-1},s_{t-v-1}^{t-u-1},\underline{\alpha}_{t-1}\right.\!\right)\right\}$,
the probability mass function of the system disturbance can be explicitly determined as
\begin{eqnarray}\label{eqnytu}
\lefteqn{{\rm Pr}\!\left(y_{t\!-\!u}\!\left|\underline{\alpha}_{t\!-\!1},\!
\left\{\!{\rm Pr}\!\left(x_t\!\left|x_{t\!-\!v}^{t\!-\!1},s_{t\!-\!v\!-\!1}^{t\!-\!u\!-\!1},
\underline{\alpha}_{t\!-\!1}\right.\!\right)\!\right\}\right.\!\right)}\nonumber\\
&=& \!\!\!\!\!\!\!\!\sum_{x_{t\!-\!v}^{t},s_{t\!-\!v\!-\!1}^{t\!-\!u}}\!\!\!\!\!
     {\rm Pr}\!\left(x_{t\!-\!v}^{t},s_{t\!-\!v\!-\!1}^{t\!-\!u},y_{t\!-\!u}\!\left|
     \underline{\alpha}_{t\!-\!1},\!\left\{\!{\rm Pr}\!\left(x_t\!\left|x_{t\!-\!v}^{t\!-\!1},s_{t\!-\!v-\!1}^{t\!-\!u\!-\!1},
     \underline{\alpha}_{t\!-\!1}\right.\!\right)\!\right\}\right.\!\right)\nonumber\\
 &\stackrel{\textrm{(a)}}{=}&\!\!\!\!\!\!\!\!\sum_{x_{t\!-\!v}^{t},s_{t\!-\!v\!-\!1}^{t\!-\!u}}\!\!\!\!\!
     \alpha_{t\!-\!1}\!\!\left(\!x_{t\!-\!v}^{t\!-\!1},\!s_{t\!-\!v\!-\!1}^{t\!-\!u\!-\!1}\!\right)
     {\rm Pr}\!\left(\!x_t\!\left|x_{t\!-\!v}^{t\!-\!1},\!s_{t\!-\!v\!-\!1}^{t\!-\!u\!-\!1},\!\underline{\alpha}_{t\!-\!1}\right.\!\right) \nonumber\\
  && \;\;\;\;\;\times\;{\rm Pr}\!\left(y_{t\!-\!u}\left|x_{t\!-\!u},\!s_{t\!-\!u\!-\!\!1}\right.\!\right)
    {\rm Pr}\!\left(s_{t\!-\!u}\!\left|s_{t\!-\!u\!-\!\!1}\right.\!\right)
\end{eqnarray}
where equality (a) follows from Proposition~\ref{Propnewsource} and the assumption $u\!\leq\!v$.
\end{IEEEproof}

\begin{proposition}\label{PropAMarkov}
The state process $\underline{A}_t$ with realization $\underline{\alpha}_t$ is a Markov process.
\end{proposition}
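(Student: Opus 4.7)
The plan is to establish the Markov property by combining the deterministic BCJR recursion~(\ref{eqnSyst}) with the fact, already leveraged throughout Section~\ref{sec5.1}, that the disturbance $Y_{t-u}$ has a conditional distribution that depends on the past only through $\underline{\alpha}_{t-1}$ when the source is drawn from $\mathcal{P}'_v(u,u)$. The target identity to be proved is
\begin{equation*}
   \Pr(\underline{\alpha}_t \mid \underline{\alpha}_0, \underline{\alpha}_1, \ldots, \underline{\alpha}_{t-1})
   = \Pr(\underline{\alpha}_t \mid \underline{\alpha}_{t-1}).
\end{equation*}

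First, I would exploit the recursion~(\ref{eqnSyst}), which says that $\underline{\alpha}_t$ is a \emph{deterministic} function of $\underline{\alpha}_{t-1}$, the policy ${\rm p}_t(\underline{\alpha}_{t-1})$ (itself a function of $\underline{\alpha}_{t-1}$), and the disturbance $Y_{t-u}$. Consequently,
\begin{equation*}
   \Pr(\underline{\alpha}_t \mid \underline{\alpha}_0^{t-1})
   = \!\!\sum_{y_{t-u}} \mathbf{1}\!\left\{F_{\rm BCJR}\!\left(\underline{\alpha}_{t-1},{\rm p}_t(\underline{\alpha}_{t-1}),y_{t-u}\right)\!=\!\underline{\alpha}_t\right\}\!
   \Pr(y_{t-u} \mid \underline{\alpha}_0^{t-1}),
\end{equation*}
so the task reduces to showing $\Pr(y_{t-u}\mid\underline{\alpha}_0^{t-1}) = \Pr(y_{t-u}\mid\underline{\alpha}_{t-1})$.

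Second, I would invoke the computation already carried out in~(\ref{eqnytu}) of Proposition~\ref{PropYtu}: summing~(\ref{eqnIleftprob}) over $x_{t-v}^t$ and $s_{t-v-1}^{t-u}$ shows that the one-step predictive law $\Pr(y_{t-u}\mid y^{t-u-1})$ is actually a function of $\underline{\alpha}_{t-1}$ alone (together with the channel law and the policy evaluated at $\underline{\alpha}_{t-1}$). Since each $\underline{\alpha}_j$ for $j \leq t-1$ is a deterministic function of $y^{j-u}$, the tuple $\underline{\alpha}_0^{t-1}$ is measurable with respect to $y^{t-u-1}$. Applying the tower property,
\begin{equation*}
   \Pr(y_{t-u}\mid\underline{\alpha}_0^{t-1})
   = \mathbf{E}\!\left[\Pr(y_{t-u}\mid y^{t-u-1}) \,\middle|\, \underline{\alpha}_0^{t-1}\right]
   = \Pr(y_{t-u}\mid\underline{\alpha}_{t-1}),
\end{equation*}
where the last equality holds because the inner conditional probability is already a function of $\underline{\alpha}_{t-1}$. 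Substituting this back into the first display yields the Markov property, and the same argument applied conditionally on $\underline{\alpha}_{t-1}$ alone identifies the transition kernel.

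The main subtlety I expect to have to handle carefully is the measurability argument that $\underline{\alpha}_0^{t-1}$ is a function of $y^{t-u-1}$: this uses the fact that the source lies in $\mathcal{P}'_v(u,u)$, so that the policies entering~(\ref{bcjr}) at every earlier stage are themselves functions of the corresponding $\underline{\alpha}_{j-1}$, making the entire recursion deterministic once the output sequence is fixed. Everything else is a clean application of the tower property plus the reduction in~(\ref{eqnytu}); no additional calculation beyond what is already in Section~\ref{sec5.1} is required.
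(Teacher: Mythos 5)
Your proposal is correct and follows essentially the same route as the paper's proof, which simply cites equation~(\ref{eqnSyst}) and Proposition~\ref{PropYtu} as jointly implying the Markov property. You fill in the measurability and tower-property details that the paper leaves implicit, but the two key ingredients — the deterministic BCJR recursion for the state update and the fact that the disturbance's conditional law depends on the past only through $\underline{\alpha}_{t-1}$ — are exactly the ones the paper invokes.
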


\begin{IEEEproof}
Equation~(\ref{eqnSyst}) and Proposition~\ref{PropYtu} imply that,
given the prior state $\underline{A}_{t-1}$, the current state
$\underline{A}_t$ is independent of the early states $\underline{A}_0^{t-2}$.
Hence, $\underline{A}_t$ is a Markov process.
\end{IEEEproof}

\begin{proposition}\label{PropApprox}
The reward function $g\!\left(\underline{\alpha}_{t\!-\!1}, {\rm p}_t(\underline{\alpha}_{t\!-\!1}\!), y_{t\!-\!u}\!\right)$
is uniformly continuous over $\mathcal{A} \times \mathscr{P}$.
\end{proposition}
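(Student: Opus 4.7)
The plan is to observe that $g$ is a conditional mutual information, which—once the conditioning on the realizations $\underline{\alpha}_{t-1}$ and $y_{t-u}$ is unrolled—becomes a finite sum of terms of the form $p\log p$, where each $p$ is a polynomial in the coordinates of $\underline{\alpha}_{t-1}$ and of the transition matrix ${\rm p}_t(\underline{\alpha}_{t-1})$. Since $\mathcal{A}\times\mathscr{P}$ is compact and $y_{t-u}$ ranges over the finite alphabet $\mathcal{Y}$, continuity will automatically upgrade to uniform continuity.

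First I would expand
\begin{equation*}
 g\bigl(\underline{\alpha}_{t-1},{\rm p}_t(\underline{\alpha}_{t-1}),y_{t-u}\bigr)
 = \sum_{x_{t-v}^t,s_{t-v-1},y_{t-u+1}^t} q(\cdot)\log\frac{q(Y_t\mid \cdot)}{q(Y_t\mid Y_{t-u+1}^{t-1},y_{t-u},\underline{\alpha}_{t-1})},
\end{equation*}
where each probability $q(\cdot)$ inside the sum is obtained by marginalizing the joint law derived in~(\ref{eqnIleftprob}) (and its natural extension to $y_{t-u+1}^{t}$). By Proposition~\ref{Propnewsource}, the channel transitions ${\rm Pr}(y_\tau\!\mid\!x_\tau,s_{\tau-1}){\rm Pr}(s_\tau\!\mid\!s_{\tau-1})$ are constants determined solely by the channel law, and the only free quantities are the entries $\alpha_{t-1}(\ell)$ and the entries of ${\rm p}_t(\underline{\alpha}_{t-1})$. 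Thus every joint/conditional probability appearing in the expansion of $g$ is a polynomial (in the numerator) or a ratio of polynomials (for conditional probabilities) in these finitely many free variables, with coefficients fixed by the channel.

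Next I would invoke the elementary fact that the map $\phi(x)=x\log x$, extended by $\phi(0)=0$, is continuous on $[0,1]$; consequently each summand $q(\cdot)\log q(\cdot\mid\cdot)$ is continuous on $\mathcal{A}\times\mathscr{P}$, including at the boundary where some components vanish. A finite sum of continuous functions is continuous, so $g(\cdot,\cdot,y_{t-u})$ is continuous on $\mathcal{A}\times\mathscr{P}$ for each fixed $y_{t-u}\in\mathcal{Y}$. Since $\mathcal{A}$ is the probability simplex in $\mathbb{R}^M$ and $\mathscr{P}$ is a finite product of simplices (each row of the $M\times|\mathcal{X}|$ transition matrix lies in a simplex), both are compact and hence so is $\mathcal{A}\times\mathscr{P}$. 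By the Heine–Cantor theorem, $g(\cdot,\cdot,y_{t-u})$ is uniformly continuous on $\mathcal{A}\times\mathscr{P}$, and because $\mathcal{Y}$ is finite, the maximum of the corresponding finitely many moduli of continuity serves as a common modulus, giving uniform continuity of $g$ jointly.

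The main obstacle is the behavior of the $\log$ at the boundary: the ratios $q(Y_t\mid\cdot)/q(Y_t\mid Y_{t-u+1}^{t-1},y_{t-u},\underline{\alpha}_{t-1})$ can blow up when denominators vanish. The clean way around this is to avoid writing $g$ as $\sum p\log(p/q)$ and instead split it into a difference of two entropy-type sums $-\sum p\log p + \sum p'\log p'$, each of which is a finite sum of $\phi$-values of polynomials in $(\underline{\alpha}_{t-1},{\rm p}_t(\underline{\alpha}_{t-1}))$; since $\phi$ is continuous on the closed interval $[0,1]$, no singularities arise, and the argument above goes through without incident.
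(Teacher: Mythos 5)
Your overall strategy — write $g$ as a finite sum of $\phi$-type terms, invoke continuity of $\phi(x)=x\log x$ on $[0,1]$, then use compactness of $\mathcal{A}\times\mathscr{P}$ and Heine--Cantor (plus finiteness of $\mathcal{Y}$) to upgrade to uniform continuity — is precisely the paper's approach, except that the paper's proof is a single sentence that asserts continuity without any argument, whereas you actually try to establish it. So you are filling a genuine gap in the paper's exposition, which is worthwhile.

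However, your fix for the boundary issue does not close it. The function $g(\underline{\alpha},{\rm p},y_{t-u})=I\bigl(X_{t-v}^t,S_{t-v-1};Y_t\mid Y_{t-u+1}^{t-1},y_{t-u},\underline{\alpha}\bigr)$ conditions on the \emph{realization} $y_{t-u}$, so the probability measure appearing in every entropy term is ${\rm Pr}(\cdot\mid y_{t-u},\underline{\alpha},{\rm p})$, which equals ${\rm Pr}(\cdot,y_{t-u}\mid\underline{\alpha},{\rm p})/{\rm Pr}(y_{t-u}\mid\underline{\alpha},{\rm p})$. The numerator is indeed a polynomial in $(\underline{\alpha},{\rm p})$ by~(\ref{eqnIleftprob}), but the denominator is the polynomial from~(\ref{eqnytu}), so the conditional probabilities are \emph{ratios} of polynomials, not polynomials. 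Consequently the terms in your ``$-\sum p\log p + \sum p'\log p'$'' decomposition are $\phi$-values of rational functions, and if you extract the common normalization you find $g = F(\underline{\alpha},{\rm p})/{\rm Pr}(y_{t-u}\mid\underline{\alpha},{\rm p})$ with $F$ continuous; continuity of $g$ near a zero of the denominator is not automatic for rational functions and requires either (a) a positivity hypothesis on the channel (e.g.\ ${\rm Pr}(y\mid x,s)>0$ for all triples, which via~(\ref{eqnytu}) bounds ${\rm Pr}(y_{t-u}\mid\underline{\alpha},{\rm p})$ away from zero on all of $\mathcal{A}\times\mathscr{P}$), or (b) a separate argument that the singularity is removable and the limit agrees with a chosen extension. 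Your polynomial/$\phi$ argument works cleanly only for the \emph{expected} reward $g(\underline{\alpha},{\rm p})=I\bigl(X_{t-v}^t,S_{t-v-1};Y_t\mid Y_{t-u}^{t-1},\underline{\alpha}\bigr)$ of~(\ref{eqnEpReward}), where $y_{t-u}$ is averaged out and all entropies are of unconditional (given $\underline{\alpha}$) joint laws; it does not, as written, prove continuity of the per-disturbance reward stated in the proposition.
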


\begin{IEEEproof}
This proposition can be proved by the compactness of the set $\mathcal{A} \times \mathscr{P}$ and the continuity of the reward function.
\end{IEEEproof}

In the average reward problem, i.e., {\bf Problem~A}, both the state $\underline{\alpha}$ and the policy ${\rm p}_t(\underline{\alpha})$ are continuous, which causes difficulties in theoretical analysis as well as computation. Fortunately, the uniform continuity of the reward function make it reasonable to restrict the reward function on discretized~(finite) state space and action space. This approach causes a loss at most $\varepsilon$ as long as the quantization is fine enough~\cite[Sec.~6.6]{Bertsekas05}\footnote{This holds for any continuous function $f(x)$ defined on a compact set $\Omega$.
Specifically, from the uniform continuity, for any $\varepsilon > 0$, there exists $\delta > 0$ such that $\|f(x_1) - f(x_2)\| \leq \varepsilon$ as long as $\|x_1 - x_2\| \leq \delta$, see~\cite{Browder96}. Now, we may take a quantizer $Q_{\delta}(\cdot)$ such that $\|x - Q_{\delta}(x)\| \leq \delta$.
Let $x^*$ and $\hat{x}$ be the solutions of the original problem $\max_{\Omega} f(x)$ and the discretized version $\max_{Q_{\delta}(\Omega)} f(x)$, respectively.
Then we have $f(\hat{x}) \geq f(Q_{\delta}(x^*)) \geq f(x^*) - \varepsilon$.}. That is, {\bf Problem~A} can be solved approximately~(resulting in an $\varepsilon$-optimal value) by solving its discretized version, {\bf Problem~B}.

{\bf Problem~B.} Let $\mathcal{Q}_{\delta}(\cdot)$ be a quantizer of the state set $\mathcal{A}$ which results in a finite-state space $\hat{\mathcal{A}} \subset \mathcal{A}$. Specifically, for any state $\underline{\alpha} \in \mathcal{A}$, there exists a quantized state $\hat{\underline{\alpha}} \in \hat{\mathcal{A}}$ such that the Euclidean distance satisfies $\|\underline{\alpha} - \hat{\underline{\alpha}} \| \leq \delta$ where $\delta$ is the designated quantization parameter. Similarly, let $\mathcal{Q}_{\xi}(\cdot)$ be the quantizer of the action space $\mathscr{P}$ and the resulting finite set be denoted by $\hat{\mathscr{P}}$.
The finite-state and finite-action ARSCP is specified as follows.
\begin{enumerate}
    \item The stochastic control system of this problem is
        \begin{equation}\label{eqnSystB}
            \hat{\underline{\alpha}}_t = \mathcal{Q}_{\delta} \left(F_{BCJR}(\hat{\underline{\alpha}}_{t-1}, \hat{\rm p}_t(\hat{\underline{\alpha}}_{t-1}), y_{t-u})\right)
        \end{equation}
        where
        \begin{enumerate}
            \item $\hat{\underline{\alpha}}_{t-1}$ is the state and $\hat{\mathcal{A}}$ is the state space;
            \item $\hat{{\rm p}}_t$ is the function~(or policy) that maps the state space $\hat{\mathcal{A}}$ to the action space $\hat{\mathscr{P}}$, and $\hat{{\rm p}}_t(\hat{\underline{\alpha}}_{t-1}) \in \hat{\mathscr{P}}$ is the policy when the state is $\hat{\underline{\alpha}}_{t-1}$;
            \item $y_{t-u}$ is the disturbance.
        \end{enumerate}

    \item The reward function at stage $t$ is $g\!\left(\hat{\underline{\alpha}}_{t-1}, \hat{\rm p}_t(\hat{\underline{\alpha}}_{t-1}), y_{t-u}\right)$.

    \item The objective of this problem is to find the maximum average reward per stage, i.e.,
        \begin{equation}\label{eqnProbB}
            \mathcal{I}(\hat{\underline{\alpha}}_0)
            \!=\!\! \sup_{\hat{\mathcal{P}}'_v(u,u)}\!\! \mathcal{I}\!\left(\hat{\underline{\alpha}}_0, \{\hat{\rm p}_t\}\right)
            ~{\rm for~all}~\hat{\underline{\alpha}}_0 \in \hat{\mathcal{A}}
        \end{equation}
        where
        \begin{itemize}
            \item $\hat{\mathcal{P}}'_v(u,u)$ is the collection of all policy sequences $\{\hat{\rm p}_t\}_{t=1}^{\infty}$ and is regarded as a discretized version of the source set $\mathcal{P}'_v(u,u)$;
            \item $\mathcal{I}(\hat{\underline{\alpha}}_0, \{\hat{\rm p}_t\})$ is the average reward associated with the initial state $\hat{\underline{\alpha}}_0$ and the sequence of policies $\{\hat{\rm p}_t\}$
                \begin{equation}\label{eqnProbBaverRew}
                    \!\!\!\!\!\!\!\mathcal{I}\!\left(\hat{\underline{\alpha}}_0,\!\{\hat{\rm p}_t\}\!\right)
                    \!=\! \liminf_{N\rightarrow\infty}\!\frac{1}{N}\mathbf{E}\!
                       \left[\sum\limits_{t=1}^{N}g\!\left(\hat{\underline{\alpha}}_{t-1}, \hat{\rm p}_t(\hat{\underline{\alpha}}_{t-1}\!),\!Y_{t-u}\right)\!\right]\!\!.
                \end{equation}
        \end{itemize}
\end{enumerate}

The {\em pair of coupled optimality equations}~\cite{Bertsekas05,Puterman94} of {\bf Problem~B} are
\begin{equation}\label{eqnOptEqn1}
    G^*(\underline{\alpha}) = \max_{{\rm p}(\underline{\alpha}) \in \hat{\mathscr{P}}}
    {\bf E}_{\underline{A}'|\underline{\alpha}}\left[G^*(\underline{A}')\right],
    ~{\rm for~any}~\underline{\alpha} \in \hat{\mathcal{A}}
\end{equation}
and
\begin{eqnarray}\label{eqnOptEqn2}
    \lefteqn{\!\!\!\!\!\!\!\!G^*\!(\underline{\alpha}) + J^*\!(\underline{\alpha})}\nonumber\\
    &\!\!\!\!\!\!\!\!\!\!\!\!\!\!\!\!\!\!\!\!\!\! = &\!\!\!\!\!\!\!\!\!\!\!\!\!\!\!\!
    \max\limits_{{\rm p}(\underline{\alpha}) \in \bar{\mathscr{P}}(\underline{\alpha})} \!\!\!
    \left\{g\!\left(\underline{\alpha}, {\rm p}(\underline{\alpha})\!\right)
    \!+\! \mathbf{E}_{\underline{A}'|\underline{\alpha}} \!\!\left[J^*\!(\underline{A}')\!\right]\!\right\},\!
    ~{\rm for~any}~\underline{\alpha} \!\in\! \hat{\mathcal{A}}
\end{eqnarray}
where $\bar{\mathscr{P}}(\underline{\alpha}) = \left\{{\rm p}(\underline{\alpha}): {\rm p}(\underline{\alpha}) \in \arg\max_{\hat{\mathscr{P}}} {\bf E}_{\underline{A}'|\underline{\alpha}}\left[{G}^*(\underline{A}')\right] \right\}$
is the set of policies attaining the maximum in equation~(\ref{eqnOptEqn1}).
The pair of coupled optimality equations can also be represented by vectors as
\begin{equation}\label{eqnOptEqnVec1}
    G^* = \max_{{\rm p} \in \mathcal{D}} L_{{\rm p}} G^*
\end{equation}
and
\begin{equation}\label{eqnOptEqnVec2}
    G^* + J^* = \max_{{\rm p} \in \bar{\mathcal{D}}}
    \left\{g + L_{{\rm p}} J^* \right\}
\end{equation}
where $\mathcal{D}$ is the set of all possible policies, i.e., $\mathcal{D} = \left\{{\rm p} = \left\{{\rm p}(\underline{\alpha}): \underline{\alpha} \in \hat{\mathcal{A}}\right\}\right\}$, and $\bar{\mathcal{D}}$ is the set of policies attaining the maximum in~(\ref{eqnOptEqnVec1}), i.e., $\bar{\mathcal{D}} = \left\{{\rm p} \in \mathcal{D}: {\rm p} \in \arg\max L_{{\rm p}} G^* \right\}$, and $L_{{\rm p}} = \left[{\rm Pr}(\underline{\alpha}'|\underline{\alpha}, {\rm p}(\underline{\alpha})) \right]_{|\hat{\mathcal{A}}| \times |\hat{\mathcal{A}}|}$ is a transition matrix between states under the policy ${\rm p}$. The solution $(G^*,J^*)$ to the pair of coupled optimality equations is usually called the {\em gain-bias pair}~\cite{Bertsekas07,Puterman94} with $G^*$ being the optimal average reward vector. The policy that achieves the maxima in the pair of coupled optimality equations is called the optimal policy.

{\bf Remark:} Depending on the choice of stationary policy, the Markov chain $\{\underline{A}_t \in \hat{\mathcal{A}}\}$ of {\bf Problem~B} may have different recurrent classes. Hence, {\bf Problem~B} is in general a {\em multi-chain} model~\cite{Bertsekas07}. The pair of coupled optimality equations of {\bf Problem~B} can be viewed as an analog to the Bellman equation for the {\em uni-chain} model~\cite{Bertsekas05,Bertsekas07}.

\begin{theorem}\label{ThemDP}
For {\bf Problem~B}, there exists a stationary policy that satisfies the pair of coupled optimality equations~(\ref{eqnOptEqn1}) and~(\ref{eqnOptEqn2}).
\end{theorem}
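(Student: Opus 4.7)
My plan is to reduce the theorem to a standard existence result for finite multi-chain average-reward Markov decision processes, and to identify what must be checked in order to invoke that result.

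First I would observe that, by construction, Problem~B has a finite state space $\hat{\mathcal{A}}$, a finite action space $\hat{\mathscr{P}}$, and a reward function $g(\hat{\underline{\alpha}}, \hat{\mathrm{p}}(\hat{\underline{\alpha}}), y_{t-u})$ that is a conditional mutual information and hence bounded by $\log\bigl(|\mathcal{X}|^{v+1}|\mathcal{S}|\bigr)$. Moreover, by Proposition~\ref{PropYtu}, the transition kernel $\mathrm{Pr}(\underline{\alpha}'|\underline{\alpha}, \mathrm{p}(\underline{\alpha}))$ induced on $\hat{\mathcal{A}}$ by equation~(\ref{eqnSystB}) is a well-defined stochastic matrix for every choice of stationary policy $\mathrm{p} \in \mathcal{D}$, and by Proposition~\ref{PropAMarkov} the resulting state process is Markov. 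Thus Problem~B fits exactly into the framework of finite-state, finite-action average-reward MDPs with stationary transition probabilities and bounded rewards treated in \cite{Bertsekas07,Puterman94}.

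Next I would invoke the existence theorem for the multi-chain model, for instance \cite[Ch.~4]{Bertsekas07} or \cite[Ch.~9]{Puterman94}, which asserts precisely that such a problem admits a stationary deterministic policy $\mathrm{p}^*$ and a gain-bias pair $(G^*, J^*)$ that jointly satisfy the two-level optimality system~(\ref{eqnOptEqn1})--(\ref{eqnOptEqn2}) (equivalently the vector form (\ref{eqnOptEqnVec1})--(\ref{eqnOptEqnVec2})). Because the remark preceding the theorem already acknowledges that the induced Markov chain may possess several recurrent classes, it is important to cite the multi-chain version rather than the uni-chain (Bellman-equation) version; the two-step structure of (\ref{eqnOptEqn1})--(\ref{eqnOptEqn2}) is exactly what is needed to handle this generality, with (\ref{eqnOptEqnVec1}) fixing the optimal gain $G^*$ over recurrent classes and (\ref{eqnOptEqnVec2}) then selecting among gain-optimal policies one that also maximizes the bias.

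If I had to exhibit the policy rather than merely cite it, I would use the constructive route of multi-chain policy iteration: starting from an arbitrary stationary policy $\mathrm{p}^{(0)}\in\mathcal{D}$, alternate the policy-evaluation step (solve the linear system for the gain and bias of $\mathrm{p}^{(k)}$) with the two-stage policy-improvement step (first improve $G$ via (\ref{eqnOptEqnVec1}), then, among $G$-optimal actions, improve $J$ via (\ref{eqnOptEqnVec2})). Finiteness of $\mathcal{D}$ together with the monotone lexicographic improvement of $(G,J)$ guarantees termination in finitely many iterations at a policy satisfying both equations. The only subtle point, and what I expect to be the main obstacle to write cleanly, is verifying that the policy-improvement step is well-defined at every iteration: one must check that the set $\bar{\mathscr{P}}(\underline{\alpha})$ of gain-maximizing actions is nonempty at each state (which follows from finiteness of $\hat{\mathscr{P}}$) and that the restricted maximization in (\ref{eqnOptEqnVec2}) is over this set, so that a single stationary policy satisfies both equations simultaneously. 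Once this is handled, the theorem follows.
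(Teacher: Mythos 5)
Your proposal is correct, but it takes a different route to the conclusion than the paper. The paper's actual proof does not cite the multi-chain existence theorem directly nor sketch policy iteration; instead it introduces the $\beta$-discounted version of {\bf Problem~B} restricted to stationary policies, invokes Bertsekas's Proposition~4.1.3 to obtain a stationary Blackwell optimal policy $\mathrm{p}^*$ (i.e., one that is simultaneously optimal for all discount factors $\beta$ sufficiently close to $1$), and then invokes Proposition~4.1.7 to conclude that this Blackwell optimal policy is also optimal over all policies for the average-reward multi-chain problem and satisfies the coupled optimality equations. Both arguments rest on the same underlying facts (finite state space $\hat{\mathcal{A}}$, finite action space $\hat{\mathscr{P}}$, bounded reward) and ultimately cite the same chapter of Bertsekas, but the paper's discounted-to-Blackwell detour makes explicit \emph{which} standard existence mechanism is being used, which is useful because the remark preceding the theorem flags the multi-chain issue and Blackwell optimality is precisely the tool that bypasses recurrent-class complications. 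Your direct citation of the multi-chain optimality-equation existence theorem is a valid shortcut that encapsulates the same content, and your policy-iteration sketch is a legitimate constructive alternative; your emphasis on the nonemptiness of $\bar{\mathscr{P}}(\underline{\alpha})$ and the lexicographic $(G,J)$ improvement is the right thing to watch for in that constructive argument, though it is not needed in the paper's Blackwell-based proof.
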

\begin{IEEEproof}
See Appendix~\ref{proofThemDP}.
\end{IEEEproof}

From Theorem~\ref{ThemDP}, it suffices to investigate only
stationary policies. For convenience, we denote
\begin{equation*}
    {\rm Pr}(j|i,\underline{\alpha})
    \!\stackrel{\Delta}{=}
    \!{\rm Pr}\!\left(X_t=j\!\left|\left(X_{t-v}^{t-1},S_{t-v-1}^{t-u-1}\!\right)=i, \underline{A}_{t-1}=\underline{\alpha}\right.\!\right).
\end{equation*}
Then the stationary policy in the discretized version of the source set $\mathcal{P}'_v(u,u)$ can be denoted by
\begin{equation*}
    {\rm p} = \left\{{\rm p}(\underline{\alpha}) = \left\{{\rm Pr}(j|i,\underline{\alpha})\right\} : \underline{\alpha} \in \hat{\mathcal{A}} \right\}.
\end{equation*}
We note that with a stationary source ${\rm p}$, the directed information rate $\mathcal{I}_v(X,S \rightarrow Y)$ in~(\ref{rate3too}) can be computed using Monte Carlo methods
similar to those in~\cite{Arnold01,Pfister01,Sharma01,Arnold06}.

{\subsection{A Value Iteration Method to Solve {\bf Problem~B}}\label{sec5.2} }

For a finite-state and finite-action ARSCP, there exist several
dynamic programming algorithms~(such as value iteration, policy iteration and
linear programming)~\cite{Bertsekas07} to solve the pair of coupled optimality equations. To obtain $\varepsilon$-optimal value with small $\varepsilon$, fine quantization is required, but then the discretized state space $\hat{\mathcal{A}}$ and action space $\hat{\mathscr{P}}$ usually have large sizes. In this setting, the value iteration method is a better choice. In this subsection, a value iteration algorithm is introduced to solve {\bf Problem~B}. Under a mild assumption, the presented value iteration algorithm is shown to be convergent and delivers the near-optimal stationary policy and the optimal average reward value numerically.

The value iteration method is, for all $\underline{\alpha} \in \hat{\mathcal{A}}$,
\begin{equation}\label{eqnValIter1}
    J_k(\underline{\alpha})
    = \max_{\hat{\mathscr{P}}}
        \left\{g(\underline{\alpha}, {\rm p}(\underline{\alpha}))
        + {\bf E}_{\underline{A}'|\underline{\alpha}} \left[J_{k-1}(\underline{A}')\right] \right\}
\end{equation}
starting from an arbitrary initial function $J_0$.
In the following, we show that this value iteration method can deliver a solution $(G^*, J^*)$ to the pair of coupled optimality equations~(\ref{eqnOptEqn1}) and~(\ref{eqnOptEqn2}). On one hand, from Proposition~4.3.1 in~\cite{Bertsekas07}, the optimal average reward vector $G^*$ can be obtained as
\begin{equation}\label{eqnGoptAR1}
    G^* = \lim_{k\rightarrow \infty} \frac{J_k}{k}.
\end{equation}
Note that in general, for a multi-chain average reward problem, $G^*(\underline{\alpha})$ may be different for different $\underline{\alpha}$. But by performing the iteration method for Example~\ref{GE}, we find that the values $\frac{J_k(\underline{\alpha})}{k}$ are always numerically approaching a constant as $k \rightarrow \infty$.

On the other hand, we need to find $J^*$. To this end, we make an additional assumption as follows.

\begin{assumption}\label{AssumpAperdic}
Every optimal stationary policy $\rm p$ has an aperiodic transition probability matrix $L_{\rm p}$.
\hfill \ding{113}
\end{assumption}

{\bf Remark:} Recall that
\begin{equation}
    \underline{\alpha}_{t} = \left[\alpha_{t}(0),\alpha_{t}(1),\cdots,\alpha_{t}(M-1)\right]
 \end{equation}
and
 \begin{equation}
  \alpha_{t}(\ell) = {\rm Pr}\!\left(\left(\!X_{t-v+1}^{t},\!S_{t-v}^{t-u}\!\right)\!=\!\ell\,|y^{t-u}\!\right).
 \end{equation}
Intuitively, the {\em optimal} stationary policy should not depend heavily on the early channel outputs. In other words, the influence of $y^{t-w-1}$ on the optimal policy should die away with sufficiently large $w$. Specifically, for two different channel output sequences $(y^{t-w-1}, y_{t-w}^{t-u})$ and $(\tilde{y}^{t-w-1}, y_{t-w}^{t-u})$, the resulting probability vectors $\underline{\alpha}_t$ and ${\tilde{\underline{\alpha}}_t}$ should be almost the same~(i.e., their Euclidean distance should be very small). As a result, the quantized versions of $\underline{\alpha}_t$ and ${\underline{\tilde{\alpha}}}_t$ will be equal. This implies that, for a given optimal stationary policy, the states $\underline{A}_t \in \mathcal{A}$ can be restricted to the subset of states~(called it the subset of {\em effective} states) that correspond to the most recent channel outputs $Y_{t-w}^{t-u}$. Such a subset is {\em communicative}. In particular, the state $\underline{\alpha}$ corresponding to the vector $Y_{t-w}^{t-u} = \underline{0}$ can be reached from itself whenever the next channel output $Y_{t-u+1}$ equals $0$. Hence, the Markov chain is essentially aperiodic. This intuition has also been verified numerically in our example.

Under Assumption~\ref{AssumpAperdic}, according to Propositions~4.3.5 and~4.3.6 in~\cite{Bertsekas07}, we have the following facts.
\begin{enumerate}
    \item The optimal average reward vector $G^*$ satisfying~(\ref{eqnGoptAR1}) can also be obtained by
        \begin{equation}
            G^* = \lim_{k\rightarrow \infty} (J_{k} - J_{k-1}).
        \end{equation}
    \item The bias $J^*$ can be obtained by
        \begin{equation}
            J^* = \lim_{k\rightarrow \infty} (J_k - kG^*).
        \end{equation}
    \item There exists a sufficiently large $K$ such that for any $k \geq K$,
        \begin{eqnarray}\label{eqnTwoMax}
            \lefteqn{\!\!\!\!\!\!\!\!\max_{{\rm p}(\underline{\alpha}) \in \hat{\mathscr{P}}}
            \!\left\{g(\underline{\alpha}, {\rm p}(\underline{\alpha}))
            + {\bf E}_{\underline{A}'|\underline{\alpha}} \left[J_{k-1}(\underline{A}')\right] \right\}} \nonumber\\
            &\!\!\!\!=&\!\!\!\!\!\!\!\! \max\limits_{{\rm p}(\underline{\alpha}) \in \bar{\mathscr{P}}(\underline{\alpha})}
                \!\left\{g(\underline{\alpha}, {\rm p}(\underline{\alpha}))
                + {\bf E}_{\underline{A}'|\underline{\alpha}} \left[J_{k-1}(\underline{A}')\right] \right\}
        \end{eqnarray}
        where $\bar{\mathscr{P}}(\underline{\alpha})$ has been defined in the previous subsection, see equation~(\ref{eqnOptEqn2}).
\end{enumerate}
Therefore, the pair $(G^*,J^*)$ induced by the value iteration method~(\ref{eqnValIter1}) is a solution to the pair of coupled optimality equations~(\ref{eqnOptEqn1}) and~(\ref{eqnOptEqn2}). Moreover, let ${\rm p}$ be the policy obtained by the value iteration method~(\ref{eqnValIter1}) for the sufficiently large $K$. Then $\{{\rm p}\}^{\infty}$ can achieve numerically optimal average reward value of {\bf Problem~B}. A practical value iteration algorithm for {\bf Problem~B} is described as follows.

\begin{algorithm}[A Value Iteration Algorithm]\label{AlgProbB}
\hspace{\parindent}
\begin{enumerate}
    \item \textbf{Initialization:}
        \begin{itemize}
            \item Choose a large positive integer $n$.
            \item Initialize the {\em terminal reward function} or {\em starting vector} as
            $J_0(\underline{\alpha}) = 0$ for all $\underline{\alpha} \in \hat{\mathcal{A}}$.
        \end{itemize}
    \item \textbf{Recursions:}\\
        For $k\!=\!1,2,\ldots,n$, and any $\underline{\alpha} \in \hat{\mathcal{A}}$, compute
        \begin{equation}\label{VIA}
            J_{k}(\underline{\alpha})\!=\!\!\!
            \max_{{\rm p}(\underline{\alpha}) \in \hat{\mathscr{P}}}\!\!
            \left\{g\!\left(\underline{\alpha},\!{\rm p}(\underline{\alpha})\right)\!
                +\!\mathbf{E}_{\underline{A}'|\underline{\alpha}} \left[J_{k-1}(\underline{A}')\right]\!\right\}\!.
        \end{equation}
        where $\underline{A}' \in \hat{\mathcal{A}}$ is the random variable that depends on the system disturbance variable $Y_{t-u}$,
        and where the realization $\underline{\alpha}'$ of $\underline{A}'$ can be computed by
        \begin{equation}\label{bcjrqua}
            \underline{\alpha}'\!= \mathcal{Q}_{\delta}\!\left(F_{BCJR}\!\left(\underline{\alpha},{\rm p}(\underline{\alpha}),
            y_{t-u}\right)\right).
        \end{equation}
    \item \textbf{Optimized source:}\\
        For any $\underline{\alpha} \in \hat{\mathcal{A}}$,
        the optimized source distribution is delivered as
            \begin{equation}\label{VIApolocy}
                {\rm p}^*(\underline{\alpha}) =
                \arg\!\max_{{\rm p}(\underline{\alpha}) \in \hat{\mathscr{P}}}\!
                \left\{\!g\!\left(\underline{\alpha},\!{\rm p}(\underline{\alpha})\right)\!
                +\!\!\mathbf{E}_{\underline{A}'|\underline{\alpha}} \left[J_{n}(\underline{A}')\right]\!\right\}.
            \end{equation}
    \item \textbf{End.}
\end{enumerate}
\end{algorithm}

{\bf Remark:} By implementing Algorithm~\ref{AlgProbB}, we can obtain stationary Markov source probabilities ${\rm p}^* = \left\{{\rm p}^*(\underline{\alpha}): \underline{\alpha} \in \hat{\mathcal{A}}\right\}$, which can be utilized to evaluate numerically the optimal average reward of {\bf Problem~B}, i.e., the $\varepsilon$-optimal value of {\bf Problem~A}. Strictly speaking, the optimal stationary policy ${\rm p}^*$ obtained in~(\ref{VIApolocy}) for {\bf Problem~B} is an approximation of the optimal stationary policy of {\bf Problem~A}, and the information rate $\mathcal{I}_v\left(X,S\rightarrow Y\right)$ induced by the ``optimal'' stationary policy
${\rm p}^*$ is only a lower bound on the upper bound $\mathcal{I}_{FB,SI}^*(u,v)$. Obviously, finer quantization of $\mathcal{A}$ and $\mathscr{P}$ should cause less loss of optimality. The numerical values resulting from different quantizations are discussed in the following section.

\section{Numerical results}\label{sec6}

\begin{figure}[!t]
\centering 
\includegraphics[width=3.3in]{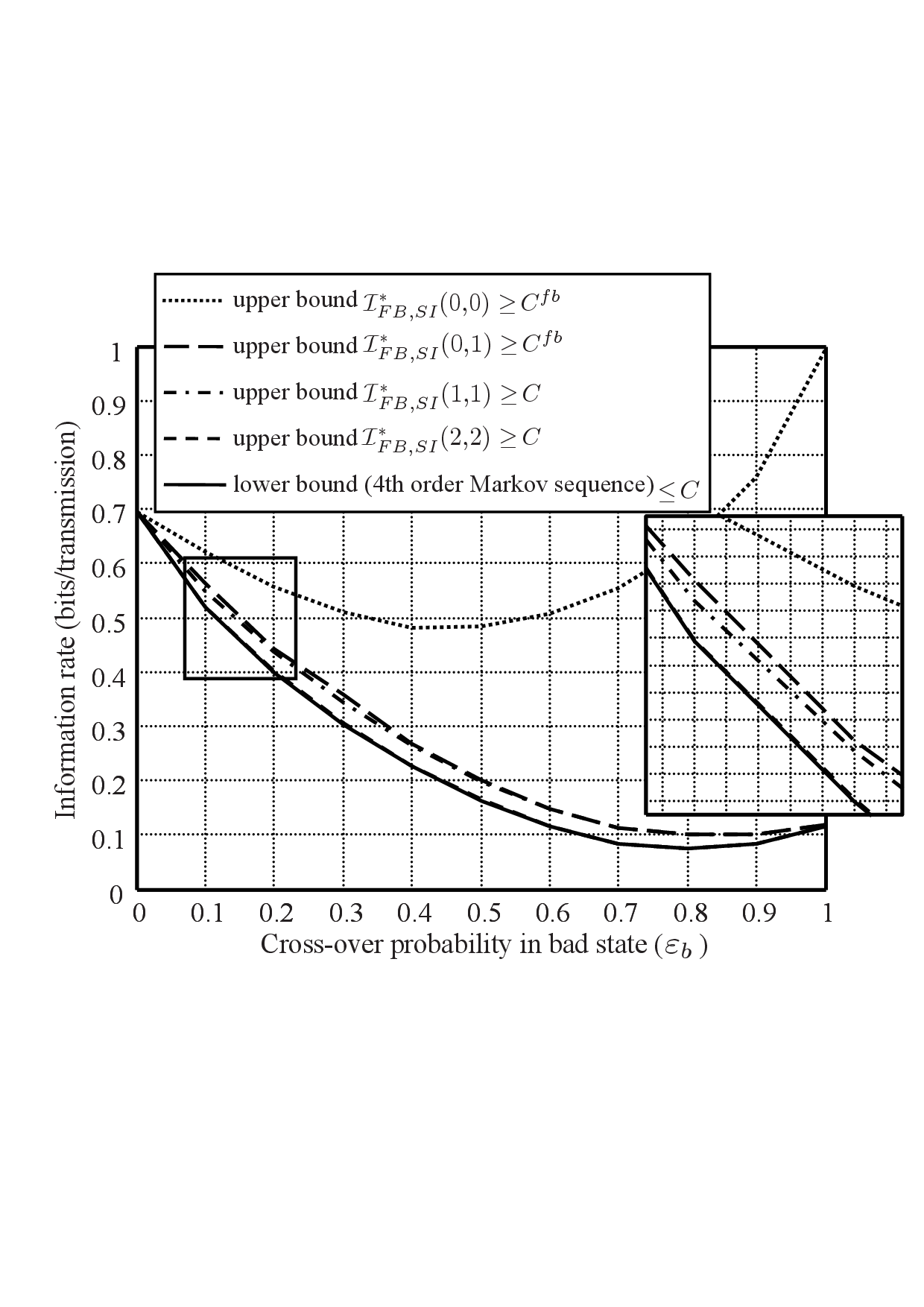}
\caption{Bounds on the capacities of the RLL$(1,\infty)$-GE
channel.} \label{bound}
\end{figure}

In this section, we present numerical results by taking
the RLL$(1, \infty)$-GE channel\footnote{Note that restricting
the input as RLL$(1, \infty)$ sequence is equivalent to restricting
certain transition probabilities to be zeros. Since the action space is
still compact, the results in Sections IV and V can be applied here.}
shown in Fig.~\ref{RLL} and Fig.~\ref{GEg} as an example.
We chose this channel because it was already used in a prior
publication~\cite{Vontobel08}.
In this example, we set the
transition probabilities between the channel states as
$p(b\!\left|g\right.)\!=\!p(g\!\left|b\right.)\!=\!0.3$, the
cross-over probability in the ``good'' state as
$\varepsilon_g\!=\!0.001$ and the cross-over probability in the ``bad''
state as a variable $\varepsilon_b\in [0, 1]$.
Firstly, we quantize the state space $\mathcal{A}$
and the action space $\mathscr{P}$ using parameters $\delta$ and $\xi$, respectively. Secondly, we apply
Algorithm~\ref{AlgProbB} introduced in Section \ref{sec5} to
obtain an ``optimal'' stationary policy. Finally, we use
Monte Carlo methods~\cite{Arnold01,Pfister01,Sharma01,Arnold06}
to numerically evaluate the upper bounds $\mathcal{I}^*_{FB,SI}(u,v)$.
The results are shown in Fig.~\ref{bound}, where
$\mathcal{I}^*_{FB,SI}(1,1)$ and $\mathcal{I}^*_{FB,SI}(2,2)$ are
two upper bounds on the feedforward capacity, and
$\mathcal{I}^*_{FB,SI}(0,0)$ and $\mathcal{I}^*_{FB,SI}(0,1)$ are
two upper bounds on the feedback capacity. As expected,
$\mathcal{I}^*_{FB,SI}(2,2) \leq \mathcal{I}^*_{FB,SI}(1,1) \leq
\mathcal{I}^*_{FB,SI}(0,1) \leq \mathcal{I}^*_{FB,SI}(0,0)$. It is
worth pointing out that, due to the RLL constraints, the source
must have memory of order at least one and the optimization is
implemented by taking into account the RLL constraint. In
particular, the upper bound $\mathcal{I}^*_{FB,SI}(0,0)$ is
obtained by optimizing the sources $\mathcal{P}'_1(0, 0)$.
Also shown in Fig.~\ref{bound} is a lower bound on $C$ computed using
techniques presented in~\cite{Kavcic01,Vontobel08}. By comparing
$\mathcal{I}^*_{FB,SI}(2,2)$ with the lower bound, we observe that
the bounds $\mathcal{I}^*_{FB,SI}(v,v)$ are numerically tight
upper bounds on the feedforward capacity. We are unable to
evaluate the tightness of the upper bounds
$\mathcal{I}^*_{FB,SI}(0,v)$ on the feedback capacity since no
good lower bounds on $C^{fb}$ are available in the literature for
noncontrollable FSCs.

Fig.~\ref{FigQuantizer} illustrates the loss of the optimality
caused by quantization. We focus on the computation of
$\mathcal{I}^*_{FB,SI}(1,1)$. Let the quantization parameter of
the action space $\mathscr{P}$ be fixed, i.e., $\xi = 0.0125$, and
the quantization parameter $\delta$ of the state space $\mathcal{A}$ be varying.
From Fig.~\ref{FigQuantizer}, we can see that a smaller $\delta$
(equivalently, a finer quantizer) induces a larger information
rate $\mathcal{I}_1(X,S \rightarrow Y)$ and causes less loss of
optimality. It can also be seen that the gap between the different
quantizers is negligible for small quantization parameters
$\delta$.

\begin{figure}[!t]
\centering
\includegraphics[width=3.0in]{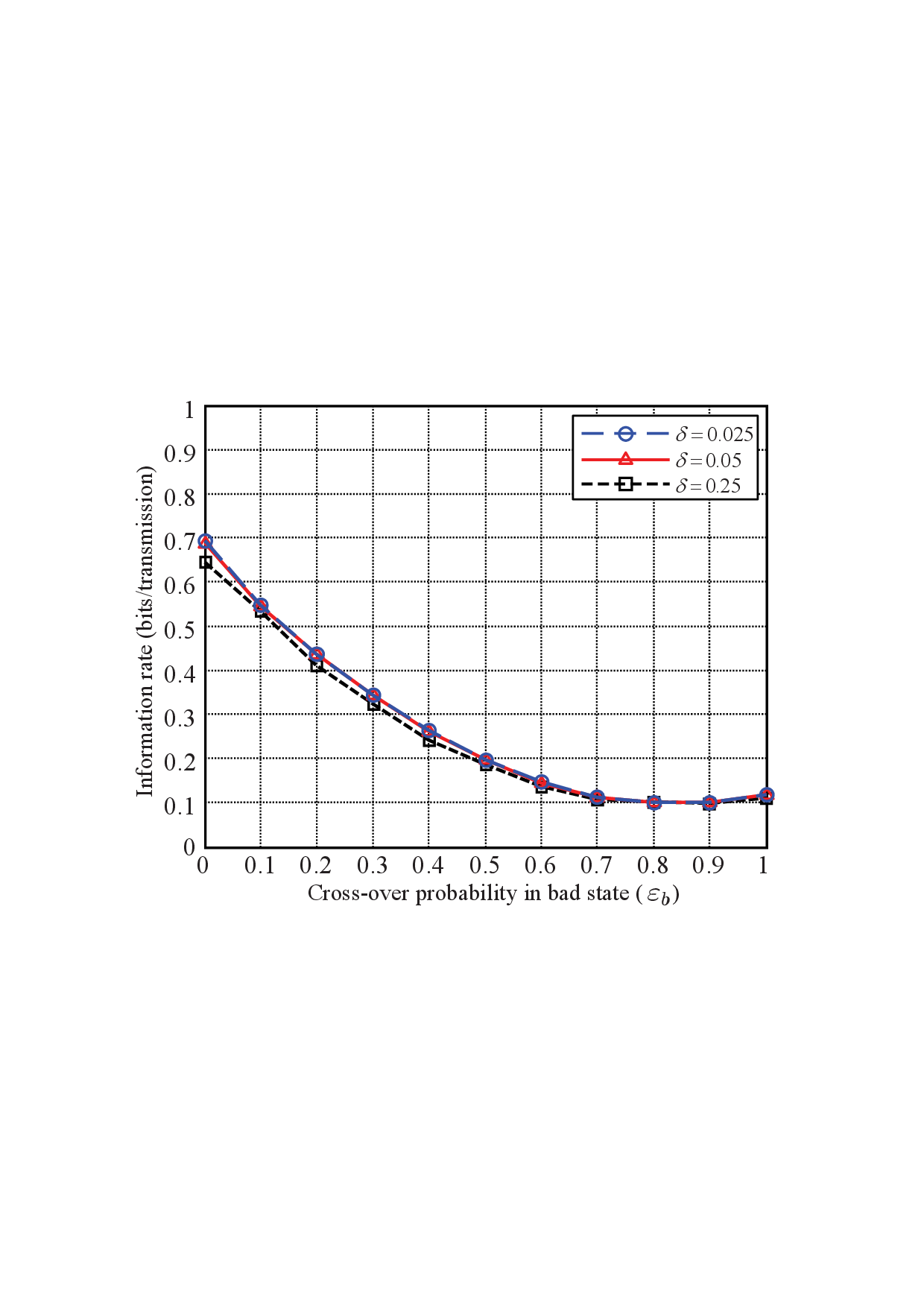}\\
    \caption{Information rates $\mathcal{I}_1(X,S\rightarrow Y)$ for ``optimal'' quantized sources in $\mathcal{P}'_1(1,1)$ delivered by Algorithm~\ref{AlgProbB} with different quantizers, where the quantization parameters of the state space and the action space are $\delta$ and $\xi = 0.0125$, respectively.}\label{FigQuantizer}
\end{figure}

\section{Conclusion}\label{sec7}


By the technique of inserting the delayed channel state into the channel input, the directed information rate from the new channel input~(including the channel input and the delayed channel state) to the channel output is defined, and then a universal form of upper bounds on the capacities of the noncontrollable FSC has been developed. In particular, two respective nested sequences of upper bounds on the feedforward capacity and the feedback capacity are obtained. It has been shown that these upper bounds can be achieved by finite order conditional Markov sources with delayed output feedback~(FB) and delayed state information~(SI). Moreover, the computation of the upper bounds was formulated as an average reward per stage stochastic control problem~(ARSCP) with a continuous state space and a continuous action space. By the compactness of the state space and the action space and the unform continuity of the reward function, the original ARSCP was transformed into an ARSCP with a finite state set and a finite action set, which can be solved by a value iteration algorithm. Under a mild assumption, the value iteration algorithm is shown to be convergent and delivers a near-optimal stationary policy as well as numerically tight upper bounds.

\appendices

\section{Proof of Theorem~\ref{ThemCap}}\label{proofThemCap}

\begin{IEEEproof}
The feedforward capacity in~(\ref{eqnCap}) and the feedback capacity in~(\ref{eqnCapFB}) are rewritten as
\begin{equation}\label{eqnCapRe}
  C = \sup_{\left\{{\rm Pr}\left(x_t\left|x^{t-1}\right.\!\right)\right\}_{t=1}^{\infty}}
  \liminf_{N \rightarrow \infty} \frac{1}{N} I(X^N \rightarrow Y^N)
\end{equation}
and
\begin{equation}\label{eqnCapFBRe}
  C^{fb} = \sup_{\left\{{\rm Pr}\left(x_t\left|x^{t-1},y^{t-1}\right.\!\right)\right\}_{t=1}^{\infty}}
  \liminf_{N \rightarrow \infty} \frac{1}{N} I(X^N \rightarrow Y^N),
\end{equation}
respectively.
We now prove that they are equal to the capacities
\begin{equation}\label{eqnCapGall}
    C_{G} = \lim_{N \rightarrow \infty} \sup_{\left\{{\rm Pr}(x_t|x^{t-1})\right\}_{t=1}^N} \frac{1}{N} I(X^N \rightarrow Y^N)
\end{equation}
defined by Gallager in~\cite[Theorems~4.6.4 and~5.9.1]{Gallager68}
and
\begin{equation}\label{eqnCapFBKimPert}
    C_P^{fb} = \lim_{N \rightarrow \infty} \sup_{\left\{{\rm Pr}(x_t|x^{t-1},y^{t-1})\right\}_{t=1}^N} \frac{1}{N} I(X^N \rightarrow Y^N)
\end{equation}
defined by Permuter~{\em et~al.} in~\cite[Theorem~18]{Permuter09}, respectively.
Here, we only prove $C = C_G$. A similar method~(omitted here) can be used to prove $C^{fb} = C_P^{fb}$.

On one hand, we have $C \leq C_G$.
Let $\left\{{\rm Pr}^*\!\left(x_t\left|x^{t-1}\right.\!\right)\right\}_{t=1}^{\infty}$ be a sequence of sources that achieves the capacity $C$. Then, for each $N$ and the fixed sequence $\left\{{\rm Pr}^*\!\left(x_t\left|x^{t-1}\right.\!\right)\right\}_{t=1}^{N}$, the corresponding directed information $I^*(X^N \rightarrow Y^N)$ is less than $\sup I(X^N \rightarrow Y^N)$, which implies that $C \leq C_G$.

On the other hand, we prove $C_G \leq C$. To this end, we introduce a new capacity expression
\vskip -0.6cm
\begin{equation}\label{eqnCapMid}
    C_M = \sup_{\left\{\left\{{\rm Pr}(x_t|x^{t-1})\right\}_{t=1}^T\right\}_{T=1}^{\infty}} \liminf_{N \rightarrow \infty} \frac{1}{N} I(X^N \rightarrow Y^N)
    \vspace{-0.1cm}
\end{equation}
where the supremum is taken over all possible sequences of sources without the consistency requirement, i.e., $\left\{\left\{{\rm Pr}(x_t|x^{t-1})\right\}_{t=1}^T\right\}_{T=1}^{\infty}$. Firstly, we prove that $C_G \leq C_M$. For each $N$, denote the optimal source achieving $\sup I(X^N\rightarrow Y^N)$ as $\left\{{\rm Pr}^* (x_t|x^{t-1})\right\}_{t=1}^N$. For the fixed sequence of sources $\left\{\left\{{\rm Pr}^* (x_t|x^{t-1})\right\}_{t=1}^T\right\}_{T=1}^{\infty}$, $\liminf \frac{1}{N} I(X^N\rightarrow Y^N) = C_G$ trivially holds. Thus we have $C_G \leq C_M$.
Secondly, we prove that $C_M = C$. It is obvious that $C \leq C_M$ since $\left\{{\rm Pr}(x_t|x^{t-1})\right\}_{t=1}^{\infty} \subset \left\{\left\{{\rm Pr}(x_t|x^{t-1})\right\}_{t=1}^T\right\}_{T=1}^{\infty}$.
Now we need to prove that $C < C_M$ does not hold. Otherwise, there must exist a sequence of sources $\left\{\left\{{\rm {\tilde Pr}} (x_t|x^{t-1})\right\}_{t=1}^T\right\}_{T=1}^{\infty}$ such that $\liminf_{N \rightarrow \infty} \frac{1}{N} I(X^N \rightarrow Y^N) \geq C + \epsilon_0$ where $\epsilon_0 > 0$. It implies that there exists a $K$ such that for all $N \geq K$, $\frac{1}{N} I(X^N \rightarrow Y^N) \geq C + \epsilon$ where $0 < \epsilon < \epsilon_0$. Let $\left\{{\rm {\tilde Pr}}(x_t|x^{t-1})\right\}_{t=1}^N$ be the source for a fixed $N \geq K$. Construct a process by $X^{\infty} = X^N \times X^N \times \cdots$ with probability assignment ${\rm Pr}(x^{\infty}) = \left({\rm Pr}(x^N)\right)^{\infty}$. Consider the directed information rate $\frac{1}{NL} I(X^{NL} \rightarrow Y^{NL})$.
\vskip -0.4cm
\begin{eqnarray}
    \lefteqn{\frac{1}{NL} I(X^{NL} \rightarrow Y^{NL})}\nonumber\\
    &\!\!\!\!\!=& \!\!\!\!\!\!\frac{1}{NL} \sum_{i = 1}^{NL}  I(X^{i}; Y_i|Y^{i-1}) \nonumber\\
    &\!\!\!\!\!=& \!\!\!\!\!\!\frac{1}{NL} \sum_{\ell = 0}^{L-1} \sum_{i=1}^N I(X^{\ell N +i}; Y_{\ell N +i}|Y^{\ell N +i -1}) \nonumber\\
    &\!\!\!\!\!\geq& \!\!\!\!\!\!\frac{1}{NL} \sum_{\ell = 0}^{L-1} \sum_{i=1}^N I(X_{\ell N +1}^{\ell N +i}; Y_{\ell N +i}|Y^{\ell N +i -1}) \nonumber\\
    &\!\!\!\!\!=& \!\!\!\!\!\!\frac{1}{NL} \sum_{\ell = 0}^{L-1} \sum_{i=1}^N I(X_{\ell N +1}^{\ell N +i}; Y_{\ell N +i}|Y_{\ell N +1}^{\ell N +i -1}, Y^{\ell N}) \nonumber\\
    &\!\!\!\!\!\stackrel{\rm (a)}{\geq}& \!\!\!\!\!\!\frac{1}{NL} \sum_{\ell = 0}^{L-1} \left(\!\!\!\begin{array}{l}
                                                                           -\log|\mathcal{S}| \\
                                                                           + \sum\limits_{i=1}^N I(X_{\ell N +1}^{\ell N +i}; Y_{\ell N +i}|Y_{\ell N +1}^{\ell N +i -1}, Y^{\ell N}, S_{\ell N})
                                                                         \end{array}
    \!\!\!\!\!\right)\nonumber\\
    &\!\!\!\!\!\stackrel{\rm (b)}{=}& \!\!\!\!\!\!\frac{1}{NL} \sum_{\ell = 0}^{L-1}
        \left(\!\!-\log|\mathcal{S}| + \sum_{i=1}^N I(X_{\ell N +1}^{\ell N +i}; Y_{\ell N +i}|Y_{\ell N +1}^{\ell N +i -1}\!, \!S_{\ell N})\!\!\right)\nonumber\\
    &\!\!\!\!\!\stackrel{\rm (c)}{\geq}& \!\!\!\!\!\!\frac{1}{NL} \sum_{\ell = 0}^{L-1}
        \left(\!\!-2\log|\mathcal{S}| + \sum_{i=1}^N I(X_{\ell N +1}^{\ell N +i}; Y_{\ell N +i}|Y_{\ell N +1}^{\ell N +i -1})\!\!\right)\nonumber\\
    &\!\!\!\!\!=& \!\!\!\!\!\!\frac{1}{NL} \sum_{\ell = 0}^{L-1} \left( -2\log|\mathcal{S}| + I(X_{\ell N +1}^{\ell N + N} \rightarrow Y_{\ell N +1}^{\ell N +N}) \right) \nonumber\\
    &\!\!\!\!\!\stackrel{\rm (d)}{=}& \!\!\!\!\!\!\frac{1}{N} \left( -2\log|\mathcal{S}| + I(X^N \rightarrow Y^N)\right) \nonumber\\
    &\!\!\!\!\!\geq& \!\!\!\!\!\!C + \epsilon - \frac{2}{N}\log|\mathcal{S}|
\end{eqnarray}
where inequalities (a) and (c) result from Lemma~4 in~[13],
equality (b) results from the Markovianity of the chain $(X^{\ell N}, Y^{\ell N}) \rightarrow S_{\ell N} \rightarrow (X_{\ell N +1}^{(\ell +1) N}, Y_{\ell N +1}^{(\ell +1) N})$,
and equality (d) results from the assumptions of channel model and the construction of the process which imply that
$I(X_{\ell N +1}^{\ell N + N} \rightarrow Y_{\ell N +1}^{\ell N +N}) = I(X^N \rightarrow Y^N)$ for all $\ell$. By the choice of $\epsilon_0$ and $\epsilon$, for any $L$, $\frac{1}{NL} I(X^{NL} \rightarrow Y^{NL}) > C + \delta$ where $\delta > 0$. Then $\liminf \frac{1}{N} I(X^{N} \rightarrow Y^{N}) > C$, which raises a contradiction, regarding the expression of $C$ in~(\ref{eqnCapRe}).
Therefore, $C_G \leq C_M = C$.
\end{IEEEproof}

\section{Proof of Theorem~\ref{Themnewsource1}}\label{proofThemnewsource1}

\begin{IEEEproof}
Let $\mathcal{P}_1\!\in\!\mathcal{P}(u,u)$ be an arbitrary source
with $u$-delayed FB and $u$-delayed SI. Denote the corresponding
information as $I\left(\left.X_{t-v}^t,S_{t-v-1};Y_t\right|Y^{t-1}\right)$.
To prove Theorem~\ref{Themnewsource1}, it is sufficient to show that there
exists a conditional Markov source $\mathcal{P}_2$ in
$\mathcal{P}_v(u,u)\subseteq \mathcal{P}(u,u)$ with the same information
$I\left(\left.X_{t-v}^t,S_{t-v-1};Y_t\right|Y^{t-1}\right)$
as that achieved by $\mathcal{P}_1$. To do this,
for any given $\mathcal{P}_1\!\in\!\mathcal{P}(u,u)$, we
construct a new source $\mathcal{P}_2\in \mathcal{P}_v(u,u)$ as
\begin{eqnarray}\label{eqnbuiltsource}
 \lefteqn{{\rm Pr}^{\left(\mathcal{P}_2\right)}\left(x_t\left|x^{t-1},s_0^{t-u-1},y^{t-u-1}\right.\right)} \hspace{1.8cm} \nonumber \\
 &\stackrel{\Delta}{=}
 {\rm Pr}^{\left(\mathcal{P}_1\right)}\left(x_t\left|x_{t-v}^{t-1},s_{t-v-1}^{t-u-1},y^{t-u-1}\right.\right)
\end{eqnarray}
with the initial probability as
\begin{equation*}
    {\rm Pr}^{(\mathcal{P}_2)}\!\left(x^{v},s_0^{v-u},y^{v-u}\!\right)\!\stackrel{\Delta}{=}\!
    {\rm Pr}^{(\mathcal{P}_1)}\!\left(x^{v},s_0^{v-u},y^{v-u}\!\right).
\end{equation*}

In the following, we will prove that both $\mathcal{P}_1$ and
$\mathcal{P}_2$ induce the same joint probability distribution
${\rm Pr}\!\left(x_{t-v}^t,\!s_{t-v-1},\!y^t\right)$, which,
together with the result of Theorem~\ref{ThemshortS}, completes the proof
of Theorem~\ref{Themnewsource1}.

Actually, for any source with $u$-delayed FB and $u$-delayed SI,
we have
\begin{eqnarray}\label{sourceG}
  \lefteqn{{\rm Pr}\!\left(x_{t-v}^t,\!s_{t-v-1},\!y^t\right)}\nonumber\\
  &\!\!\!\!\!\!=&\!\!\!\!\!\!\!\!\! \sum_{x^{t\!-\!v\!-\!1},s_0^{t\!-\!v\!-\!2},s_{t\!-\!v}^{t\!-\!u}}\!\!\!\!\!\!\!\!\!
     \!{\rm Pr}\!\left(x^t,s_0^{t-u},y^{t}\right)\nonumber\\
  &\!\!\!\!\!\!=&\!\!\!\!\!\!\!\!\! \sum_{x^{t\!-\!v\!-\!1},s_0^{t\!-\!v\!-\!2},s_{t\!-\!v}^{t\!-\!u}}\!\!\!\!\!\!\!\!\!
     \!{\rm Pr}\!\left(x^t,s_0^{t-u},y^{t-u}\right)
     {\rm Pr}\!\left(\left.y_{t-u+1}^t\right|x^t,s_0^{t-u},y^{t-u}\right)\nonumber\\
  &\!\!\!\!\!\!=&\!\!\!\!\!\!\!\!\! \sum_{x^{t\!-\!v\!-\!1},s_0^{t\!-\!v\!-\!2},s_{t\!-\!v}^{t\!-\!u}}\!\!\!\!\!\!\!\!\!
     \!{\rm Pr}\!\left(x^v,s_0^{v-u},y^{v-u}\right)
     {\rm Pr}\!\left(\left.y_{t-u+1}^t\right|x^t,s_0^{t-u},y^{t-u}\right)\nonumber\\
  & &\;\;\;\;\;\;\times \prod_{\tau=v+1}^t\!\!
    {\rm Pr}\!\left(x_{\tau}\left|x^{\tau-1},s_0^{\tau-u-1},y^{\tau-u-1}\!\right.\right)\nonumber\\
  & &\;\;\;\;\;\;\;\;\;\;\;\;\;\;\;\;\times\,
    {\rm Pr}\!\left(y_{\tau-u},s_{\tau-u}\left|x^{\tau},s_0^{\tau-u-1},y^{\tau-u-1}\!\right.\right)
\end{eqnarray}
The channel laws
${\rm Pr}\!\left(y_{\tau-u},s_{\tau-u}\!\left|x^{\tau},s_0^{\tau-u-1},y^{\tau-u-1}\right.\!\right)$
and
${\rm Pr}\left(\left.y_{t-u+1}^t\right|x^t,s_0^{t-u},y^{t-u}\right)$ in the above equation
are both independent of the source distribution $\mathcal{P}_1$~(or $\mathcal{P}_2$) since
\begin{eqnarray}\label{eqnY1}
    \lefteqn{{\rm Pr}\left(y_{\tau-u},s_{\tau-u}\left|x^{\tau},s_0^{\tau-u-1},y^{\tau-u-1}\right.\right)} \hspace{0.5cm}\nonumber\\
    &\stackrel{\rm (a)}{=}& \!\!\!
        {\rm Pr}\left(y_{\tau-u}\left|x_{\tau-u},s_{\tau-u-1}\right.\right)
        {\rm Pr}\left(s_{\tau-u}\left|s_{\tau-u-1}\right.\right)\nonumber\\
    &\stackrel{\rm (b)}{=}& \!\!\!
        {\rm Pr}\left(y_{\tau-u},s_{\tau-u}\left|x_{\tau-v}^{\tau},s_{\tau-v-1}^{\tau-u-1},y^{\tau-u-1}\right.\right)
\end{eqnarray}
and
\begin{eqnarray}\label{eqnY}
    \lefteqn{{\rm Pr}\left(\left.y_{t-u+1}^t\right|x^t,s_0^{t-u},y^{t-u}\right)} \nonumber\\
    &=& \sum_{s_{t-u+1}^t}{\rm Pr}\left(\left.y_{t-u+1}^t,s_{t-u+1}^t\right|x^t,s_0^{t-u},y^{t-u}\right)\nonumber\\
    &=& \sum_{s_{t-u+1}^t}\prod_{\tau=t-u+1}^t
        {\rm Pr}\left(\left.y_\tau,s_\tau\right|x^t,s_0^{\tau-1},y^{\tau-1}\right)\nonumber\\
    &\stackrel{\rm (c)}{=}& \sum_{s_{t-u+1}^t}\prod_{\tau=t-u+1}^t
        {\rm Pr}\left(\left.y_\tau\right|x_\tau,s_{\tau-1}\right)
        {\rm Pr}\left(\left.s_\tau\right|s_{\tau-1}\right)\nonumber\\
    &\stackrel{\rm (d)}{=}& \sum_{s_{t-u+1}^t}\prod_{\tau=t-u+1}^t
        {\rm Pr}\left(\left.y_\tau,s_\tau\right|x_{t-v}^t,s_{t-v-1}^{\tau-1},y^{\tau-1}\right)\nonumber\\
    &=& {\rm Pr}\left(\left.y_{t-u+1}^t\right|x_{t-v}^t,s_{t-v-1}^{t-u},y^{t-u}\right)
\end{eqnarray}
where equalities (a), (b), (c) and (d) result from
Proposition~\ref{Propnewsource} and the assumption $u\leq v$.
Equalities (a) and (c) also state that the conditional
probabilities
${\rm Pr}\!\left(y_{\tau\!-\!u},s_{\tau\!-\!u}\left|x^{\tau}\!,\!s_0^{\tau\!-\!u\!-\!1},\!y^{\tau\!-\!u-1}\right.\!\right)$
and
${\rm Pr}\!\left(\!\left.y_{t-u+1}^t\right|x^t\!,\!s_0^{t\!-\!u},\!y^{t\!-\!u}\!\right)$
are completely determined by the channel transition law.

Therefore, using~(\ref{eqnY1}) and~(\ref{eqnY}), the given source $\mathcal{P}_1\!\in\!\mathcal{P}(u,u)$
induces the joint probability
\begin{eqnarray}\label{source1}
  \lefteqn{{\rm Pr}^{(\mathcal{P}_1)}\!\!\left(x_{t-v}^t,s_{t-v-1},y^t\right)} \nonumber\\
  &\!\!\!\!\!\!=&\!\!\!\!\!\!\!\!\!\!\! \sum_{x^{t\!-\!v\!-\!1}\!,s_0^{t\!-\!v\!-\!2}\!,s_{t\!-\!v}^{t\!-\!u}}\!\!\!\!\!\!\!\!\!\!\!\!\!
     {\rm Pr}^{(\mathcal{P}_1)}\!\!\left(x^v\!,s_0^{v-u}\!,y^{v-u}\!\right)
     {\rm Pr}\!\left(\left.\!y_{t-u+1}^t \!\right|\!x_{t-v}^t,s_{t-v-1}^{t-u},y^{t-u}\!\right) \nonumber\\
  & &\;\;\;\;\times\! \prod_{\tau={v+1}}^t\!\!
     {\rm Pr}^{(\mathcal{P}_1)}\!\!\left(x_{\tau}\left|x^{\tau-1},s_0^{\tau-u-1},y^{\tau-u-1}\!\right.\right)\nonumber\\
  & &\;\;\;\;\;\;\;\;\;\;\;\;\times\;
    {\rm Pr}\!\left(y_{\tau-u},s_{\tau-u} \!\left|x_{\tau-v}^{\tau},s_{\tau-v-1}^{\tau-u-1},y^{\tau-u-1}\right.\!\right)
\end{eqnarray}
and the conditional probability
\begin{eqnarray}\label{eqnCondProb}
  \lefteqn{{\rm Pr}^{(\mathcal{P}_1)}\!\left(x_t\left|x_{t-v}^{t-1},s_{t-v-1}^{t-u-1},y^{t-u-1}\right.\right)} \nonumber \\
  &=& \frac{{\rm Pr}^{(\mathcal{P}_1)}\!\left(x_{t-v}^t,s_{t-v-1}^{t-u-1},y^{t-u-1}\right)}
    {{\rm Pr}^{(\mathcal{P}_1)}\!\left(x_{t-v}^{t-1},s_{t-v-1}^{t-u-1},y^{t-u-1}\right)}\\
  &=& \frac{\sum\limits_{x^{t-v-1},s_0^{t-v-2}}
    {\rm Pr}^{(\mathcal{P}_1)}\!\left(x^t,s_0^{t-u-1},y^{t-u-1}\right)}
    {\sum\limits_{x^{t-v-1},s_0^{t-v-2}}
    {\rm Pr}^{(\mathcal{P}_1)}\!\left(x^{t-1},s_0^{t-u-1},y^{t-u-1}\right)}
\end{eqnarray}
where
\begin{eqnarray*}
\lefteqn{{\rm Pr}^{(\mathcal{P}_1)}\!\!\left(x^t\!,s_0^{t\!-\!u\!-\!1}\!,y^{t\!-\!u\!-\!1}\right)} \nonumber\\
&\!=&\!\!{\rm Pr}^{(\mathcal{P}_1)}\!\!\left(\!x^{t\!-\!1}\!,\!s_0^{t\!-\!u\!-\!1}\!,\!y^{t\!-\!u\!-\!1}\right)
{\rm Pr}^{(\mathcal{P}_1)}\!\!\left(\!x_t\!\left|x^{t\!-\!1}\!,\!s_0^{t\!-\!u\!-\!1}\!,\!y^{t\!-\!u\!-\!1}\right.\!\right)\!
\end{eqnarray*}
and
\begin{eqnarray*}
\lefteqn{{\rm Pr}^{(\mathcal{P}_1)}\!\left(x^{t-1},s_0^{t-u-1},y^{t-u-1}\right)} \nonumber\\
&=&\!\!\!{\rm Pr}^{(\mathcal{P}_1)}\!\left(x^v,s_0^{v-u},y^{v-u}\right)\nonumber\\
& & \!\!\times\!\prod\limits_{\tau=v+1}^{t-1}\!\!
    {\rm Pr}^{\left(\mathcal{P}_1\right)}\!\left(x_\tau\!\left|x^{\tau-1},s_0^{\tau-u-1},y^{\tau-u-1}\right.\!\right)\nonumber\\
& & \:\:\:\:\:\:\:\:\times\,{\rm Pr}\!\left(y_{\tau-u}\!\left|x_{\tau-u},s_{\tau-u-1}\right.\!\right)
        {\rm Pr}\!\left(s_{\tau-u}\!\left|s_{\tau-u-1}\right.\!\right).
\end{eqnarray*}
On the other hand, the source
$\mathcal{P}_2\in\!\mathcal{P}_v(u,u)$ constructed
as~(\ref{eqnbuiltsource}) induces the joint probability shown in~(\ref{source2Joint})~(see the top of the following page),
\begin{figure*}
\normalsize
\begin{eqnarray}\label{source2Joint}
  \lefteqn{{\rm Pr}^{(\mathcal{P}_2)}\!\left(x_{t-v}^t,s_{t-v-1},y^t\right)} \nonumber\\
  &\!\!\!\!\!\!\!\!=&
    \!\!\!\!\!\!\!\!\!\!\!\! \sum_{x^{t\!-\!v\!-\!1}\!,s_0^{t\!-\!v\!-\!2}\!,s_{t\!-\!v}^{t\!-\!u}} \!\!\!\!\!\!\!\!\!\!\!\!\!
    {\rm Pr}^{(\mathcal{P}_2)}\!\!\left(x^{v}\!,s_0^{v\!-\!u}\!,y^{v\!-\!u}\right)\!
    {\rm Pr}\!\left(\!\left.y_{t\!-\!u\!+\!1}^t\!\right|\!x_{t\!-\!v}^t,s_{t\!-\!v\!-\!1}^{t\!-\!u},y^{t\!-\!u}\right)\!\!\!
    \prod_{\tau={v\!+\!1}}^t \!\!\!{\rm Pr}^{(\mathcal{P}_2)}\!\!\left(x_{\tau}\!\left|x^{\tau\!-\!1}\!,s_0^{\tau\!-\!u\!-\!1}\!,y^{\tau\!-\!u\!-\!1}\!\right.\right)\!
    {\rm Pr}\!\left(y_{\tau\!-\!u},s_{\tau\!-\!u}\!\left|x_{\tau\!-\!v}^{\tau},s_{\tau\!-\!v\!-\!1}^{\tau\!-\!u\!-\!1},y^{\tau\!-\!u\!-\!1}\!\right.\right) \nonumber\\
  &\!\!\!\!\!\!\!\!\stackrel{\textrm{(e)}}{=}&
    \!\!\!\!\!\!\!\!\!\!\!\! \sum_{x^{t\!-\!v\!-\!1}\!,s_0^{t\!-\!v\!-\!2}\!,s_{t\!-\!v}^{t\!-\!u}} \!\!\!\!\!\!\!\!\!\!\!\!\!
    {\rm Pr}^{(\mathcal{P}_1)}\!\!\left(x^{v}\!,s_0^{v\!-\!u}\!,y^{v\!-\!u}\right)\!
    {\rm Pr}\!\left(\!\left.y_{t\!-\!u\!+\!1}^t\!\right|\!x_{t\!-\!v}^t,s_{t\!-\!v\!-\!1}^{t\!-\!u},y^{t\!-\!u}\right)\!\!\!
     \prod_{\tau=v\!+\!1}^t \!\!\! {\rm Pr}^{\left(\mathcal{P}_1\right)}\!\!
     \left(x_\tau\!\left|x_{\tau\!-\!v}^{\tau\!-\!1}\!,s_{\tau\!-\!v\!-\!1}^{\tau\!-\!u\!-\!1}\!,y^{\tau\!-\!u\!-\!1}\!\right.\right)\!
     {\rm Pr}\!\left(y_{\tau\!-\!u},s_{\tau\!-\!u}\!\left|x_{\tau\!-\!v}^{\tau},s_{\tau\!-\!v\!-\!1}^{\tau\!-\!u\!-\!1},y^{\tau\!-\!u\!-\!1}\!\right.\right) \nonumber\\
  &\!\!\!\!\!\!\!\!\stackrel{\textrm{(f)}}{=}&
    \!\!\!\!\!\!\!\!\!\!\!\! \sum_{x^{t\!-\!v\!-\!1}\!,s_0^{t\!-\!v\!-\!2}\!,s_{t\!-\!v}^{t\!-\!u}} \!\!\!\!\!\!\!\!\!\!\!\!\!
    {\rm Pr}^{(\mathcal{P}_1)}\!\!\left(x^{v}\!,s_0^{v\!-\!u}\!,y^{v\!-\!u}\right)\!
    {\rm Pr}\!\left(\left.y_{t\!-\!u\!+\!1}^t\!\right|\!x_{t\!-\!v}^t,s_{t\!-\!v\!-\!1}^{t\!-\!u},y^{t\!-\!u}\right)\!\!\!
     \prod_{\tau=v\!+\!1}^t \!\frac{{\rm Pr}^{(\mathcal{P}_1)}\!\left(x_{\tau-v}^{\tau},s_{\tau-v-1}^{\tau-u},y^{\tau-u}\right)}
     {{\rm Pr}^{(\mathcal{P}_1)}\!\left(x_{\tau-v}^{\tau-1},s_{\tau-v-1}^{\tau-u-1},y^{\tau-u-1}\right)}\nonumber\\
  &\!\!\!\!\!\!\!\!\stackrel{\textrm{(g)}}{=}& \sum_{s_{t-v}^{t-u}}
     {\rm Pr}^{(\mathcal{P}_1)}\!\!\left(x_{t-v}^{t},s_{t-v-1}^{t-u},y^{t-u}\right)
     {\rm Pr}\!\left(\left.y_{t-u+1}^t\right|x_{t-v}^t,s_{t-v-1}^{t-u},y^{t-u}\right)\nonumber\\
  &\!\!\!\!\!\!\!\!=& {\rm Pr}^{(\mathcal{P}_1)}\!\left(x_{t-v}^t,s_{t-v-1},y^t\right)
\end{eqnarray}
\hrulefill
\end{figure*}
where equality (e) follows from the construction of the source $\mathcal{P}_2$, 
equality (f) results from the conditional probability
in~(\ref{eqnCondProb}), and equality (g) is obtained by summing
and canceling the numerators and the denominators in successive
fractions starting at $\tau = v+1$ and considering
${\rm Pr}^{\left(\mathcal{P}_1\!\right)}\!\left(x^{v},s_0^{v-u},y^{v-u}\right)$.

The equality in~(\ref{source2Joint}) implies that the source
$\mathcal{P}_2\!\in\!\mathcal{P}_v(u,u)\!\subseteq\!\mathcal{P}(u,u)$
induces the same information
$I\!\left(\!X_{t-v}^t,\!S_{t-v-1};\!Y_t\!\left|Y^{t-1}\right.\!\right)$
as the source $\mathcal{P}_1\!\in\!\mathcal{P}(u,u)$ does. Since
$\mathcal{P}_1$ is chosen from $\mathcal{P}(u,u)$ arbitrarily, the
supremum $\mathcal{I}^{*}_{FB,SI}(u,v)$ can be taken over the set
of conditional Markov sources $\mathcal{P}_v(u,u)$ instead of over
the set $\mathcal{P}(u,u)$.
\end{IEEEproof}

\section{Proof of Theorem~\ref{Themnewsource2}}\label{proofThemnewsource2}

\begin{IEEEproof}
For convenience, the conditional probabilities
${\rm Pr}\!\left(x_t\!\left|x_{t-v}^{t-1},s_{t-v-1}^{t-u-1},y^{t-u-1}\right.\!\right)$
and
${\rm Pr}\!\left(x_t\!\left|x_{t-v}^{t-1},s_{t-v-1}^{t-u-1},\underline{\alpha}_{t-1}\right.\!\right)$
are both referred to as {\em policies} at time $t$. To prove
Theorem~\ref{Themnewsource2}, we shall show that the vector of the
{\em a posteriori} probabilities $\underline{\alpha}_{t-1}$ can be used to
replace the delayed feedback $y^{t-u-1}$ for the purpose of
determining the optimal policies that achieve the supremum
$\mathcal{I}^*_{FB,SI}(u,v)$.
First, we show that Bellman's principle of optimality~\cite{Bertsekas05,Bertsekas07} holds. For any time instant $T$ in
the interval $[1,N]$, we decompose the information rate
as
\begin{eqnarray}\label{decompose}
 \lefteqn{\!\!\!\!\!\!\!\!\sum_{t=1}^N
 I\!\left(\!X_{t\!-\!v}^{t},S_{t\!-\!v\!-\!1};Y_t\!\left|Y^{t\!-\!1}\right.\!\right)} \nonumber\\
  &\!\!\!\!\!\!\!\!\!\!\!\!\!\!\!\!\!\!=& \!\!\!\!\!\!\!\! \sum_{t=1}^{T-1} I\!\left(\!X_{t\!-\!v}^{t},S_{t\!-\!v\!-\!1};Y_t\!\left|Y^{t\!-\!1}\right.\!\right)\nonumber\\
  &\!\!\!\!\!\!\!\!\!\!\!\!\!\!\!\! & \!\!\!\!\!\!\!\!\!\!\!\! +\!\!\!\!\sum_{y^{T\!-\!u\!-\!1}}\!\!\!
  {\rm Pr}\!\left(y^{T\!-\!u\!-\!1}\!\right)\!\!\left[\sum_{t=T}^N\!
  I\!\left(\!X_{t\!-\!v}^{t},\!S_{t\!-\!v\!-\!1};\!Y_t\!\!\left|y^{T\!-\!u\!-\!1}\!,Y_{T\!-\!u}^{t\!-\!1}\right.\!\right)\!\right]\!\!.
\end{eqnarray}
Similar to~(\ref{sourceG}) in the proof of Theorem~\ref{Themnewsource1}, we have
\begin{eqnarray}\label{firstpart}
  \lefteqn{\!\!\!\!\!\!\!\!{\rm Pr}\!\left(x^{T-1},s^{T-v-2},y^{T-1}\right)} \nonumber\\
  &\!\!\!\!\!\!\!=& \!\!\!\!\!\!\!\! \sum_{s_{T-v-1}^{T-u-1}} \!\! {\rm Pr}\!\left(x^{T-1},s^{T-u-1},y^{T-1}\right)\nonumber\\
  &\!\!\!\!\!\!\!=& \!\!\!\!\!\!\!\! \sum_{s_{T-v-1}^{T-u-1}} \!\!
    {\rm Pr}\!\left(y_{T-u}^{T-1}\!\left|x_{T-u}^{T-1},s_{T-u-1}\right.\!\right)\nonumber\\
  & & \!\!\!\!\!\times \!\prod_{\tau=1}^{T-1} \!
     {\rm Pr}\!\left(x_\tau\!\left|x_{\tau-v}^{\tau-1},s_{\tau-v-1}^{\tau-u-1},y^{\tau-u-1}\right.\!\right) \nonumber\\
  & & \;\;\times \, {\rm Pr}\!\left(y_{\tau-u}\!\left|x_{\tau-u},\,s_{\tau-u-1}\right.\!\right)
     {\rm Pr}\!\left(s_{\tau-u}\!\left|s_{\tau-u-1}\right.\!\right)
\end{eqnarray}
which is independent of policies after time $T$, i.e., independent
of the policies in the set
$\left\{\left.{\rm Pr}\!\left(x_t\left|x_{t-v}^{t-1},s_{t-v-1}^{t-u-1},y^{t-u-1}\right.\right)
\right|T\!\leq\!t\!\leq\! N\!\right\}$. Therefore, if optimal policies from time $1$ to
$N$ are given, then the corresponding policies after time $T$ must
be optimal in the sense that they maximize the last term of~(\ref{decompose}). Thus
we have proved Bellman's principle of optimality~\cite{Bertsekas05,Bertsekas07}.

Next, we show that if after time $T$ we utilize policies
\begin{equation*}
    \left\{\left.{\rm Pr}\!\left(x_t\!\left|x_{t-v}^{t-1},s_{t-v-1}^{t-u-1},
    \underline{\alpha}_{T-1},y_{T-u}^{t-u-1}\right.\!\right)\right|T\!\leq\!t\!\leq\!N\right\}
\end{equation*}
instead of the general policies
\begin{equation*}
\left\{\left.{\rm Pr}\!\left(x_t\!\left|x_{t-v}^{t-1},s_{t-v-1}^{t-u-1},y^{T-u-1},y_{T-u}^{t-u-1}\right.\!\right)\right|T\!\leq\!t\!\leq\!N\right\}
\end{equation*}
we can still maximize the last term in (\ref{decompose}). To show
this, suppose that two different sequences $y^{T-u-1}$ and
$\tilde{y}^{T-u-1}$ induce the same a posteriori probability
vectors $\underline{\alpha}_{T-1}$ and
$\tilde{\underline{\alpha}}_{T-1}$, that is, for all
$\left(x_{T-v}^{T-1},\,s_{T-v-1}^{t-u-1}\!\right)$, we have
\begin{equation*}\label{samealpha}
  \alpha_{T-1}\!\left(x_{T-v}^{T-1},s_{T-v-1}^{T-u-1}\!\right)=\tilde{\alpha}_{T-1}\!\left(x_{T-v}^{T-1},s_{T-v-1}^{T-u-1}\!\right).
\end{equation*}
For the different sequences $y^{T-u-1}$ and $\tilde{y}^{T-u-1}$,
if we use the same policies after time $T$, i.e., for all $t$ in
the interval $T\!\leq\! t\! \leq\! N$,
\begin{eqnarray*}\label{beyond}
  \lefteqn{\hspace{-1.0cm}{\rm Pr}\left(x_t\left|x_{t-v}^{t-1},s_{t-v-1}^{t-u-1},y^{T-u-1},y_{T-u}^{t-u-1}\right.\right)} \nonumber \\
  &=& \!\!{\rm Pr}\left(x_t\left|x_{t-v}^{t-1},s_{t-v-1}^{t-u-1},\tilde{y}^{T-u-1},y_{T-u}^{t-u-1}\right.\right)
\end{eqnarray*}
then we have
\begin{eqnarray}\label{samemeasure}
  \lefteqn{{\rm Pr}\left(x_{T-v}^{N},s_{T-v-1}^{N-v-1},y_{T-u}^N\left|y^{T-u-1}\right.\right)} \nonumber\\
  &\!\!=& \!\!\!\!\!\! \sum_{s_{N-v}^{N-u}}
    {\rm Pr}\left(x_{T-v}^{N},s_{T-v-1}^{N-u},y_{T-u}^N\left|y^{T-u-1}\right.\right)\nonumber\\
  &\!\!=& \!\!\!\! \sum_{s_{N-v}^{N-u}}
    {\rm Pr}\left(x_{T-v}^{T-1},s_{T-v-1}^{T-u-1}\left|y^{T-u-1}\right.\right) \nonumber\\
  &\!\! & \;\times\,{\rm Pr}\left(x_{T}^{N},s_{T-u}^{N-u},y_{T-u}^{N-u}\left|x_{T-v}^{T-1},s_{T-v-1}^{T-u-1},y^{T-u-1}\right.\right)\nonumber\\
  &\!\! &\;\times\, {\rm Pr}\left(y_{N-u+1}^{N}\left|x_{T-v}^{N},s_{T-v-1}^{N-u},y^{N-u}\right.\right)\nonumber\\
  &\!\!\stackrel{\rm (h)}{=}& \!\!\!\! \sum_{s_{N-v}^{N-u}}\!
    \alpha_{T-1}\!\left(x_{T-v}^{T-1},s_{T-v-1}^{T-u-1}\right)
    {\rm Pr}\left(y_{N-u+1}^{N}\left|x_{N-u+1}^{N},s_{N-u}\right.\!\right)\nonumber\\
  &\!\!&\;\times \prod_{\tau=T}^N
    {\rm Pr}\!\left(x_\tau\left|x_{\tau-v}^{\tau-1},s_{\tau-v-1}^{\tau-u-1},y^{T-u-1},
    y_{T-u}^{\tau-u-1}\right.\!\right) \nonumber\\
  &\!\!&\;\;\;\;\;\;\;\;\times\,
    {\rm Pr}\left(y_{\tau-u}|x_{\tau-u},s_{\tau-u-1}\right)
    {\rm Pr}\left(s_{\tau-u}|s_{\tau-u-1}\right)\nonumber\\
  &\!\!=& \!\!\!\! \sum_{s_{N-v}^{N-u}} \!
    \tilde{\alpha}_{T-1}\!\left(x_{T-v}^{T-1},s_{T-v-1}^{T-u-1}\right)
    {\rm Pr}\!\left(y_{N-u+1}^{N}\left|x_{N-u+1}^{N},s_{N-u}\right.\!\right)\nonumber\\
  &\!\!&\;\times \prod_{\tau=T}^N
    {\rm Pr}\left(x_\tau\left|x_{\tau-v}^{\tau-1},s_{\tau-v-1}^{\tau-u-1},\tilde{y}^{T-u-1},
    y_{T-u}^{\tau-u-1}\right.\right) \nonumber\\
  &\!\!&\;\;\;\;\;\;\;\;\times\,
    {\rm Pr}\left(y_{\tau-u}|x_{\tau-u},s_{\tau-u-1}\right)
    {\rm Pr}\left(s_{\tau-u}|s_{\tau-u-1}\right)\nonumber\\
  &\!\!\stackrel{\rm (i)}{=} & \!\!
    {\rm Pr}\left(x_{T-v}^{N},s_{T-v-1}^{N-v-1},y_{T-u}^N\left|\tilde{y}^{T-u-1}\right.\right)
\end{eqnarray}
where equalities (h) and (i) result from Proposition~\ref{Propnewsource} and the assumption $u\leq v$.
The equality in~(\ref{samemeasure}) implies
\begin{eqnarray}\label{beyondrate}
    \lefteqn{\!\!\!\!\!\!\!\!\sum_{t=T}^NI\left(X_{t-v}^{t},S_{t-v-1};Y_t\left|y^{T-u-1},Y_{T-u}^{t-1}\right.\right)} \nonumber \\
    &=& \!\!\!\!\sum\limits_{t=T}^N I\left(X_{t-v}^{t},S_{t-v-1};Y_t\left|\tilde{y}^{T-u-1},Y_{T-u}^{t-1}\right.\right).
\end{eqnarray}
Therefore, the optimal policies after time $T$ for $y^{T-u-1}$
must also be optimal for $\tilde{y}^{T-u-1}$, and vice versa.
Since $y^{T-u-1}$ and $\tilde{y}^{T-u-1}$ induce the same vector
$\underline{\alpha}_{T-1}\!=\! \tilde{\underline{\alpha}}_{T-1}$,
the vector $\underline{\alpha}_{T-1}$ can be used instead of $y^{T-u-1}$, and the
optimal policies after time $T$ can be replaced by
\begin{equation*}
 \left\{\left.{\rm Pr}\!\left(x_t\!\left|x_{t-v}^{t-1},s_{t-v-1}^{t-u-1},\underline{\alpha}_{T-1},y_{T-u}^{t-u-1}\right.\!\right)\!\right|T\!\leq\!t\!\leq\!N\right\}.
\end{equation*}

Since $T$ is chosen arbitrarily, the optimal source in the set
$\mathcal{P}'_v(u,u)\!=\!\left\{{\rm Pr}\!\left(x_t\!\left|x_{t-v}^{t-1},s_{t-v-1}^{t-u-1},\underline{\alpha}_{t-1}\right.\!\right)\right\}_{t=1}^{\infty}$
achieves the same supremum $\mathcal{I}^{*}_{FB,SI}(u,v)$ as the
optimal source in the set $\mathcal{P}_v(u,u)$ does.
\end{IEEEproof}

\section{Proof of Theorem~\ref{ThemDP}}\label{proofThemDP}

\begin{IEEEproof}
Let $\beta \in (0,1)$. We introduce the $\beta$-discounted version of {\bf Problem B},
for all $\underline{\alpha}_{0} \in \hat{\mathcal{A}}$,
\begin{equation}\label{eqnDC}
    \mathcal{I}_{\beta}(\underline{\alpha}_0)
    \!=\! \sup\liminf_{N \rightarrow \infty} \mathbf{E}\!
       \left[\sum\limits_{t=1}^{N} \beta^{t-1} g\!\left(\underline{\alpha}_{t-1}, {\rm p}(\underline{\alpha}_{t-1}), Y_{t-u}\right)\!\right]
\end{equation}
where only stationary policy sequences $\{{\rm p}_t\}_{t=1}^{\infty}$ with
${\rm p}_t = {\rm p} \stackrel{\Delta}{=} \{{\rm p}(\underline{\alpha}): \underline{\alpha} \in \hat{\mathcal{A}}\}$ are considered.
By Proposition~4.1.3 in~\cite{Bertsekas07}, there exists a {\em Blackwell optimal policy} ${\rm p}^* = \{{\rm p}^*(\underline{\alpha}): \underline{\alpha} \in \hat{\mathcal{A}}\}$ that is stationary and
simultaneously optimal for all $\beta$-discounted problems~(\ref{eqnDC}) where $\beta$ is sufficiently close to $1$. From Proposition~4.1.7 in~\cite{Bertsekas07}, we know that the Blackwell optimal policy ${\rm p}^*$ is optimal over all policies for {\bf Problem B}.~(These results can also be obtained according to Theorem~4.3 in~\cite{Arapostathis93}).
\end{IEEEproof}

\section*{Acknowledgment}

The authors would like to thank Dr. Shaohua Yang for his helpful advice at the beginning of this work, and Prof. Xianping Guo for providing helpful references on Markov decision processes. The authors are also grateful to reviewers for their helpful comments, who also pointed out some errors in the previous versions of the paper.

\ifCLASSOPTIONcaptionsoff
  \newpage
\fi



\bibliographystyle{IEEEtran}
\bibliography{IEEEabrv,mybibfile}

\begin{IEEEbiographynophoto}{Xiujie~Huang}(S'10)
received the M.Sc. degree in mathematics from Sun Yat-sen University, Guangzhou, China, in 2006.
She is a Ph.D. candidate in the Department of Electronics and Communication Engineering,
Sun Yat-sen University, Guangzhou, China.
She is currently visiting the Department of Electrical Engineering, University of Hawaii, Honolulu, USA.

Her research interests include information theory and its applications in digital communication and storage systems.
Her current researches focus on capacity~(region) computation and magnetic recording/flash memory channel medeling.

\end{IEEEbiographynophoto}

\begin{IEEEbiographynophoto}{Aleksandar~Kav\v{c}i\'{c}}(S'93-M'98-SM'04)
received the Dipl. Ing. degree in electrical engineering from
Ruhr-University, Bochum, Germany, in 1993, and the Ph.D. degree in electrical
and computer engineering from Carnegie Mellon University, Pittsburgh, PA, in 1998.

Since 2007, he has been with the University of Hawaii, Honolulu, where he
is presently a Professor of Electrical Engineering. Prior to 2007, he
was with the Division of Engineering and Applied Sciences, Harvard University,
Cambridge, MA. He served as a Visiting
Associate Professor with the City University of Hong Kong in Fall 2005 and as
a Visiting Scholar with the Chinese University of Hong Kong in Spring 2006.
Prof. Kav\v{c}i\'{c} received the IBM Partnership Award in 1999 and the NSF
CAREER Award in 2000. He is a corecipient, with X.~Ma and N.~Varnica, of
the 2005 IEEE Best Paper Award in Signal Processing and Coding for Data
Storage. He served on the Editorial Board of the IEEE TRANSACTIONS ON
INFORMATION THEORY as Associate Editor for Detection and Estimation from
2001 to 2004, as Guest Editor of the IEEE SIGNAL PROCESSING MAGAZINE
during 2003-004, and as Guest Editor of the IEEE JOURNAL ON SELECTED
AREAS IN COMMUNICATIONS from 2008 to 2009. From 2005 until 2007, he was
the Chair of the Data Storage Technical Committee of the IEEE Communications
Society.
\end{IEEEbiographynophoto}

\begin{IEEEbiographynophoto}{Xiao~Ma}(M'08)
received the Ph.D. degree in communication and information systems
from Xidian University, China, in 2000.
From 2000 to 2002, he was a Postdoctoral Fellow with Harvard University,
Cambridge, MA. From 2002 to 2004, he was a Research Fellow with City University
of Hong Kong. He is now a Professor with the Department of Electronics
and Communication Engineering, Sun Yat-sen University, Guangzhou, China.

His research interests include information theory, channel coding theory and
their applications to communication systems and digital recording systems.

Dr. Ma is a corecipient, with A.~Kav\v{c}i\'{c} and N.~Varnica, of the 2005 IEEE
Best Paper Award in Signal Processing and Coding for Data Storage.
In 2006, Dr. Ma received the Microsoft Professorship Award from Microsoft Research Asia.
\end{IEEEbiographynophoto}




\end{document}